\definecolor{bblue}{RGB}{0, 186, 255}
\definecolor{purple}{RGB}{128, 0, 128}
\definecolor{darkgreen}{RGB}{0, 155, 0}
\definecolor{green}{RGB}{0, 120, 0}
\newcommand{\raisedtarget}[1]{%
  \raisebox{\fontcharht\font`P}[0pt][0pt]{\hypertarget{#1}{}}%
}
\newcommand{\C}[1]{ { \color{red} \footnotesize  \textsf{\textsl{#1}} } }
\newcommand{\tr}[1]{\textrm{tr}_{\rm{#1}}}
\newcommand{\posL}{\mathcal{G}(\mathbb{\Lambda})}
\newcommand{\posG}[1]{\mathcal{G}(#1)}
\newcommand{\spos}{\mathcal{G}}
\newcommand{\pqbsz}[1]{\tilde{\sigma}^{(#1)}_{z}}
\newcommand{\non}[1]{\overline{#1}}
\newcommand{\abs}[1]{\left\lvert #1\right\rvert}
\renewcommand{\set}[1]{\left\lbrace #1\right\rbrace}
\renewcommand{\setminus}{\!\mathbin{\backslash}\!}
\newcommand{\CS}{\mathcal{C}} 
\newcommand{\vw}{f}
\newcommand{\wE}{w}
\newcommand{\spec}[1]{\mathrm{Spec}\left(#1\right)}
\newcommand{\MES}{\mathcal{M}}
\newcommand{\mpme}{\left<s_{\rm{m}}\right>}
\renewcommand{\braket}[2]{\langle #1\mid #2\rangle}
\renewcommand{\C}{\mathbb{C}}
\newcommand{\CBS}[1]{\mathcal{H}_{#1}}
\newcommand{\SSP}{\mathcal{R}}
\newcommand{\MS}[1]{\mathrm{Mix}(\QS{#1})}
\newcommand{\QS}[1]{Q_{#1}}
\newcommand{\Spec}{\mathrm{Spec}}
\newcommand{\fat}{t_f}
\newcommand{\transOs}{\mathbb{T}}
\theoremstyle{definition}
\newtheorem{definition}{Definition}
\theoremstyle{remark}
\theoremstyle{lemma}
\newtheorem{lemma}{Lemma}
\theoremstyle{proposition}
\newtheorem{proposition}{Proposition}
\theoremstyle{theorem}
\newtheorem{theorem}{Theorem}
\theoremstyle{corollary}
\theoremstyle{remark}
\newtheorem{example}{Example}
\begin{document}
\title{Bundling of bipartite entanglement}
\author{Maike Drieb-Sch\"on\orcidlink{0009-0000-6849-3145}}
\email{maike.schoen@uibk.ac.at}
\affiliation{Institute for Theoretical Physics, University of Innsbruck, A-6020 Innsbruck, Austria}
\author{Florian Dreier\orcidlink{0000-0002-5674-9147}}
\email{f.dreier@parityqc.com}
\affiliation{Institute for Theoretical Physics, University of Innsbruck, A-6020 Innsbruck, Austria}
\affiliation{Parity Quantum Computing GmbH, A-6020 Innsbruck, Austria}
\author{Wolfgang Lechner\orcidlink{0000-0003-3662-1020}}
\affiliation{Institute for Theoretical Physics, University of Innsbruck, A-6020 Innsbruck, Austria}
\affiliation{Parity Quantum Computing GmbH, A-6020 Innsbruck, Austria}

\begin{abstract}
We investigate bipartite entanglement and prove that in constrained energy subspaces, the entanglement spectra of multiple bipartitions are the same across the whole subspace. We show that in quantum many-body systems the bipartite entanglement entropy is affected in such a way that it forms "bundles" under unitary time evolution. Leveraging the structure of the subspace, we present methods to verify whether the entanglement spectrum of two bipartitions is identical throughout the entire subspace. For the subspace defined by the parity embedding, we further provide an algorithm that can determine this in polynomial time.
\end{abstract}
\maketitle

\setcounter{secnumdepth}{2}

\section{Introduction}
\label{sec:introduction}
In quantum many-body systems, entanglement plays a central role in understanding the complex interplay between individual particles and the collective behavior of the system as a whole. Unlike classical systems, where correlations are typically local and additive, quantum entanglement can exhibit nonlocal and highly nontrivial structures, making it a powerful tool for probing the underlying physics of many-body systems. For example, it provides insights into a wide range of physical phenomena, from quantum phase transitions and criticality to thermalization and the emergence of classical behavior in large systems~\cite{AnnaSanperaDeChiara2018, ES_indtroLiHaldane, PhysRevX.8.021026, PhysRevB.93.174202,Geraedts_2017}. Beyond its fundamental role in condensed matter physics, entanglement is a key resource driving advancements in quantum technologies. Extending its application in quantum simulators and error-correcting codes, understanding and controlling entanglement is crucial for designing quantum algorithms capable of tackling complex optimization problems on quantum hardware. Such algorithms include quantum optimization approaches, such as adiabatic quantum computation in the form of quantum annealing (QA)~\cite{HaukeLechnerKatzgraberNishimori_2020, QuantumAnnealingOverview}, digital quantum computation like quantum approximate optimization algorithm (QAOA)~\cite{farhi2000quantum} and variational quantum algorithms (VQAs)~\cite{vqa}. A common characteristic of these methods is their reliance on initializing the computation in a product state, evolving it towards a solution state that encodes a classical problem and potentially does not contain any entanglement. Hence, a crucial open question is how much entanglement in a quantum computing device is required to reliably achieve a successful solution. Previous works examining the behavior of entanglement during annealing processes, as well as the theoretical connection between bipartite entanglement and the success probability in adiabatic quantum optimization, can be found in~\cite{PhysRevA.69.052308,LantingEntanglementInQA,haukeLechnerEntenglement}.

However, although the study of entanglement provides valuable insights into quantum optimization, several
practical limitations
must be addressed. In general, quantum hardware graphs are limited in their connectivity and offer no long range connections while the most optimization problems are presented by logical graphs with a high connectivity and long range connections. In such cases, they have to be embedded into the quantum hardware by using much more qubits and adding constraints to preserve the degrees of freedom.
In addition, many optimization problems have constraints which are typically encoded through large penalties~\cite{arute2019quantum,bernien2017probing,KochPRAChargeInsensitive2007,saffman2010quantum,henriet2020quantum,bloch2008many}.
On the other hand, recent works~\cite{a12040077, ItayHenPRApplied2016, ItayHenPRA2016, constraintpaper} have shown that some optimization problems have constraints that can be implemented in more efficient quantum annealing methods compared to penalties, where constraints can be encoded directly into specially designed driver Hamiltonians.
All these kinds of constraints lead to a restriction of the system state to a subspace ${\QS{\SSP}\subset \mathcal{H}}$ which is separated from the rest of the Hilbert space ${\mathcal{H}}$ and therefore impacts the entanglement behavior of the quantum state during the evolution process. Here, $\SSP$ denotes a set of basis states in $\mathcal{H}$ that spans the subspace $\QS{\SSP}$. For instance, analog quantum devices typically start in a pure state ${\ket{\psi}\in \mathcal{H}}$ being an equal superposition of all physical basis states and aims to drive the many-body system of qubits through a quantum spin-glass transition~\cite{RevModPhys.58.801, cambridgeQuantumSPinGlassAnnealingComputatio} to end up in a classical state or, in the case of a degenerated ground state, in a superposition of quantum states in $\SSP$.
As shown in~\cite{PhysRevLett.109.050501}, when constraints are incorporated into the optimization problem as energy penalties, initializing the quantum computation across the full Hilbert space ${\mathcal{H}}$ rather than a constrained subspace $\QS{\SSP}$ is crucial to potentially achieve a quantum speed-up in any subsequent quantum-computational process. However, the authors in~\cite{constraintpaper} demonstrate that remaining within the constrained subspace $\QS{\SSP}$ throughout the whole computation is also possible with specifically designed driver Hamiltonians without loosing performance.

In this work, we investigate the impact of these restrictions via constraints on the entanglement behavior within a quantum system. Specifically, we study bipartite entanglement~\cite{PhysRevA.69.052308, JournalofIndianInstofSciencRoleentanglemenQC, haukeLechnerEntenglement} for all bipartitions when the system state is a superposition of quantum states in $\SSP$. We observe that for quantum-computational processes, which drive the system state to a final entangled but to the subspace $\QS{\SSP}$ restricted pure state, the entanglement values of different bipartitions seem to bundle to exactly the same entanglement values at the end of the process. In the setting of~\cite{constraintpaper}, we even show that the entanglement values of different bipartitions behave equally during the whole computation process. Based on this observation, we further provide a mathematical proof of this phenomenon and show that for multiple bipartitions the spectra of the reduced density matrix, which are also called \emph{entanglement spectra}~\cite{ES_indtroLiHaldane}, are exactly the same, independent of the quantum state restricted by the subspace $\QS{\SSP}$. Since for pure states, and even for mixed states of spin-{1/2} particles, all measures of bipartite entanglement are functions of the eigenvalues of the reduced density matrix~\cite{RevModPhys.81.865, PhysRevA.65.032314}, this even shows that our result is applicable to any bipartite entanglement measure and can be utilized for the analysis of other properties such as topological order; see, for example~\cite{ES_indtroLiHaldane, PhysRevA.78.032329, Pollmann_2010, ES_PichlerZoller}. Moreover, we provide a mathematical tool that allows to determine whether any two bipartitions yield the same entanglement spectrum across the whole space spanned by $\SSP$. Finally, we demonstrate, using two types of embeddings - the parity embedding and the minor embedding - how these mathematical tools can be applied in a more efficient way. For the parity embedding, we even provide an algorithm that performs this verification in polynomial time.

The remainder of this paper is organized as follows.
In Sec.~\ref{sec:background} we briefly summarize the background knowledge used throughout this paper.
Section~\ref{sec:entanglementbackground} provides an overview of how bipartite entanglement arises within quantum optimization. From Sec.~\ref{sec:QAbackground} onward, we introduce quantum annealing and the parity embedding, which we use both for our numerical example, as well as the minor embedding, which provides another application example alongside the parity embedding. Our numerical simulations are discussed in Sec.~\ref{sec:numVis}. Section~\ref{sec:theoFrameWork} then follows with a more detailed description of our theoretical results. In Sec.~\ref{subsec:appl_embed-based_quantum_opt} application examples in quantum optimization are provided. Finally, Sec.~\ref{sec:methods} contains all the details necessary to reproduce and verify our results.

\section{Background}
\label{sec:background}
First, we provide a more detailed description of the entanglement spectrum and the entanglement entropy. Next, we briefly explain the concept of quantum annealing as an example of adiabatic quantum computing, along with two embeddings that can be used to implement quantum optimization problems on real hardware. Both adiabatic quantum computing and the embeddings are used solely to illustrate one of the possible applications of our results.

\subsection{Entanglement spectrum and entropy}
\label{sec:entanglementbackground}
The \emph{entanglement spectrum} for a bipartition $\set{A, A^c}$ of a set of qubits for a pure quantum state $\rho$ is defined as the spectrum of the reduced density matrix ${\rho_{A}=\mathrm{tr}_A{\rho}}$, i.e., ${\spec{\rho_{A}}=\set{\lambda \mid \lambda \text{ is an eigenvalue of } \rho_A}}$. It was first introduced in~\cite{ES_indtroLiHaldane} as a tool to identify topological order. In subsequent years, it was recognized that the entanglement spectrum plays a crucial role in the context of eigenstate thermalization~\cite{PhysRevX.8.021026}, many-body localization~\cite{PhysRevB.93.174202,Geraedts_2017}, quantum criticality~\cite{PhysRevA.78.032329} and tensor networks~\cite{PhysRevB.83.245134, PhysRevLett.111.090501}.
In contrast to the entanglement spectrum, which details the full distribution of entanglement values, the entanglement entropy reduces this information by mapping the entanglement spectrum to a single value. Well-known examples for entanglement entropies of pure quantum states include the von Neumann entropy, the R\'{e}nyi entropy, the linear entropy, the negativity, the Schmidt rank or the geometric measure~\cite{AnnaSanperaDeChiara2018}. In our numerical experiments, we make use of the von Neumann entanglement entropy ${S_A(\rho)=-\tr{}{(\rho_A\ln \rho_A)}=-\sum_{\lambda\in\spec{\rho_A}}\lambda\ln\lambda}$, which serves as the standard measure for quantifying entanglement between two parts of the system.

\subsection{Adiabatic quantum computing}
\label{sec:QAbackground}
In general, an optimization problem can be described by an all-to-all connected Ising spin Hamiltonian
\begin{align}\nonumber
    H_p &= \sum_{i} J_i \sigma_{z}^{(i)} + \sum_{i<j} J_{ij} \sigma_{z}^{(i)}\sigma_{z}^{(j)} \\
    &+ \sum_{i<j<k} J_{ijk} \sigma_{z}^{(i)} \sigma_{z}^{(j)}\sigma_{z}^{(k)} + ...~.
\label{eq:generalProblemH}
\end{align}
where its solution is given by the ground state of this Hamiltonian. One method of finding the ground state of a problem Hamiltonian $H_{p}$ is to evolve the system from an initial state, being a trivial ground state, adiabatically to the ground state of the problem Hamiltonian. Here we consider a quantum system which evolves from an initial time ${t=0}$ to a final time ${t=\fat}$ and its time evolution can be described by
\begin{equation}
\label{eq:ht}
    H(t) = g(t) H_0 + s(t) H_p,
\end{equation}
where ${g(t),s(t)\geq 0}$ with ${g(0) > 0}$, ${g(\fat)=0}$ and ${s(\fat)>0}$.
We call $\set{g(t), s(t)}$ the \emph{schedule} of the time evolution process.
Such a process is called \emph{adiabatic}, if the speed of the evolution is slow enough, such that for all times $0\leq t \leq \fat$ the quantum system stays in the ground state of $H(t)$. Otherwise, it is called a \emph{nonadiabatic} process.
For the numerical simulations in this paper we use a linear schedule with ${g(t) = 1 - s(t)}$ and ${s(t)=\frac{t}{t_f}}$ for which a higher process time $\fat$ means a lower speed of the evolution.

\subsection{Parity embedding \label{sec:parityembedding}}
The parity embedding~\cite{compilerpaper} is a generalization of the LHZ mapping~\cite{lechner2015quantum}, and it maps an arbitrarily connected logical problem Hamiltonian as given in Eq.~\eqref{eq:generalProblemH} to a Hamiltonian containing only local three- and four-body interactions by mapping each product in the logical Hamiltonian onto a single parity qubit. The problem Hamiltonian for the parity embedding is given by
\begin{equation}
    \label{eq:parityHamiltonian}
    H_{p} = \sum^{N_p-1}_{m = 0} \tilde{J}_{m} \pqbsz{m} - H_{C},
\end{equation}
where we denote the Pauli-$Z$ operators of the parity qubits as $\pqbsz{m}$ and
\begin{equation}
    \label{eq:parityConstraintH}
    H_{C} = \sum_{p=1}^{N_C} C_{p} \pqbsz{p_1} \pqbsz{p_2} \pqbsz{p_3} \left(\pqbsz{p_4}\right) 
\end{equation}
is the parity constraint Hamiltonian, where ${N_C = N_p - N_l + D}$ is the number of necessary parity constraints, which is calculated by the number of parity qubits $N_p$, the number of logical spins $N_l$ and the number of spin flip degeneracies $D$ of the logical problem Hamiltonian. We note that these $N_C$ parity constraints are necessary to end up in a mapping with the same degrees of freedom as the original problem. Each of the $N_C$ parity constraints consists of a product of three or four Pauli-$Z$ operators of the parity qubits and a well chosen penalty strength $C_{p}$. This set of parity constraints restricts the system state to being a superposition of states that are contained in a subspace $\SSP$, which we denote in this work by $\Pi$. An example for the parity embedding of an all-to-all connected logical problem, which we use throughout this paper, is given in Appendix~\ref{app:exampleParityEmbedding}.

\subsection{Minor embedding\label{sec:minorembedding}}
The idea of the minor embedding~\cite{choi_minor-embedding_2008} is to map the logical graph onto one of its minor graphs in such a way that it matches the hardware connectivity graph. For this purpose, a single node $v$ in the logical graph is mapped onto $N_v$ physically connected nodes, which are referred to as a \emph{chain} in the minor embedding. Similarly to the parity embedding, where penalty constraints are implemented by three- and four-body couplers, penalties are realized as couplers between each two directly connected physical qubits in a chain to force them to have the same alignment. These are the constraints in the minor embedding restricting again the system state to be an element of a corresponding subspace $\QS{\SSP}$. The specific details of the Hamiltonian for the minor embedding are not essential for the purposes of this work.

\section{Results}
\label{sec:results}
In this section, we present our main results of this work, where we first introduce in Sec.~\ref{sec:numVis} numerical experiments demonstrating that different bipartitions yield the same entanglement entropy value for all quantum states which are contained in a lower energy subspace. Subsequently, in Sec.~\ref{sec:theoFrameWork} we develop a rigorous mathematical framework and derive theoretical results that provide a formal justification for the bundling phenomenon observed in our numerical simulations. Our main theoretical result shows that the entanglement spectra of subsystems belonging to the same equivalence class, as defined by a certain equivalence relation, are identical for all quantum states in a given subspace. Therefore, in this article we suppose that the quantum states correspond to elements of the subspace
\begin{multline}
\label{def:subsystemstateset}
    \QS{\SSP}\coloneqq\Bigg\{\left.\sum_{\ket{\psi}\in\SSP}c_{\ket{\psi}}\ket{\psi} \,\right\vert \forall\ket{\psi}\in\SSP\colon c_{\ket{\psi}}\in \C\\\text{and} \sum_{\ket{\psi}\in\SSP}\abs{c_{\ket{\psi}}}^2=1\Bigg\}
\end{multline}
generated by some ${\SSP\subset \CBS{n}}$, where
$$
\CBS{n}\coloneqq\set{\ket{i_1\ldots i_n}\mid i_1,\ldots,i_n\in\set{0,1}}
$$
denotes the computational basis and ${n\in\mathbb{N}}$ the number of qubits of the quantum system.\\
In the following, for brevity, we write ${[n]\coloneqq\set{1,\ldots,n}}$, denoting the set of all integers between one and $n$.

%
%
\subsection{Numerical results\label{sec:numVis}}
\begin{figure*}[t!]
\centering
\includegraphics[width=1.9\columnwidth]{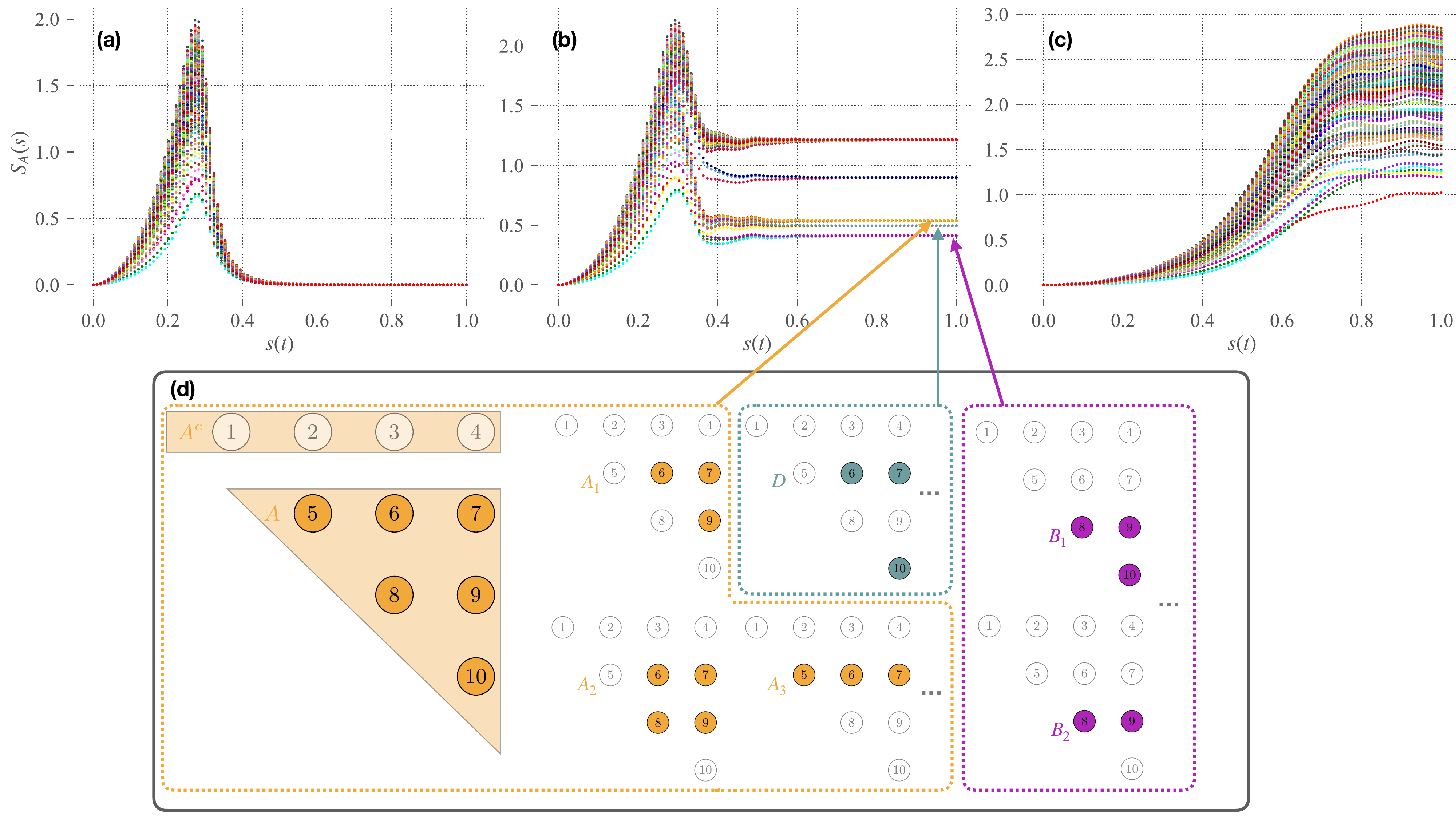}
\caption{\emph{Dynamics of von Neumann entanglement entropy $S_{A}(s)$} are presented in the top row for three different annealing processes with linear schedule and final annealing times \textbf{(a)} ${\fat=800}$, \textbf{(b)} ${\fat=100}$, and \textbf{(c)} ${\fat=11}$ of a system with 10 physical qubits laid out on a square lattice realizing a parity embedding of a complete graph with five vertices. We used the parity constraint strengths in \textbf{(a)} and \textbf{(b)} ${C=4}$ and in \textbf{(c)} ${C=1}$. Each process presents the entanglement dynamics of $266$ different unordered bipartitions $\set{A, A^c}$ of sizes ${\abs{A} \in \set{2,3, 4, 5}}$. Panel~\textbf{(d)} illustrates a few of the bipartitions of the physical qubits in a square lattice, which belong to one of the five visible bundles in Fig.~\ref{fig:GroupEntanglementDynamics}~\textbf{(b)}.
The 10 physical qubits are presented with circles labeled with the physical indices from $1$ to $10$.
Each triangle represents one possible bipartition whose qubits are filled with the color of its associated bundle's entanglement entropy.
Additionally, the bipartitions corresponding to the same final entanglement value are framed by dashed lines in the same color.
On the far right, two examples of the lowest bundle (violet) can be seen, in the middle one example of the second lowest (green-blue), and on the left four examples of the bundle in the middle (orange). All the smaller drawn bipartitions are subsets of the larger bipartition $\set{A, A^c}$ drawn on the far left.
}
\label{fig:GroupEntanglementDynamics}
\end{figure*}
Here we demonstrate numerical experiments showing how bipartite entanglement bundles to a common entanglement value.
We simulate an annealing process for a system of 10 qubits placed on a square grid by utilizing the time-dependent Hamiltonian given in Eq.~\eqref{eq:ht}.
The employed problem Hamiltonian $H_p$ is given by Eq.~\eqref{eq:parityHamiltonian} with equal energy constraints strengths ${C_{p} = C}$. The appropriate strength of these penalties allows us to control whether the state of the system remains in the subspace $\QS{\Pi}$ or not.
The problem Hamiltonian is constructed by applying the parity embedding of a complete graph with five vertices, which presents an optimization problem with all-to-all connectivity.
In this example, we use for the parity embedding the following mapping between the physical vertices (qubits) and the logical edge indices:
${1\rightarrow \set{2,3}}$, ${2\rightarrow \set{1,3}}$,
${3\rightarrow \set{0,3}}$,
${4\rightarrow \set{3,4}}$,
${5\rightarrow \set{1,2}}$,
${6\rightarrow \set{0,2}}$,
${7\rightarrow \set{2,4}}$,
${8\rightarrow \set{0,1}}$,
${9\rightarrow \set{1,4}}$,
${10\rightarrow \set{0,4}}$. For more details of this embedded problem, we refer to Appendix~\ref{app:exampleParityEmbedding}.
Figure~\ref{fig:GroupEntanglementDynamics} shows the numerical results for the dynamics of the entanglement entropy for three quantum annealing processes.
The random problem instance $H_{p}$ is encoded in the local fields $\tilde{J}_{m}$ defined by the field vector
\begin{align*}  
    \tilde{J} = (&0.58, ~-0.5, ~-0.3, ~-0.2, ~0.41,\\ -&0.53,
    ~0.48, ~-0.31, ~-0.19, ~0.39)^T.
\end{align*}
In Figs.~\ref{fig:GroupEntanglementDynamics}~{(a)} and~\ref{fig:GroupEntanglementDynamics}~{(b)} we used a sufficiently high penalty strength $C$ to restrict the system to the subspace $\QS{\Pi}$, while in Fig.~\ref{fig:GroupEntanglementDynamics}~(c) the chosen penalty strength is too low to enforce the system to stay in an energetic subspace allowing the state of the system to become an arbitrary superposition of the computational basis states in $\CBS{10}$.
Figure~\ref{fig:GroupEntanglementDynamics}~{(a)} is expected for an adiabatic annealing process. The system evolves slowly enough such that entanglement entropy reaches a maximal value before decaying to zero. In contrast, Fig.~\ref{fig:GroupEntanglementDynamics}~{(c)} depicts a highly nonadiabatic process, with an entanglement continuously growing throughout the whole annealing process. Figure~\ref{fig:GroupEntanglementDynamics}~{(b)} illustrates a nonadiabatic process for an intermediate final annealing time, where entanglement decays after its maximum to a non-zero final value. An important observation here is that the entanglement entropy of $266$ different unordered bipartitions ${\set{A, A^c}}$, which are varying in size and contain different elements, \emph{bundle} to only five different entropy values.
Because the penalty strength of the parity constraints is sufficiently high, the subspace $\QS{\Pi}$ generated by the set of all parity states ${\Pi\subset \CBS{10}}$ is energetically separated from ${\CBS{10}\setminus \Pi}$. As long as the speed of the annealing process is lower than a certain limit, which is determined by the energy gap between the lower energy subspace and the rest of the Hilbert space, the entanglement of the different bipartitions bundle to a common final entanglement value.
In Fig.~\ref{fig:GroupEntanglementDynamics}~{(d)} we illustrate some bipartitions, which belong to one of the entanglement entropy bundles. Here, we consider the reference bipartition ${\set{A, A^c}}$ with ${A=\set{5, 6, 7, 8, 9, 10}}$ of the 10 physical qubits [lower orange sub triangle on the left-hand side in Fig.~\ref{fig:GroupEntanglementDynamics}~{(d)}] and take several subsets from $A$. This reference bipartition corresponds to the entanglement value of the orange bundle in Fig.~\ref{fig:GroupEntanglementDynamics}~{(b)}. The other three orange colored subsets ${A_1, A_2}$ and $A_3$ belong to an entanglement value in the same bundle, while the two violet colored subsystems $B_1$ and $B_2$ correspond to the lowest and the green-blue one $D$ to the second lowest entropy bundle.
\begin{figure}[t!]
\centering
\includegraphics[width=1.\columnwidth]{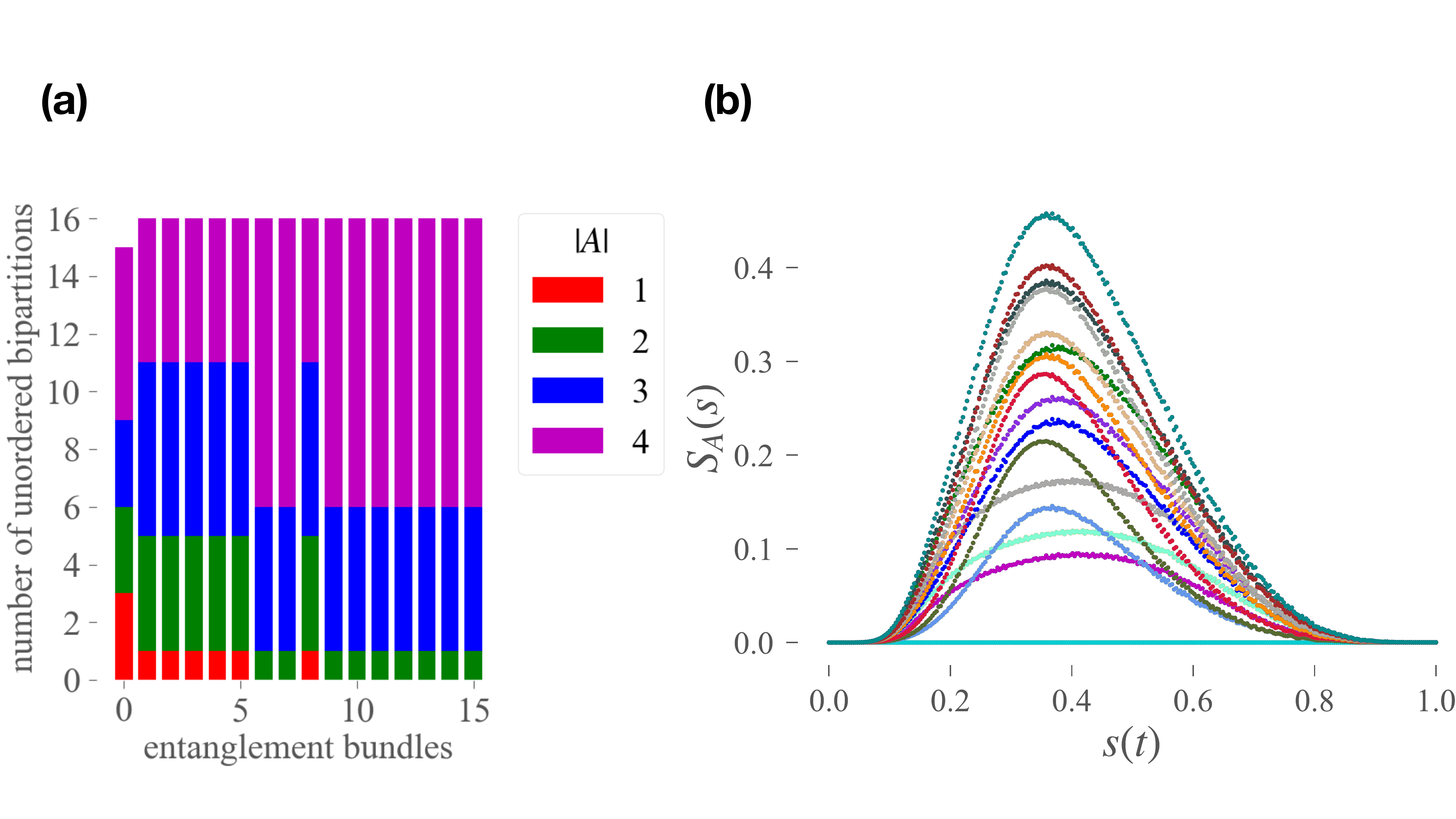}
\caption{\emph{Bundles and dynamics of the von Neumann entanglement entropy $S_{A}(s)$} for the example with nine physical qubits given in~\cite{constraintpaper}. As sum constraint we chose ${\langle\tilde{\sigma}_{z}^{(3,4)} + \tilde{\sigma}_{z}^{(1,4)} + \tilde{\sigma}_{z}^{(1,6)} \rangle = 3}$ and the initial state is set to ${\ket{\Psi(0)} = \ket{0 1 1 1 1 1 0 1 0}}$. Panel~{(a)} illustrates the entanglement bundles containing all $255$ bipartitions $\set{A,A^c}$ of size ${\abs{A}\in\set{1,2,3,4}}$. Each bar corresponds to one entanglement bundle enumerated from $0$ to $15$ where the bundles are sorted by their maximal entanglement entropy value in ascending order. The $y$ axis shows the number of bipartitions of each bundle and the colors correspond to the size of $A$.
Panel~(b) shows the dynamics of the von Neumann entanglement entropy of all $255$ unordered bipartitions. The final annealing time of the adiabatic annealing process is ${t_f=500}$. Further details of the annealing process and the driver Hamiltonian are available in~\cite{constraintpaper}. 
}
\label{fig:GroupEntanglementDynamicsConstrainedPaper}
\end{figure}

 In Sec.~\ref{sec:embeddingExamples} we present an example justifying the existence of bundles that include the orange bundle shown in Fig.~\ref{fig:GroupEntanglementDynamics}~{(b)}, based on our theoretical results. Section~\ref{section:applicationExample} verifies the existence of the remaining four bundles in this figure.
Additionally, we estimate for each of the classes how many bipartitions contribute to each of the entanglement bundles.
In Appendix~\ref{app:bundlesEntanglement} we provide an evaluation of the bundles for all existing bipartitions. It is important to note that the number of bundles is invariant to the specific values of the local fields and interaction strengths of the problem instance. However, as the bundle cardinality depends on the embedding and problem graph, any modification to either component requires a recalculation.
For the sake of completeness, we present in Appendix~\ref{app:entanglementSpectrum} the instantaneous entanglement spectrum for all bipartitions in the bundle with the highest entropy value (red) in Fig.~\ref{fig:GroupEntanglementDynamics}~{(b)}.

Following the approach in~\cite{constraintpaper}, 
the second numerical experiment in Fig.~\ref{fig:GroupEntanglementDynamicsConstrainedPaper} presents the results of a quantum computational process where the system is restricted to a subspace throughout the whole evolution process. In this example, we observe in total $16$ bundles~[Fig.~\ref{fig:GroupEntanglementDynamicsConstrainedPaper}~{(a)}], one bundle with $15$ and $15$ bundles with $16$ different bipartitions. Remarkably, the entanglement dynamics are identical for all bipartitions belonging to the same bundle. To illustrate this, we have visualized the dynamics of all bipartitions in Fig.~\ref{fig:GroupEntanglementDynamicsConstrainedPaper}~{(b)}. The same result applies to a QAOA process as defined in Ref.~\cite{constraintpaper}, which also restricts the system states to the subspace throughout the entire evolution.

In general, these entanglement bundles are expected to occur in any quantum algorithm, and therefore also for any annealing schedule, in which the system is constrained to such a subspace $\QS{\SSP}$ over an extended period of time.
The bundling emerges once the system becomes confined to a particular subspace $\QS{\SSP}$, which depends strongly on the specific algorithm and the corresponding time protocol.
For annealing instances such as the second example shown in Fig.~\ref{fig:GroupEntanglementDynamicsConstrainedPaper}, where the system is initialized within the constrained subspace and remains there throughout the evolution, the bundles persist over the entire annealing process from the very beginning.
In contrast, for standard annealing processes (see Fig.~\ref{fig:GroupEntanglementDynamics}), bipartite and multipartite entanglement typically become significant near the intermediate annealing regime where the driver Hamiltonian and problem Hamiltonian have comparable strength~\cite{LantingEntanglementInQA}.
Since these constraints dominate only after this intermediate annealing regime, the onset of bundling occurs after the entanglement maximum, such that a good reference position for the onset of bundling is given by the mean position ${\mpme = \left<s^A_{\rm{m}}\right>_{A}}$ (averaged over all bipartitions $A$) of the maximal entanglement entropy ${S_{A}(s^A_{\rm{m}}) = \max_{s}S_{A}(s)}$.
 This means that the point $\mpme$ and therefore the onset of bundling depends on the problem Hamiltonian, the strength of the embedding penalties and the annealing schedule.
In Appendix~\ref{app:bundlingOnsetDependency} we examine in more detail how the bundling onset changes when the penalty strength $C$ for the parity embedding is varied and the annealing schedule function is slightly modified.
Across $500$ random instances, the values of $\mpme$
 exhibit substantial scatter for both different penalty strengths and modified annealing schedules. Remarkably, however, the corresponding shifts in $\mpme$ take only certain values. Larger penalty strengths $C$ move $\mpme$ to smaller values, whereas delaying the point at which the driver and problem Hamiltonian become competitive shifts $\mpme$ to larger values.
Additionally, in Appendix~\ref{app:thermalNoise} we demonstrate that thermal noise induces a dispersive broadening of the entanglement bundles. Depending on the noise strength, the bundles that form after the point $\mpme$ progressively separate into distinct entanglement values in a fan-like manner as the annealing process evolves.

Our simulations of the annealing processes have been implemented with the Python package~\texttt{qutip}. For the analysis of observable bundles in Fig.~\ref{fig:GroupEntanglementDynamics}~(b) and Fig.~\ref{fig:GroupEntanglementDynamicsConstrainedPaper} we used~\texttt{dbscan}.

\subsection{Theoretical framework\label{sec:theoFrameWork}}
As already mentioned in the introduction, our main approach to capturing the observed bundling of entanglement mathematically relies on the examination of the spectra of the reduced density matrices given by quantum states in the subspace $\QS{\SSP}$. Consequently, if the spectra of the reduced density matrices of different subsystems are equal, all spectral-based entanglement measures will yield the same entanglement value for all of these different subsystems. As we observe in Fig.~\ref{fig:GroupEntanglementDynamics}, the von Neumann entropy of different subsystems are equal from that time point on where the evolved state lies in the subspace $\QS{\SSP}$ and stays in the subspace $\QS{\SSP}$. Therefore, we seek for a condition that implies identical spectra for different subsystems independent of the quantum state in $\QS{\SSP}$. As all states in $\QS{\SSP}$ are a superposition of elements in $\SSP$ and, by definition of the partial trace, the reduced density matrix of a state in $\QS{\SSP}$ with respect to a subsystem $A$ removes the subsystem $A^c$, it seems likely that such a condition is solely based on the remaining basis states in $\SSP$ restricted on $A$. Note that for a computational basis state 
${\ket{\psi} = |a_1 \ldots a_n\rangle \in \CBS{n}}$
and a nonempty subset ${A}$ of indices ${1\leq i_1 < \cdots <i_m\leq n}$, we denote by
$$
\ket{\psi_A}\coloneqq\ket{\psi}_A\coloneqq\ket{a_{i_1}\ldots a_{i_m}}
$$
the restriction of $\ket{\psi}$ to the subsystem $A$. The following definition addresses this point and will be utilized to establish a condition, which will be introduced subsequently, implying equal spectra for all quantum states in $\QS{\SSP}$ for different subsystems. 
\begin{definition}
\label{def:equvalenzPartition}
    Let ${A\subset[n]}$ be a nonempty subsystem. Then we set
    \begin{equation}
        \label{eq:equi_rel_subset_qubits}
\ket{\psi}\sim_A\ket{\phi}:\Longleftrightarrow\ket{\psi_A}=\ket{\phi_A},\quad \ket{\psi},\ket{\phi}\in\SSP.
    \end{equation}
    Moreover, if ${A=\emptyset}$, then we write ${\ket{\psi}\sim_{\emptyset}\ket{\phi}}$ for the equivalence relation where all states in $\SSP$ are equivalent.
\end{definition}
We emphasize that Eq.~\eqref{eq:equi_rel_subset_qubits} defines an equivalence relation on $\SSP$ and therefore the quotient set of $\SSP$ induced by the equivalence relation $\sim_A$, which we denote by ${\SSP/{\sim_A}}$, is a partition of $\SSP$.

Based on Definition~\ref{def:equvalenzPartition}, we now present the key relation between two subsystems $A_1$ and $A_2$, which compares the quotient sets induced by $\sim_{A_1}$ and $\sim_{A_1^c}$ against the quotient sets induced by $\sim_{A_2}$ and $\sim_{A_2^c}$ on $\SSP$. Notably, this relation proves sufficient to establish the equality of spectra across the whole subspace $\QS{\SSP}$.
\begin{definition}
\label{def:equivalenzrelation}
    We define
    \begin{equation*}
        \label{eq:equi_rel_subspace}
        A_1\sim_{\SSP} A_2:\Longleftrightarrow \set{\SSP/{\sim_{A_1}},\SSP/{\sim_{A_1^c}}}=\set{\SSP/{\sim_{A_2}},\SSP/{\sim_{A_2^c}}}
    \end{equation*}
    for two subsystems ${A_1,A_2\subset[n]}$.
\end{definition}
Similarly to~\eqref{eq:equi_rel_subset_qubits}, $\sim_{\SSP}$ defines an equivalence relation on the power set of ${[n]}$, which we denote by ${\mathcal{P}([n])}$.

In the following, we set ${\mathcal{B}\coloneqq \mathcal{P}([n])/{\sim_{\SSP}}}$, the quotient set of $\sim_{\SSP}$, and call each element in $\mathcal{B}$ a \emph{bundle}. Moreover, we call $\mathcal{B}$ the \emph{set of all bundles in $\SSP$}. The subsequent theorem presents one of our main results in this article.
As we will later see in Sec.~\ref{subsec:appl_embed-based_quantum_opt}, the bundles of the subspace in the example of Fig.~\ref{fig:GroupEntanglementDynamics} coincide with the bundles we actually observe in Fig.~\ref{fig:GroupEntanglementDynamics}~(b), thereby justifying the term \emph{bundle}.
\begin{theorem}\label{thm:suff_cond_equivalence_spectrum_subspace_main}
    Let ${A_1,A_2\subset[n]}$ be two nonempty subsystems with ${A_1,A_2\neq [n]}$ satisfying ${A_1\sim_{\SSP} A_2}$. Then, for all ${\ket{\Psi}\in \QS{\SSP}}$ the relation ${\Spec(\rho_{A_1})=\Spec(\rho_{A_2})}$ holds, where $\rho_{A_1}$ and $\rho_{A_2}$ denote the reduced density matrices of $\ket{\Psi}$ with respect to $A_1$ and $A_2$. Conversely, the equality of the spectra of the reduced density matrices of all quantum states in $\QS{\SSP}$ with respect to two subsystems ${A_1, A_2\in\mathcal{P}([n])}$ does not necessarily imply that ${A_1\sim_\SSP A_2}$.
\end{theorem}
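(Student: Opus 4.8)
The plan is to handle the two parts of the statement separately: the forward implication by a Schmidt-type Gram-matrix computation, and the failure of its converse by an explicit counterexample. Throughout I read the entanglement spectrum as the multiset of Schmidt coefficients, i.e.\ the nonzero eigenvalues of the reduced density matrix; this is also the reading that makes the forward claim literally correct when $|A_1|\neq|A_2|$.

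\emph{Forward direction.} Fix $\ket{\Psi}=\sum_{\ket{\psi}\in\SSP}c_{\ket{\psi}}\ket{\psi}\in\QS{\SSP}$ and a non-empty proper subsystem $A$. First I would group the basis states of $\SSP$ by their $A$-restriction: all states in a class $E\in\SSP/{\sim_A}$ share a computational-basis restriction $\ket{u_E}$ on $A$, distinct classes give orthonormal $\ket{u_E}$, and hence $\ket{\Psi}=\sum_{E\in\SSP/{\sim_A}}\ket{u_E}\otimes\ket{w_E}$ with $\ket{w_E}=\sum_{\ket{\psi}\in E}c_{\ket{\psi}}\ket{\psi_{A^c}}$. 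Tracing out $A^c$ shows that $\rho_A$ is supported on $\mathrm{span}\{\ket{u_E}\}$, where in the orthonormal basis $\{\ket{u_E}\}$ it equals the Hermitian positive-semidefinite Gram matrix $G^A$ with entries $\braket{w_{E'}}{w_E}$; in particular the nonzero eigenvalues of $\rho_A$ are exactly those of $G^A$. Since $\braket{w_{E'}}{w_E}$ is the sum of $\overline{c_{\ket{\phi}}}\,c_{\ket{\psi}}$ over all $\ket{\psi}\in E$, $\ket{\phi}\in E'$ with $\ket{\psi}\sim_{A^c}\ket{\phi}$, the decisive observation is that $G^A$ is a function of the amplitudes $(c_{\ket{\psi}})$ and of the \emph{ordered} pair of partitions $(\SSP/{\sim_A},\SSP/{\sim_{A^c}})$ only.

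I would then combine this with the standard fact that $\rho_A$ and $\rho_{A^c}$ always share the same nonzero eigenvalues. Assume $A_1\sim_{\SSP}A_2$. If $\SSP/{\sim_{A_1}}=\SSP/{\sim_{A_2}}$ and $\SSP/{\sim_{A_1^c}}=\SSP/{\sim_{A_2^c}}$, the two ordered pairs coincide, so $G^{A_1}=G^{A_2}$ and $\Spec(\rho_{A_1})=\Spec(\rho_{A_2})$. If instead $\SSP/{\sim_{A_1}}=\SSP/{\sim_{A_2^c}}$ and $\SSP/{\sim_{A_1^c}}=\SSP/{\sim_{A_2}}$, the ordered pair attached to $A_2$ equals the one attached to $A_1^c$, so $G^{A_2}=G^{A_1^c}$ and $\Spec(\rho_{A_2})=\Spec(\rho_{A_1^c})=\Spec(\rho_{A_1})$. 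Either way the spectra coincide for every $\ket{\Psi}\in\QS{\SSP}$.

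\emph{Converse.} I would produce one $\SSP$ with subsystems $A_1,A_2$ whose spectra coincide throughout $\QS{\SSP}$ although $A_1\not\sim_{\SSP}A_2$. The underlying mechanism is to write $G^A=MM^{\dagger}$, where $M$ has rows $\SSP/{\sim_A}$, columns $\SSP/{\sim_{A^c}}$, and entry $c_{\ket{\psi}}$ at $(E,F)$ when $E\cap F=\{\ket{\psi}\}$ (a single state, a computational-basis state being fixed by its $A$- and $A^c$-restrictions) and $0$ otherwise; $M$ is the weighted biadjacency matrix of a bipartite graph whose edges are the states of $\SSP$. If this graph is disconnected and one realization is obtained from another by a label-preserving graph isomorphism that swaps the two sides on one component only, the singular values of $M$ are unaffected but the correspondence between the two partitions is destroyed, so $A_1\not\sim_{\SSP}A_2$. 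An explicit instance is $n=4$, $\SSP=\{\ket{0000},\ket{0100},\ket{1001},\ket{1011}\}$, $A_1=\{1\}$, $A_2=\{1,3\}$: in both cases the graph is a disjoint union of two length-two paths carrying amplitudes $\{c_1,c_2\}$ and $\{c_3,c_4\}$, a short computation gives $\Spec(\rho_{A_1})=\Spec(\rho_{A_2})=\{|c_1|^2+|c_2|^2,|c_3|^2+|c_4|^2\}$ for every $\ket{\Psi}$ (with $c_1,\ldots,c_4$ the respective amplitudes), while $\SSP/{\sim_{A_1}}$ has two blocks and $\SSP/{\sim_{A_2}},\SSP/{\sim_{A_2^c}}$ have three each, so $\{\SSP/{\sim_{A_1}},\SSP/{\sim_{A_1^c}}\}\neq\{\SSP/{\sim_{A_2}},\SSP/{\sim_{A_2^c}}\}$. (Appending a constant ancilla qubit and taking $A_1=\{1,5\}$ makes even the zero-completed spectra agree, if desired.)

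\emph{Main obstacle.} The forward direction is essentially bookkeeping once the grouping is set up. The delicate step is the converse: one must identify the right combinatorial shape — the smallest being two disjoint length-two paths whose central vertices lie on the same side of the bipartition in one realization and on opposite sides in the other — and one must fix what ``spectrum'' is to mean, since the example works only because the trivial zeros (equivalently, the ambient dimensions $2^{|A_1|}$ and $2^{|A_2|}$) are ignored, the very convention that keeps the forward implication valid. I expect constructing that example, rather than verifying it, to absorb most of the effort.
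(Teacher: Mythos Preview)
Your proposal is correct and follows essentially the same approach as the paper. For the forward direction, both you and the paper reduce $\rho_A$ to a matrix indexed by the equivalence classes of $\SSP/{\sim_A}$ whose entries depend only on the amplitudes and on the pair of partitions $(\SSP/{\sim_A},\SSP/{\sim_{A^c}})$; the paper packages this as its Lemma~\ref{lem:equal_spectrum} and Eq.~\eqref{eq:reduced_density_matrix_formula}, while you phrase it as a Gram matrix $G^A$. Your explicit handling of the ``swapped'' case via $\Spec(\rho_{A})=\Spec(\rho_{A^c})$ is slightly cleaner than the paper's ``without loss of generality'' reduction, but the content is identical. For the converse, your four-state example on $n=4$ with $A_1=\{1\}$, $A_2=\{1,3\}$ has exactly the combinatorial structure of the paper's Lemma~\ref{lemma:notEquivalentBipsEqualSpec} (two disjoint blocks $B$ and $D$ with $|D|=2$, one partition grouping them as $\{B,D\}$, the other splitting $D$ and regrouping on the complement); the paper's own concrete instance uses different states and $A_2=\{2,3\}$ but is isomorphic to yours. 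The only substantive addition in the paper is a recursive construction showing such counterexamples exist with $|\SSP|$ arbitrarily large, which is not needed for Theorem~\ref{thm:suff_cond_equivalence_spectrum_subspace_main} as stated but appears in its refinement, Theorem~\ref{thm:equivalence_spectrum_sim_subspace}.
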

A proof of Theorem~\ref{thm:suff_cond_equivalence_spectrum_subspace_main} is given in Sec.~\ref{sec:methods}.
Note that the above theorem can also be interpreted as an extension of the well-known result in quantum computation and information theory stating that the entanglement spectra of a quantum state with respect to some subsystem $A$ and $A^c$ are identical.

In Sec.~\ref{subsec:bundl_spectr-based_entgl_measures} we will show that equivalent subsystems under the relation $\sim_{\SSP}$ implies not only equal spectra for all pure states in $\QS{\SSP}$, but also equal spectra for all mixed states in $\QS{\SSP}$. Moreover, in Sec.~\ref{subsec:bundl_spectr-based_entgl_measures}  we will show that there exist subspaces ${\SSP\subset\CBS{n}}$ of arbitrarily large size greater than three with growing ${n\in\mathbb{N}}$ satisfying the conditions in the second statement of Theorem~\ref{thm:suff_cond_equivalence_spectrum_subspace_main}. We further note that the time complexity to verify for two subsystems ${A_1,A_2\subset[n]}$ the condition ${A_1\sim_{\SSP}A_2}$ is of order ${O(n\abs{\SSP}^2)}$ as validating condition~\eqref{eq:equi_rel_subset_qubits} takes ${O(n)}$ computations and determining the quotient sets requires ${O(n\abs{\SSP}^2)}$ evaluations. As such subspaces are typical exponentially large, this verification can be extremely costly.
However, as demonstrated by the parity embedding, in Sec.~\ref{subsec:appl_embed-based_quantum_opt} we derive an algorithm applicable for Hamiltonians, which are transformed via embeddings, reducing this exponential-time verification to polynomial time.
Note that the same entanglement spectrum does not imply that the entanglement structure is the same, which would require that also the eigenvectors of the reduced density matrices are equal.

We conclude this section with a simple example to further familiarize the reader with our notation and compute the set of bundles of the subspace spanned by ${\SSP=\set{\ket{\psi_1}, \ket{\psi_2}, \ket{\psi_3}}}$, where
\begin{equation*}
    \ket{\psi_1}\coloneqq\ket{000},\quad  \ket{\psi_2}\coloneqq\ket{100}\quad\text{and}\quad\ket{\psi_3}\coloneqq\ket{111}.
\end{equation*}
Defining
$${A_1\coloneqq\set{1}},\ {A_2\coloneqq\set{2}},\ {A_3\coloneqq\set{3}}\ \text{and}\  {A_4\coloneqq\set{1,2,3}},$$
we deduce the following quotient groups
\begin{center}
\begin{tabular}{c| c }
    $A$ & $\SSP/{\sim_{A}}$\\\hline
    $A_1$ & $\set{\set{\ket{\psi_1}},\set{\ket{\psi_2},\ket{\psi_3}}}$\\
    $A_1^c$ & $\set{\set{\ket{\psi_1},\ket{\psi_2}},\set{\ket{\psi_3}}}$\\\hline
    $A_2$ & $\set{\set{\ket{\psi_1},\ket{\psi_2}},\set{\ket{\psi_3}}}$\\
    $A_2^c$ & $\set{\set{\ket{\psi_1}},\set{\ket{\psi_2}},\set{\ket{\psi_3}}}$\\\hline
    $A_3$ & $\set{\set{\ket{\psi_1},\ket{\psi_2}},\set{\ket{\psi_3}}}$\\
    $A_3^c$ & $\set{\set{\ket{\psi_1}},\set{\ket{\psi_2}},\set{\ket{\psi_3}}}$\\\hline
    $A_4$ & $\set{\set{\ket{\psi_1}},\set{\ket{\psi_2}},\set{\ket{\psi_3}}}$\\
    $A_4^c$ & $\set{\set{\ket{\psi_1},\ket{\psi_2},\ket{\psi_3}}}$
\end{tabular}
\end{center}
and thus, we obtain the bundles ${B_1\coloneqq\set{A_1,A_1^c}}$, ${B_2\coloneqq \set{A_2,A_2^c,A_3,A_3^c}}$ and ${B_3\coloneqq\set{A_4,A_4^c}}$, yielding the set of all bundles ${\mathcal{B}=\set{B_1,B_2,B_3}}$.
Hence, applying Theorem~\ref{thm:suff_cond_equivalence_spectrum_subspace_main} shows that for each ${i=1,2,3}$ the entanglement spectrum with respect to any subsystem $A$ in the bundle $B_i$ is identical for every state in $\QS{\SSP}$.
\section{Applications to embedding-based quantum optimization\label{subsec:appl_embed-based_quantum_opt}}
In this section we demonstrate several examples illustrating how the results presented in the previous section can be applied to quantum optimization. Specifically, we examine cases where an original problem graph is transformed via an embedding to a new problem graph. This allows to formulate an equivalent problem that can be implemented on contemporary quantum devices with limited qubit connectivity, circumventing the hardware's connectivity constraints. In order to obtain an equivalent optimization problem, such embeddings typically add constraints to the new problem that have to be satisfied, yielding naturally an energetic subspace spanned by $\SSP$ which can be uniquely mapped back to the space of the logical problem.

Given an embedding and the resulting basis set $\SSP$, we first introduce in Sec.~\ref{sec:opBasedForm} an operator-based formulation of Definition~\ref{def:equivalenzrelation}.
In Sec.~\ref{sec:embeddingExamples} we investigate concrete embeddings and derive efficient algorithms based on Sec.~\ref{sec:opBasedForm} for the parity embedding and for the minor embedding, which verify whether two subsystems are equivalent under $\sim_{\SSP}$. In case of the parity embedding, we will even observe for this verification a reduction in time complexity from exponential to polynomial scaling. Moreover, in Sec.~\ref{sec:embeddingExamples} we utilize the operator-based formulation to characterize certain bundles and determine their elements with respect to sets $\SSP$, which results from the application of the parity embedding on complete problem graphs.

\subsection{Operator-based formulation 
of equivalent subsystems\label{sec:opBasedForm}}
We begin by introducing an operator-based condition and we will show that this is equivalent to the equivalence relation ${\sim_{\SSP}}$ presented in Definition~\ref{def:equivalenzrelation}. To formulate this alternative condition, we require the existence of a collection $\transOs$ of operators that act on the spanning set ${\SSP}$, each mapping elements of  ${\SSP}$ to other elements of ${\SSP}$, and satisfies certain properties we will specify subsequently. As we will later see in Sec.~\ref{sec:embeddingExamples}, such operator sets can be found by employing the individual characteristics of the embedding.

Note that in the following, given a set $M$ and a subset $F$ of the set of all functions from $M$ to $M$, we denote the closure of $F$ under composition by $\posG{F}$.
\begin{definition}[Generator set]
\label{def:generatorSet}
We call $\transOs$ a \emph{generator set of $\SSP$} if the following properties are satisfied:
\begin{enumerate}[(i)]
    \item \label{item:self_inv_commutativity_prop} For all ${P, T\in \posG{\transOs}}$ it holds ${PT=TP}$, i.e., operators in $\posG{\transOs}$ commute.
    \item \label{item:generatingAllStates} For all ${\ket{\psi}, \ket{\phi} \in \SSP}$ there exists an operator ${P\in \posG{\transOs}}$ such that
    ${P\ket{\psi} = \ket{\phi}}$.
    \item \label{item:local_invariance} (local invariance) Let ${P\in\transOs}$, ${A\subset[n]}$ be nonempty and suppose that there exists ${\ket{\psi}\in \SSP}$ such that ${\left(P\ket{\psi}\right)_{A} = \ket{\psi_A}}$. Then ${\left(P\ket{\phi}\right)_{A} = \ket{\phi_A}}$ for all ${\ket{\phi}\in \SSP}$.
\end{enumerate}
Moreover, we call each element in ${\posG{\transOs}}$ a product operator over $\transOs$.
\end{definition}
Next, we introduce so-called bipartite operator sets, which will define the operator-based condition given on the left-hand side in Eq.~\eqref{eq:operator_based_cond}.
\begin{definition}[Bipartite operator set]
\label{def:biOpset}
Let ${A\subset [n]}$. If $A$ is nonempty, we define
    \begin{equation*}
        \mathcal{O}_A \coloneqq \set{P\in \posG{\transOs}\mid \forall \ket{\psi}\in \SSP\colon \left(P\ket{\psi}\right)_{A} = \ket{\psi_A}}
    \end{equation*}
    and for ${A=\emptyset}$ we set ${\mathcal{O}_{\emptyset}\coloneqq \posG{\transOs}}$. We call $\mathcal{O}_A$ \emph{operator set of the subsystem $A$} and ${\set{\mathcal{O}_{A}, \mathcal{O}_{A^c}}}$ a \emph{bipartite operator set}.
\end{definition}
Finally, we present the main result of this section, showing that the operator-based condition is equivalent to $\sim_{\SSP}$.
\begin{theorem}
\label{theorem:equivalenzOperatorbipartitions}
Let $\transOs$ be a generator set of $\SSP$ and ${A_1,A_2 \subset [n]}$. Then, we have
\begin{equation}
\label{eq:operator_based_cond}
\set{\mathcal{O}_{A_1},\mathcal{O}_{A_1^c}} = \set{\mathcal{O}_{A_2},\mathcal{O}_{A_2^c}}
        \Longleftrightarrow
        A_1\sim_{\SSP} A_2.
\end{equation}
\end{theorem}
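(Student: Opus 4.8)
The plan is to prove the equivalence in Eq.~\eqref{eq:operator_based_cond} by establishing a tight link between the operator sets $\mathcal{O}_A$ and the quotient sets $\SSP/{\sim_A}$. The key claim I would isolate first is: \emph{for every subsystem $A\subset[n]$, two states $\ket{\psi},\ket{\phi}\in\SSP$ satisfy $\ket{\psi}\sim_A\ket{\phi}$ if and only if there exists $P\in\mathcal{O}_A$ with $P\ket{\psi}=\ket{\phi}$}. The ``if'' direction is immediate from the definition of $\mathcal{O}_A$ (any $P\in\mathcal{O}_A$ preserves the restriction to $A$, so $\ket{(P\ket{\psi})_A}=\ket{\psi_A}$, hence $\ket{\phi_A}=\ket{\psi_A}$). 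For the ``only if'' direction I would use property~\eqref{item:generatingAllStates} of the generator set to pick \emph{some} $P\in\posG{\transOs}$ with $P\ket{\psi}=\ket{\phi}$, and then argue that $P$ in fact lies in $\mathcal{O}_A$: since $\ket{\psi}\sim_A\ket{\phi}$ means $\left(P\ket{\psi}\right)_A=\ket{\psi_A}$, and since $P$ is a composition of generators, applying property~\eqref{item:local_invariance} (local invariance) inductively along the factors of $P$ — together with commutativity~\eqref{item:self_inv_commutativity_prop} to reorder factors as needed — forces $\left(P\ket{\chi}\right)_A=\ket{\chi_A}$ for \emph{all} $\ket{\chi}\in\SSP$, i.e.\ $P\in\mathcal{O}_A$. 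This bookkeeping on how local invariance propagates through a product operator is the step I expect to be the main obstacle, since one has to be careful that a single generator factor need not fix $A$ even if the full product does; commutativity is what lets one quotient out by the ``$A$-fixing'' factors cleanly.

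Granting this claim, the consequence is that $\mathcal{O}_A$ \emph{determines} and is determined by $\SSP/{\sim_A}$: the orbits of the action of $\mathcal{O}_A$ on $\SSP$ are exactly the equivalence classes of $\sim_A$. Conversely, knowing $\SSP/{\sim_A}$ one recovers $\mathcal{O}_A$ as $\set{P\in\posG{\transOs}\mid P\text{ maps each class of }\sim_A\text{ into itself}}$ — here I would again invoke local invariance to check that ``$P$ preserves the restriction to $A$ on one representative of each class'' upgrades to ``on all of $\SSP$'', so that this description agrees with Definition~\ref{def:biOpset}. Thus the map $A\mapsto \mathcal{O}_A$ and the map $A\mapsto\SSP/{\sim_A}$ carry the same information; in particular $\mathcal{O}_{A_1}=\mathcal{O}_{A_2}\iff \SSP/{\sim_{A_1}}=\SSP/{\sim_{A_2}}$.

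With this dictionary in hand, the theorem follows by a short set-theoretic argument. The right-hand side $A_1\sim_\SSP A_2$ says $\set{\SSP/{\sim_{A_1}},\SSP/{\sim_{A_1^c}}}=\set{\SSP/{\sim_{A_2}},\SSP/{\sim_{A_2^c}}}$ as unordered pairs, and the left-hand side says $\set{\mathcal{O}_{A_1},\mathcal{O}_{A_1^c}}=\set{\mathcal{O}_{A_2},\mathcal{O}_{A_2^c}}$ as unordered pairs. One splits into the two cases of how the unordered pair can match: either $\SSP/{\sim_{A_1}}=\SSP/{\sim_{A_2}}$ and $\SSP/{\sim_{A_1^c}}=\SSP/{\sim_{A_2^c}}$, or the crossed identification $\SSP/{\sim_{A_1}}=\SSP/{\sim_{A_2^c}}$ and $\SSP/{\sim_{A_1^c}}=\SSP/{\sim_{A_2}}$; applying the equivalence $\mathcal{O}_A\leftrightarrow\SSP/{\sim_A}$ componentwise translates each case on the $\sim_\SSP$ side to the corresponding case on the operator side and vice versa, giving both implications. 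The only mild subtlety is the edge cases $A=\emptyset$ or $A=[n]$: here one checks directly from the conventions $\mathcal{O}_\emptyset=\posG{\transOs}$ and $\sim_\emptyset$ identifying everything (so $\SSP/{\sim_\emptyset}=\set{\SSP}$) that the dictionary still holds, so no separate treatment is needed. I would present the core claim as a preparatory lemma, then the two-case matching as the body of the proof.
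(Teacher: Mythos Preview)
Your approach is correct and essentially the same as the paper's: the core correspondence you isolate --- that the $\sim_A$-classes on $\SSP$ are precisely the $\mathcal{O}_A$-orbits, so that $\mathcal{O}_A$ and $\SSP/{\sim_A}$ determine each other --- is exactly the content of the paper's Proposition~\ref{prop:quotient_set_equality}, and both directions of the theorem then follow by the same unordered-pair matching you describe.

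The one step worth tightening is the extension of local invariance~\ref{item:local_invariance} from single generators to product operators. Your suggested route --- reorder via commutativity and ``quotient out the $A$-fixing factors'' --- is circular: after peeling off the $A$-fixing generators you are left with a product of non-$A$-fixing generators that fixes $A$ on one state, and you must show it fixes $A$ on all states, which is the original problem again. The clean fix is to apply~\ref{item:local_invariance} with $A$ a singleton $\{q\}$: since qubits are binary, each generator $T\in\transOs$ either fixes qubit $q$ on every state of $\SSP$ or flips it on every state, so $T$ acts on $\SSP$ as a fixed bit-flip pattern; any $P\in\posG{\transOs}$ then also has a well-defined flip set (the symmetric difference of the factors'), and hence $(P\ket{\psi})_A=\ket{\psi_A}$ for one $\ket{\psi}$ implies it for all. (The paper's own proof of Proposition~\ref{prop:quotient_set_equality} in fact invokes~\ref{item:local_invariance} directly for $P\in\posG{\transOs}$ without spelling this out, so you are being more careful than the paper here.)
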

A proof of Theorem~\ref{theorem:equivalenzOperatorbipartitions} is given in Sec.~\ref{subsection:methods:applicationEmbedding}. As we observe in Eq.~\eqref{eq:operator_based_cond}, the operator-based condition has the identical structure as the original one, which defines the equivalence relation in Definition~\ref{def:equivalenzrelation}, except that the quotient sets are replaced by their corresponding operator sets. The reason is that for a subsystem $A\subset [n]$ the operator set $\mathcal{O}_A$ uniquely determines the quotient set $\SSP/{\sim_A}$ and hence, relating operator sets (and the resulting bipartite operator sets) to the equivalence relation $\sim_{A}$. This relation is stated in Proposition~\ref{prop:quotient_set_equality}. Moreover, supposing the existence of a generator set of $\SSP$ which, in addition, is \emph{pointwise-disjoint}, i.e.,  for all ${P,T\in\posG{\transOs}}$ and some ${\ket{\psi}\in\SSP}$ with ${P\ket{\psi}=T\ket{\psi}}$ it follows ${P=T}$, and where each element is self-inverse, will also show that the number of elements of each equivalence class in $\SSP/{\sim_A}$ is constant and corresponds to $\abs{\mathcal{O}_A}$, highlighting another feature of the operator-based formulation.
\subsection{Embedding examples\label{sec:embeddingExamples}}
In this section we demonstrate how the preceding results can be applied to two embeddings used in quantum optimization.

The first example we examine in more detail is the parity embedding, which we introduced in Sec.~\ref{sec:parityembedding}. As already mentioned, the penalty terms~\eqref{eq:parityConstraintH} of the parity Hamiltonian~\eqref{eq:parityHamiltonian} create an energy gap that isolates a subset of physical states from the remaining Hilbert space. We define this subset of physical states as $\Pi$, the set of all \emph{parity states}~(Definition~\ref{def:parityStateSpace}). For the parity embedding, we choose the set of all \emph{logical line operators}~(Definition~\ref{def:logicalLineOperatorProducts}), denoted by $\mathbb{\Lambda}$, as a generator set where each logical line operator $\Lambda_v$ maps the change of the state of a single logical qubit (weight of a given vertex $v\in V$) to the change of the states of the respective set of physical qubits (weight of all edges containing the vertex $v$). In Definition~\ref{def:logicalLine} we introduce this set of physical qubits as a \emph{logical line of $v$} and denote it by $L_v(H)$. The reason why this set is a natural choice and serves as a candidate for a generator set lies in the main characteristics of the parity embedding: Each parity state defined by a parity weight on $E$ can be mapped back to its corresponding logical state defined by some weight on $V$. Thus, given two different parity states and their corresponding logical states, applying the logical line operators of all vertices where the state of the qubit differs between the two logical states transforms one parity state into the other. As logical line operators change the states of qubits independently of their current state, $\mathbb{\Lambda}$ forms a generator set~(for more details, see Lemma~\ref{lemma:generatorSetParityEmbedding}).

Recalling Definition~\ref{def:biOpset}, we observe that an operator set on a subsystem $A$ covers all operators in $\posG{\mathbb{\Lambda}}$ which leave all states restricted on $A$ unchanged. In the next lemma, we derive generators for such operator sets, hence simplifying the verification whether two operator sets on two subsystems are equal. Moreover, it enables a systematic way of constructing operator sets of $A$ and thus, determining all equivalence classes with respect to $\sim_A$ on $\Pi$~(see also Proposition~\ref{prop:quotient_set_equality}).
\begin{lemma}
\label{lemma:constructBipOperatorSetsForAllBips}
Let ${H=(V,E)}$ be a hypergraph and ${A\subset E}$. Furthermore, let ${V_1,\ldots,V_N\subset V}$ be the connected components of ${H\vert_A=(V_A,E_A)}$ and ${H_k\coloneqq H\vert_{E_k}}$ the induced sub-hypergraphs of ${H\vert_{A}}$ where ${E_k\coloneqq\set{e\in E\mid e\subset V_k}}$ for ${k=1,\ldots,N}$. Define
\begin{equation*}
    \mathcal{U}_k \coloneqq \set{\left.\prod_{v\in \tilde{V}} \Lambda_{v}\,\right\vert \tilde{V}\in\mathcal{V}_k},
\end{equation*}
where
\begin{equation}
    \label{eq:generators_bip_op_sets}
    \mathcal{V}_k\coloneqq\set{\tilde{V}\subset V_k\left\vert\, \tilde{V}\neq\emptyset\land\forall e\in E_k\colon \abs{\tilde{V}\cap e}\text{ even}\right.}.
\end{equation}
Then, we have
    $$
    \mathcal{O}_{A}= \spos\left(\mathcal{U} \cup \mathcal{Q}\right),
    $$
    where
    $$
    \mathcal{U}\coloneqq\bigcup_{k=1}^N\mathcal{U}_k
    $$
    and
    $$
    \mathcal{Q} \coloneqq \set{\Lambda_v\in \mathbb{\Lambda} \left\vert\, v\in V\setminus\bigcup_{k=1}^{N}V_k\right.}.
    $$
    Moreover, if $H$ is a graph, then ${\mathcal{V}_k=\set{V_k}}$ for ${k=1,\ldots,N}$.
\end{lemma}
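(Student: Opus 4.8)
The plan is to show the two inclusions $\mathcal{O}_A \supseteq \spos(\mathcal{U}\cup\mathcal{Q})$ and $\mathcal{O}_A \subseteq \spos(\mathcal{U}\cup\mathcal{Q})$ separately, using the explicit action of a product of logical line operators on a parity state. The starting point is the description of how $\prod_{v\in W}\Lambda_v$ acts on a parity state $\ket{\psi}$ labelled by its edge weights: restricted to an edge $e\in E$, the state flips precisely when $|W\cap e|$ is odd, and is left fixed when $|W\cap e|$ is even (this is the defining "logical line" behaviour, and it is independent of $\ket{\psi}$, which is exactly the local-invariance property from Definition~\ref{def:generatorSet}). Hence a product operator $P=\prod_{v\in W}\Lambda_v$ lies in $\mathcal{O}_A$ if and only if $|W\cap e|$ is even for every $e\in A$ — an entirely combinatorial condition on $W\subseteq V$ that no longer references the states at all.

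For the inclusion $\supseteq$: every generator in $\mathcal{Q}$ is $\Lambda_v$ with $v\notin\bigcup_k V_k$, so $v$ lies in no edge of any $E_k$; since $\bigcup_k E_k = E_A$ (every edge of $H|_A$ is contained in exactly one connected component $V_k$), we get $v\notin e$ for all $e\in A$, so the even-intersection condition holds vacuously. Every generator in $\mathcal{U}_k$ is $\prod_{v\in\tilde V}\Lambda_v$ with $\tilde V\in\mathcal{V}_k$, and by definition $|\tilde V\cap e|$ is even for all $e\in E_k$; for edges $e\in A$ lying in a different component $E_j$, $j\ne k$, we have $\tilde V\cap e\subseteq V_k\cap V_j=\emptyset$, so again the condition holds. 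Since $\mathcal{O}_A$ is closed under composition (it is an operator set, and composing two operators each fixing every state on $A$ again fixes every state on $A$, using commutativity from property~\eqref{item:self_inv_commutativity_prop}), the whole closure $\spos(\mathcal{U}\cup\mathcal{Q})$ is contained in $\mathcal{O}_A$.

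For the harder inclusion $\subseteq$: take $P\in\mathcal{O}_A$; by property~\eqref{item:generatingAllStates} together with the commutativity and (one should check) the self-inverse structure of the $\Lambda_v$, $P$ can be written as $\prod_{v\in W}\Lambda_v$ for some $W\subseteq V$, and by the paragraph above $|W\cap e|$ is even for every $e\in A$. Now decompose $W = \big(W\setminus\bigcup_k V_k\big)\ \sqcup\ \bigsqcup_{k=1}^N (W\cap V_k)$. The part outside $\bigcup_k V_k$ contributes a product of operators from $\mathcal{Q}$. For each $k$, set $\tilde V_k\coloneqq W\cap V_k$: since every $e\in E_k$ lies in $A$, the even-intersection condition gives $|\tilde V_k\cap e|$ even for all $e\in E_k$, i.e. $\tilde V_k\in\mathcal{V}_k$ (or $\tilde V_k=\emptyset$, which just contributes the identity), so $\prod_{v\in\tilde V_k}\Lambda_v\in\mathcal{U}_k$. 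Multiplying these pieces and using commutativity recovers $P$ as an element of $\spos(\mathcal{U}\cup\mathcal{Q})$. The main obstacle is the bookkeeping in this decomposition step — in particular verifying cleanly that $\bigcup_k E_k$ is exactly $E_A$ and that the sets $V_k$ being the connected components guarantees every $A$-edge sits inside one of them, so that no "cross-component" edge imposes an extra constraint that the generators fail to capture. Finally, for the graph case: if $H$ is a graph, each $E_k$ consists of edges inside a connected component $V_k$, and the only $\tilde V\subseteq V_k$ with $|\tilde V\cap e|$ even for every edge $e$ of a connected graph is $\tilde V=V_k$ itself (parity propagates along any spanning tree, forcing all vertices in or all out), so $\mathcal{V}_k=\{V_k\}$.
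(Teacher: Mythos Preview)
Your plan is correct and matches the paper's proof essentially step for step: both reduce membership in $\mathcal{O}_A$ to the combinatorial condition ``$|W\cap e|$ even for all $e\in A$'', then decompose $W$ as $(W\setminus\bigcup_k V_k)\sqcup\bigsqcup_k (W\cap V_k)$ to land in $\spos(\mathcal{U}\cup\mathcal{Q})$, with the reverse inclusion handled by the same parity observation. Your spanning-tree propagation for the graph case is equivalent to the paper's path argument (find a path from a vertex in $\tilde V$ to one outside and locate the first edge crossing the boundary), and your explicit verification of the $\supseteq$ direction is slightly more detailed than the paper's one-line remark.
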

By Lemma~\ref{lemma:constructBipOperatorSetsForAllBips}, comparing all elements of two operator sets of two subsystems ${A_1,A_2}$ is equivalent to verifying whether all generators of one operator set is contained in the other and vice versa. Thus, by Theorem~\ref{theorem:equivalenzOperatorbipartitions}, we can derive for the verification of ${A_1\sim_{\Pi} A_2}$ the following algorithm: Let ${V=\set{v_1,\ldots,v_n}}$ and ${E=\set{e_1,\ldots,e_n}}$. Moreover, for ${A\subset E}$ let ${v_{k,1}^{(A)},\ldots,v_{k,m_k}^{(A)}\in\mathbb{Z}_2^{n}}$ with ${m_k\leq \abs{V_k}}$ be a basis of $\mathcal{V}_k$, where we identified subsets ${W\subset V}$ by elements in $\mathbb{Z}_2^{n}$ via ${b_W=(b_1,\ldots,b_{n})}$ with ${b_i=1}$ if ${v_i\in W}$ and ${b_i=0}$ otherwise for all ${i=1,\ldots,n}$. Note that a basis of $\mathcal{V}_k$ can be determined using ${O(\abs{V_k}\abs{E_k}\min(\abs{V_k},\abs{E_k}))}$ calculations by applying Gaussian elimination on the matrix ${B^{(k)}\in\mathbb{Z}_2^{\abs{E_k}\times \abs{V_k}}}$ defined by
\begin{equation*}
    B_{i,j}^{(k)}\coloneqq\begin{cases}
        1,\quad &w_{k,j}\in f_{k,i},\\
        0,\quad &\text{else,}\\
    \end{cases}
\end{equation*}
where we use the enumerations ${V_k=\set{w_{k,1},\ldots,w_{k,\abs{V_k}}}}$ and ${E_k=\set{f_{k,1},\ldots,f_{k,\abs{E_k}}}}$.
Hence, applying Lemma~\ref{lem:prop_product_operators} and using the properties of product operators, we deduce from~\eqref{eq:generators_bip_op_sets} that $\Lambda_W \in \mathcal{O}_{A_1}$ for some $W\subset V$, where ${\Lambda_W\coloneqq\prod_{w\in W}\Lambda_w}$, if and only if the linear system with coefficient matrix
\begin{equation*}
B_A\coloneqq\left[v_{1,1}^{(A)},\ldots,v_{1,m_1}^{(A)},v_{2,1}^{(A)},\ldots,,v_{N,m_N}^{(A)},\delta_{i_1},\ldots,\delta_{i_M}\right]
\end{equation*}
 and right-hand side $b_W$ has a solution over $\mathbb{Z}_2$. Here ${M\coloneqq\abs{V\setminus\bigcup_{k=1}^{N}V_k}}$, ${i_1,\ldots,i_M}$ are chosen such that
 $$
 \set{v_{i_1},\ldots,v_{i_M}}=V\setminus\bigcup_{k=1}^{N}V_k
 $$
 and $\delta_{i_l}$ denotes the $i_l$th standard basis vector for ${l=1,\ldots,M}$. This implies that the condition ${\mathcal{O}_{A_1}\subset \mathcal{O}_{A_2}}$ can be verified by applying Gaussian elimination on the extended coefficient matrix ${[B_{A_1}\mid B_{A_2}]}$. Since the number of columns of each coefficient matrix is less than or equal to $n$, this verification needs ${O(n^3)}$ computations. Using that calculating a basis of a $\mathcal{V}_k$ for all ${k=1,\ldots,N}$ requires in total ${O(\abs{V_A}\abs{E_A}\min(\abs{V_A},\abs{E_A}))}$ computations, we deduce that verifying ${A_1\sim_{\Pi} A_2}$ yields a total cost of ${O(\abs{V}\abs{E}\min(\abs{V},\abs{E}))}$. As the dimension of the parity state space $\Pi$ is polynomial in $\abs{V}$ and $\abs{E}$~(see~\cite{compilerpaper}), and thus the cardinality of $\Pi$ is exponential, this reduces the cost of verifying ${A_1\sim_{\Pi} A_2}$ from exponential time to polynomial time.

Aside from the derivation of this algorithm, Lemma~\ref{lemma:constructBipOperatorSetsForAllBips} can also be used to establish the existence of bundles in $\Pi$ containing multiple bipartitions. As an example, we investigate bundles that are equivalent to logical lines $L_v(H)$ for $v\in V$ in a complete graph ${H=(V,E)}$ with ${\abs{V}>3}$. For a subsystem $A\subset L_v(H)^c$ with ${\abs{A}>\abs{L_v(H)^c}-(\abs{V}-2)}$ we observe that the restricted graph ${H\vert_{A}}$ is connected and has the vertex set $V\setminus\set{v}$, since removing less than ${\abs{V}-2}$ edges from ${H\vert_{L_v(H)^c}}$ does not split ${H\vert_{L_v(H)^c}}$ into two disconnected graphs and shrink the vertex set of ${H\vert_{L_v(H)^c}}$. Therefore, using Lemma~\ref{lemma:constructBipOperatorSetsForAllBips} and the identity
 $$
 {\prod_{u\in V\setminus \set{v}}\Lambda_u = \Lambda_v},
 $$
 we obtain ${\mathcal{O}_A=\set{\mathbb{1}, \Lambda_v}}$. Moreover, since $L_v(H)\subset A^c$, we have ${\mathcal{O}_{A^c}=\set{\mathbb{1}}}$, which, by Theorem~\ref{theorem:equivalenzOperatorbipartitions}, shows that ${A\in [L_v(H)]_{\Pi}}$ and thus, $[L_v(H)]_{\Pi}$ has at least
 \begin{equation}
\label{eq:fullBisSpanningClass}
    \sum_{p=0}^{\abs{V}-3} \binom{\abs{L_v(H)^c}}{\abs{L_v(H)^c}-p}
\end{equation}
bipartitions. To determine whether a subset belongs to the same class $[L_{v}(H)]_{\Pi}$, it is required verifying that its associated graph preserves the same connectivity structure,  which means that it still contains the same set of vertices as ${H\vert_{L_v(H)^c}}$. We also want to remark that from Proposition~\ref{prop:quotient_set_equality} it follows that the parity state space can be partitioned into pairs of parity states which are identical in their weights for all edges in $L_v(H)^c$.

For a complete graph, there are $\abs{V}$ different bipartition classes $[L_v(H)]_{\Pi}$ of this type, with one class associated to each ${v\in V}$.
In Fig.~\ref{fig:GroupEntanglementDynamics}~{(b)}, the orange data points correspond to the von Neumann entropy of the bundle $[L_3(H)]_{\Pi}$ with ${v=3}$ and ${L_{3}(H) = \set{\set{0,3},\set{1,3},\set{2,3},\set{3,4}}}$, where $H$ is the complete graph on five vertices. Therefore, we can deduce that the orange colored subsystems in Fig.~\ref{fig:GroupEntanglementDynamics}~{(d)} ${A=L_3^c(H)}$, ${A_1=\set{\set{0,2},\set{2,4},\set{1,4}}}$, ${A_2=\set{\set{0,2},\set{2,4},\set{1,4},\set{0,1}}}$ and ${A_3=\set{\set{0,2},\set{2,4},\set{1,2}}}$ belong to the equivalence class $[L_{3}(H)]_{\Pi}$ and bundle to the same final entanglement value in Fig.~\ref{fig:GroupEntanglementDynamics}~{(b)}, while the violet and green-blue subsystems ${B_1=\set{\set{0,1},\set{1,4},\set{0,4}}}$, ${B_2=\set{\set{0,1},\set{1,4}}}$ and ${D=\set{\set{0,2},
\set{2,4},\set{0,4}}}$ belong to other classes and do not bundle to the same final entanglement value as ${L_3(H)}$.
Furthermore, we now can prove that the total number of unordered bipartitions contributing to this bundle is $38$.
With Eq.~\eqref{eq:fullBisSpanningClass} we have for all subsystems of size ${3<\abs{A}\leq 6}$ in total $22$ unordered bipartitions. For all subsets ${A\subset L_3(H)^c}$ with ${\abs{A} = 3}$ we count in total $16$ for which the corresponding subgraph ${H\vert_A = (V\setminus \set{3}, A)}$ is connected and has the same vertex set. For all smaller subsets it is not possible to find any connected subgraph containing all four vertices and thus we have in total $38$ unordered bipartitions contributing to this bundle.

Finally, we demonstrate that, analogous to the parity embedding, a generator set also exists for the minor embedding and show the existence of multiple bipartitions yielding the same entanglement spectrum.
Similar to the parity embedding, we can define a physical space $\mathcal{H}(V_M)$ and a minor state space ${\MES\subset \mathcal{H}(V_H)}$  for a minor graph ${M=(V_M,E_M)}$ and chains $C_v$ for ${v\in V}$~(Definition~\ref{def:minorState}). Furthermore, analogous to the logical line operator in the parity embedding, we can define the \emph{chain operator}~(Definition~\ref{def:chainOperator}) $\Gamma_{v}$ of a vertex ${v\in V}$ for the minor embedding, which changes the weights of all vertices ${w\in V_M}$ that are contained in the chain $C_v$. From this, it can be shown that the set of all chain operators $\mathbb{\Gamma}$ is a generator set of the minor state space $\MES$.
Analogous to the parity embedding, we can determine a bipartite operator set for each bipartition and use it to verify whether two bipartitions belong to the same class by applying Theorem~\ref{theorem:equivalenzOperatorbipartitions}. To conclude, we show by Theorem~\ref{theorem:equivalenzOperatorbipartitions} the existence of multiple bipartitions which are equivalent under $\sim_{\MES}$. In the following, we assume that there exists a chain $C_v$ in the minor embedding that has at least length ${\abs{C_v}\geq 3}$. Then, we observe that for all ${A\subset C_v}$ with ${\abs{A}<\abs{C_v}}$, the bipartite operator sets are identical and given by ${\mathcal{O}_A=\mathcal{G}(\set{\Gamma_w\mid w \in V\setminus\set{v}})}$ and ${\mathcal{O}_{A^c}  = \set{\mathbb{1}}}$.
To demonstrate another example, we consider subsystems ${A\subset V_M}$ consisting of exactly 
$\abs{V}$ elements, where each element in $A$ associated with a different chain, i.e., ${\abs{A\cap C_v} =  1}$ for all ${v\in V_M}$. Again, all bipartite operator sets are equal and correspond to $\set{\set{\mathbb{1}},\set{\mathbb{1}}}$.
\section{Methods\label{sec:methods}}
In the final section of this article, we introduce the remaining mathematical tools and deliver the proofs of the theoretical claims established in the earlier sections.
\subsection{Bundling of spectral-based entanglement measures\label{subsec:bundl_spectr-based_entgl_measures}}
First, we present a full proof of our main result presented in Theorem~\ref{thm:suff_cond_equivalence_spectrum_subspace_main} which will be refined in Theorem~\ref{thm:equivalence_spectrum_sim_subspace}. In the following we denote by
\begin{multline*}
\label{eq:subsystemmixedstateset}
 \MS{\SSP}\coloneqq\Bigg\{\left.\sum_{i=1}^m p_{i} \ket{\psi_{i}}\bra{\psi_{i} }\,\right\vert m\in\mathbb{N},\\ \forall i\in[m]\colon \ket{\psi_{i}}\in\QS{\SSP}\text{ and }p_{i} \geq 0,\, \sum_{i=1}^m p_{i} = 1\Bigg\}
\end{multline*}
the set of mixed states whose pure components lie in the subspace $\QS{\SSP}$.
\begin{theorem}\label{thm:equivalence_spectrum_sim_subspace}
    Let ${A_1,A_2\subset [n]}$ be two nonempty subsystems with ${A_1,A_2\neq [n]}$ satisfying ${A_1\sim_{\SSP} A_2}$. Then, for all ${\rho\in  \MS{\SSP}}$ we have that ${\Spec(\rho_{A_1})=\Spec(\rho_{A_2})}$. Conversely, there exists ${\SSP\subset\CBS{n}}$ of arbitrarily large size greater than three with growing ${n\in\mathbb{N}}$ and ${A_1,A_2\subset [n]}$ with ${A_1\not\sim_{\SSP} A_2}$ such that ${\Spec(\rho_{A_1})=\Spec(\rho_{A_2})}$ for all  pure states $\rho$ in ${\MS{\SSP}}$.
\end{theorem}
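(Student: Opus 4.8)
The statement decomposes into a forward implication, which is a mild upgrade of Theorem~\ref{thm:suff_cond_equivalence_spectrum_subspace_main} to convex mixtures, and a converse, which is the substantive part. For the forward direction the plan is to reuse the bookkeeping device behind that theorem: for a nonempty $A\subsetneq[n]$ and $\ket\Psi=\sum_{\ket\psi\in\SSP}c_{\ket\psi}\ket\psi$, since a computational basis state is pinned down by its restrictions to $A$ and to $A^c$, grouping the $A$-basis by $\SSP/{\sim_A}$ and the $A^c$-basis by $\SSP/{\sim_{A^c}}$ shows that $\rho_A$ equals $\hat M_A\hat M_A^\dagger$ up to a relabeling of basis vectors and the insertion of zero rows and columns, where $\hat M_A$ is the matrix on $(\SSP/{\sim_A})\times(\SSP/{\sim_{A^c}})$ carrying $c_{\ket\psi}$ in cell $([\psi]_A,[\psi]_{A^c})$. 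Since this relabeling depends only on the two partitions, for $\rho=\sum_i p_i\ket{\Psi_i}\bra{\Psi_i}\in\MS{\SSP}$ one obtains $\rho_A=\sum_i p_i\hat M_A^{(i)}(\hat M_A^{(i)})^\dagger$ with the same fixed padding; then $A_1\sim_\SSP A_2$ with equal corresponding partitions gives $\hat M_{A_1}^{(i)}=\hat M_{A_2}^{(i)}$ for all $i$ and the spectra coincide, while the ``crossed'' case is reduced to it by transposing each $\hat M^{(i)}$ together with $\Spec(MM^\dagger)=\Spec(M^\dagger M)$ --- the one delicate point being that the crossed reduction passes through $\rho_{A_1^c}$.

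For the converse I would build $\SSP$ as a union of two pieces living on disjoint qubit blocks, glued so that $A_1$ and $A_2$ see one piece identically and the other piece with its two sides interchanged. Reserve tag qubits $t_1,t_2$ and disjoint blocks $Q_1,Q_2$. On $Q_1$ pick a small set $\SSP^{(1)}$ and $P_1\subset Q_1$ with $\SSP^{(1)}/{\sim_{P_1}}\neq\SSP^{(1)}/{\sim_{Q_1\setminus P_1}}$; on $Q_2$ pick $\SSP^{(2)}$ and $P_2\subset Q_2$ with $\SSP^{(2)}/{\sim_{P_2}}\neq\SSP^{(2)}/{\sim_{Q_2\setminus P_2}}$. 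Put $\SSP_1=\{\ket0_{t_1}\ket0_{t_2}\ket s_{Q_1}\ket{0\ldots0}_{Q_2}:\ket s\in\SSP^{(1)}\}$, $\SSP_2=\{\ket1_{t_1}\ket1_{t_2}\ket{0\ldots0}_{Q_1}\ket{s'}_{Q_2}:\ket{s'}\in\SSP^{(2)}\}$, $\SSP=\SSP_1\cup\SSP_2$, and $A_1=\{t_1\}\cup P_1\cup P_2$, $A_2=\{t_1\}\cup P_1\cup(Q_2\setminus P_2)$. Every element of $\SSP_1$ differs from every element of $\SSP_2$ on $t_1$ (so under $\sim_{A_1}$ and $\sim_{A_2}$) and on $t_2$ (so under $\sim_{A_1^c}$ and $\sim_{A_2^c}$), hence each of the four quotient sets splits as the relevant partition of $\SSP^{(1)}$ together with the relevant partition of $\SSP^{(2)}$; with $\alpha=\SSP^{(1)}/{\sim_{P_1}}$, $\alpha^*=\SSP^{(1)}/{\sim_{Q_1\setminus P_1}}$, $\beta=\SSP^{(2)}/{\sim_{P_2}}$, $\beta^*=\SSP^{(2)}/{\sim_{Q_2\setminus P_2}}$ this reads $\SSP/{\sim_{A_1}}=\alpha\sqcup\beta$, $\SSP/{\sim_{A_1^c}}=\alpha^*\sqcup\beta^*$, $\SSP/{\sim_{A_2}}=\alpha\sqcup\beta^*$, $\SSP/{\sim_{A_2^c}}=\alpha^*\sqcup\beta$, and since $\alpha\neq\alpha^*$, $\beta\neq\beta^*$ one checks $\{\alpha\sqcup\beta,\alpha^*\sqcup\beta^*\}\neq\{\alpha\sqcup\beta^*,\alpha^*\sqcup\beta\}$, i.e.\ $A_1\not\sim_\SSP A_2$.

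For the spectral equality I would invoke the coefficient-matrix picture again: $\hat M_{A_1}$ is block diagonal with a $Q_1$-block on $\alpha\times\alpha^*$ and a $Q_2$-block on $\beta\times\beta^*$, while $\hat M_{A_2}$ has the identical $Q_1$-block (the $Q_1$-parts of $A_1,A_2$ agree) and a $Q_2$-block on $\beta^*\times\beta$ that is cell-by-cell the transpose of the one in $\hat M_{A_1}$, since exchanging $P_2$ for $Q_2\setminus P_2$ exchanges the row- and column-class of each state of $\SSP_2$. Transposition preserves singular values, so $\hat M_{A_1}$ and $\hat M_{A_2}$ are cospectral for all coefficients and hence so are $\rho_{A_1}$ and $\rho_{A_2}$. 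A minimal instance is $\SSP=\{\ket{0000},\ket{0010},\ket{1100},\ket{1101}\}$, $A_1=\{1,3,4\}$, $A_2=\{1,3\}$, where $\Spec(\rho_{A_1})=\Spec(\rho_{A_2})=\{0,\,|c_1|^2+|c_2|^2,\,|c_3|^2+|c_4|^2\}$ for $\ket\Psi=\sum_k c_k\ket{\psi_k}$ while the four quotient sets are exactly as above; taking $\SSP^{(2)}$ to be any $K\geq 2$ distinct strings on an $m$-qubit block $Q_2$ that share a common bit at $P_2$ keeps all of the above valid and yields $|\SSP|=2+K$ arbitrarily large with $n=3+m\to\infty$, which gives the claimed family.

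The crux --- and the reason the construction must take this shape --- is that ``equal singular values for every coefficient vector'' is a very rigid requirement: on each connected component of the quotient bipartite graph it forces the $A_1$- and $A_2$-partitions to coincide up to a swap, i.e.\ local $\sim_\SSP$-equivalence. Global inequivalence is therefore only possible if that graph is disconnected and the pieces are glued inconsistently (aligned on one component, crossed on another), which is precisely what the tag qubits $t_1,t_2$ engineer. I expect the fiddly part to be verifying the clean splitting of the four quotient sets and confirming that they genuinely cannot be matched as $\sim_\SSP$ demands.
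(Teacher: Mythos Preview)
Your forward direction is the paper's argument recast in matrix language: the paper writes $\rho_{A_1}=\sum_{i,j} d_{i,j}\ket{\phi_i}_{A_1}\bra{\phi_j}_{A_1}$ with $d_{i,j}$ depending only on the pair $(\SSP/{\sim_{A_1}},\SSP/{\sim_{A_1^c}})$ (Lemma~\ref{lem:equal_spectrum} and Eq.~\eqref{eq:reduced_density_matrix_formula}), and your $\hat M_A\hat M_A^\dagger$ packages the same data. One caveat: the ``delicate point'' you flag is genuinely problematic. The identity $\Spec(MM^\dagger)=\Spec(M^\dagger M)$ does \emph{not} survive convex mixing --- in general $\Spec\big(\sum_i p_i M_iM_i^\dagger\big)\neq\Spec\big(\sum_i p_i M_i^\dagger M_i\big)$ --- so your crossed-case reduction fails for mixed states. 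Concretely, with $\SSP=\{\ket{00},\ket{01}\}$, $A_1=\{1\}$, $A_2=A_1^c=\{2\}$ (so $A_1\sim_\SSP A_2$ trivially) and $\rho=\tfrac12\ketbra{00}{00}+\tfrac12\ketbra{01}{01}\in\MS{\SSP}$ one gets $\Spec(\rho_{A_1})=\{1,0\}\neq\{\tfrac12,\tfrac12\}=\Spec(\rho_{A_2})$. The paper's proof sidesteps this with a bare ``without loss of generality'' that carries the same gap, so this is an issue with the theorem as stated rather than with your strategy.

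For the converse you take a genuinely different route. The paper isolates a structural template (Lemma~\ref{lemma:notEquivalentBipsEqualSpec}: $\SSP/{\sim_{A_1}}=\{B,D\}$, $\SSP/{\sim_{A_1^c}}$ all singletons, $\SSP/{\sim_{A_2}}=\{B\}\cup\{\text{singletons of }D\}$, $\SSP/{\sim_{A_2^c}}=\{\text{singletons of }B\}\cup\{D\}$ with $\abs{D}=2$), verifies by a direct $2\times 2$ versus $3\times 3$ computation that the spectra coincide, and then realises the template via an explicit recursive family $\SSP_n\subset\CBS{n}$ with $\abs{\SSP_n}=n$, $A_1=\{1\}$, $A_2=\{2,3\}$. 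Your two-block construction with tag qubits --- $A_1,A_2$ agreeing on one block and swapped on the other --- is more conceptual: it identifies the mechanism (block-diagonal $\hat M$ with one block transposed) before writing down states. When you specialise to $\abs{\SSP^{(1)}}=2$ and $\SSP^{(2)}$ sharing a bit at $P_2$, your four quotient sets are exactly the paper's template (with $D=\SSP_1$, $B=\SSP_2$, up to relabelling $A_1\leftrightarrow A_2$), so the two constructions produce isomorphic families; what your framing adds is the ``crux'' explanation of why the example must look this way, which the paper does not attempt. One minor inconsistency: your minimal $n=4$ instance has $P_2=Q_2=\{4\}$, so $\SSP^{(2)}=\{\ket0,\ket1\}$ cannot ``share a common bit at $P_2$'' as your general recipe requires; the example is still correct (it is the $\beta\leftrightarrow\beta^*$ variant), but the recipe should be phrased a bit more carefully.
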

In order to prove Theorem~\ref{thm:equivalence_spectrum_sim_subspace}, we need the following two lemmas.
\begin{lemma}
    \label{lem:equal_spectrum}
    Let ${A_1,A_2\subset\set{1,\ldots,n}}$ be two nonempty sets. Furthermore, let ${S\subset \CBS{n}}$ such that for all
    $$
    \forall \ket{\psi},\ket{\phi}\in S\colon \ket{\psi_{A_1}}=\ket{\phi_{A_1}}\Longleftrightarrow \ket{\psi_{A_2}}=\ket{\phi_{A_2}}.
    $$
    Moreover, let ${d_{\ket{\psi},\ket{\phi}}\in\C}$ for ${\ket{\psi},\ket{\phi}\in S}$. Then the spectrum of the operators
    \begin{equation*}
    \rho_1\coloneqq\sum_{\ket{\psi},\ket{\phi}\in S}d_{\ket{\psi},\ket{\phi}}\ket{\psi_{A_1}}\bra{\phi_{A_1}}
    \end{equation*}
    and
    \begin{equation*}
        \rho_2\coloneqq\sum_{\ket{\psi},\ket{\phi}\in S}d_{\ket{\psi},\ket{\phi}}\ket{\psi_{A_2}}\bra{\phi_{A_2}}
    \end{equation*}
    are equal.
\end{lemma}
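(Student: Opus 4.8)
The plan is to notice that the biconditional in the hypothesis says exactly that the equivalence relations $\sim_{A_1}$ and $\sim_{A_2}$ of Def.~\ref{def:equvalenzPartition} coincide when restricted to $S$, so they induce one and the same partition ${S=C_1\sqcup\dots\sqcup C_k}$. For each block $C_j$ I would fix the common $A_1$-restriction ${\ket{\alpha_j}\coloneqq\ket{\psi_{A_1}}}$ and the common $A_2$-restriction ${\ket{\beta_j}\coloneqq\ket{\psi_{A_2}}}$ for an arbitrary ${\ket{\psi}\in C_j}$; both are well defined precisely because $C_j$ is simultaneously a block of $\sim_{A_1}$ and of $\sim_{A_2}$. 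Since distinct blocks of $\sim_{A_1}$ have, by definition of that relation, distinct $A_1$-restrictions, and these restrictions are computational basis vectors of the $\abs{A_1}$-qubit space, the family ${\{\ket{\alpha_j}\}_{j=1}^{k}}$ is orthonormal; the identical argument for $A_2$ shows ${\{\ket{\beta_j}\}_{j=1}^{k}}$ is orthonormal of the same cardinality $k$.

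Next I would regroup the two defining double sums according to the common partition. Setting ${D_{j,j'}\coloneqq\sum_{\ket{\psi}\in C_j}\sum_{\ket{\phi}\in C_{j'}}d_{\ket{\psi},\ket{\phi}}}$ and using ${\ket{\psi_{A_1}}=\ket{\alpha_j}}$ for every ${\ket{\psi}\in C_j}$ (and likewise for $\phi$), one obtains
\begin{equation*}
\rho_1=\sum_{j,j'=1}^{k}D_{j,j'}\,\ket{\alpha_j}\bra{\alpha_{j'}},\qquad \rho_2=\sum_{j,j'=1}^{k}D_{j,j'}\,\ket{\beta_j}\bra{\beta_{j'}}.
\end{equation*}
The point to stress is that the $k\times k$ matrix ${D=(D_{j,j'})}$ is literally the same in both identities, because it depends only on the coefficients $d_{\ket{\psi},\ket{\phi}}$ and on the partition ${\{C_1,\dots,C_k\}}$, both of which are shared by $A_1$ and $A_2$.

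Finally, I would read off the spectrum. Completing ${\{\ket{\alpha_j}\}_j}$ (resp.\ ${\{\ket{\beta_j}\}_j}$) to an orthonormal basis of the $\abs{A_1}$-qubit (resp.\ $\abs{A_2}$-qubit) Hilbert space, the matrices of $\rho_1$ and $\rho_2$ in these bases are both the block-diagonal matrix ${\mathrm{diag}(D,0)}$; equivalently, the partial isometry ${\ket{\alpha_j}\mapsto\ket{\beta_j}}$ conjugates the restriction of $\rho_1$ to ${\mathrm{span}\{\ket{\alpha_j}\}}$ onto the restriction of $\rho_2$ to ${\mathrm{span}\{\ket{\beta_j}\}}$, while both operators annihilate the respective orthogonal complements. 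Hence the characteristic polynomials of $\rho_1$ and $\rho_2$ agree up to a power of the variable, so the eigenvalues of the two operators coincide — indeed the nonzero ones coincide with multiplicities — which is the claimed equality ${\Spec(\rho_1)=\Spec(\rho_2)}$.

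I do not expect a genuine obstacle: the only points needing care are making the finite regrouping of the double sums fully precise, and observing that the ambient dimensions ${2^{\abs{A_1}}}$ and ${2^{\abs{A_2}}}$ may differ, so that the asserted match should be read — consistently with the way the entanglement spectrum is used throughout — as a match of the (nonzero) eigenvalue data rather than of zero-padded lists of fixed length. This lemma is precisely the mechanism behind Theorem~\ref{thm:equivalence_spectrum_sim_subspace}: once ${A_1\sim_{\SSP}A_2}$ is unwound into the statements that $\sim_{A_1}$ and $\sim_{A_2}$ agree on $\SSP$ and that $\sim_{A_1^c}$ and $\sim_{A_2^c}$ agree on $\SSP$, the reduced density matrices ${\rho_{A_1},\rho_{A_2}}$ of any ${\rho\in\MS{\SSP}}$ are exactly of the form ${\rho_1,\rho_2}$ above, with a common coefficient array ${d_{\ket{\psi},\ket{\phi}}}$ built from the matrix elements of $\rho$ and the indicator that $\psi$ and $\phi$ agree on the traced-out subsystem.
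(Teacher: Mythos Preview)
Your proposal is correct and follows essentially the same route as the paper: identify the common partition of $S$ induced by $\sim_{A_1}$ and $\sim_{A_2}$, regroup both double sums to exhibit the same $k\times k$ coefficient matrix $D$ (the paper's $M$), extend the orthonormal families to full bases, and read off that both operators have matrix $\mathrm{diag}(D,0)$ in those bases. Your explicit remark that the ambient dimensions $2^{\abs{A_1}}$ and $2^{\abs{A_2}}$ may differ, so the equality of spectra should be read modulo zero-padding, is a point the paper leaves implicit.
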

\begin{proof}
    In the following we choose ${\ket{\psi_i}\in S}$ for ${i=1,\ldots,m}$ such that
    $$
S/{\sim_{A_1}}=\set{[\ket{\psi_1}]_{A_1},\ldots,[\ket{\psi_m}]_{A_1}}
    $$
where ${m\coloneqq \abs{S/{\sim_{A_1}}}}$.
    \begin{enumerate}[label=(\roman*),wide=\parindent,leftmargin=0pt,align=left]
        \item \label{item:proof_spec_rel_step_1} First, we observe that
        \begin{align}
            \rho_1&=\sum_{1\leq i,j\leq m}\sum_{\substack{\ket{\psi}\in [\ket{\psi_i}]_{A_1}\\\ket{\phi}\in[\ket{\psi_j}]_{A_1}}}d_{\ket{\psi},\ket{\phi}}\ket{\psi_{A_1}}\bra{\phi_{A_1}}\nonumber\\
            &=\sum_{1\leq i,j\leq m}d_{i,j}\ket{{\psi_i}}_{A_1}\bra{{\phi_j}}_{A_1}\label{eq:matrix_formula_spec_proof},
        \end{align}
        where
        \begin{equation*}
            d_{i,j}\coloneqq\sum_{\substack{\ket{\psi}\in [\ket{\psi_i}]_{A_1}\\ \ket{\phi}\in[\ket{\psi_j}]_{A_1}}}d_{\ket{\psi},\ket{\phi}}
        \end{equation*}
        for ${1\leq i,j\leq m}$.
        \item \label{item:proof_spec_rel_step_2} Since ${\ket{{\psi_1}}_{A_1},\ldots,\ket{{\psi_m}}_{A_1}}$ are linearly independent in ${\CBS{\abs{A_1}}}$, there exist ${\ket{\psi_{m+1}},\ldots,\ket{\psi_{2^{\abs{A_1}}}}\in \CBS{{\abs{A_1}}}}$ such that \begin{equation}
            \label{eq:basis_spec_proof}
            \set{\ket{{\psi_1}}_{A_1},\ldots,\ket{{\psi_m}}_{A_1},\ket{\psi_{m+1}},\ldots,\ket{\psi_{2^{\abs{A_1}}}}}
        \end{equation}
        constitute an orthonormal basis in $\CBS{{\abs{A_1}}}$. From~\eqref{eq:matrix_formula_spec_proof} it can be easily deduced that
        \begin{equation*}
            \begin{pmatrix}
                M & 0\\
                0 & 0
            \end{pmatrix}\in \mathbb{C}^{2^{\abs{A_1}}\times 2^{\abs{A_1}}},\ M\coloneqq                 \begin{pmatrix}
d_{1,1} & d_{1,2} & \cdots & d_{1,m} \\
d_{2,1} & d_{2,2} & \cdots & d_{2,m} \\
\vdots  & \vdots  & \ddots & \vdots  \\
d_{m,1} & d_{m,2} & \cdots & d_{m,m}
                \end{pmatrix},
        \end{equation*}
        is the matrix of $\rho_1$ with respect to the basis~\eqref{eq:basis_spec_proof}. Therefore
        \begin{equation}
            \label{eq:spec_rel_spec_proof}
            \mathrm{Spec}(\rho_1)=\mathrm{Spec}(M).
        \end{equation}
        \item Finally, by our assumption, we observe that ${S/{\sim_{A_1}}=S/{\sim_{A_2}}}$ and hence, by repeating the steps~\ref{item:proof_spec_rel_step_1} and~\ref{item:proof_spec_rel_step_2} where $A_1$ is replaced by $A_2$, we deduce again ${\mathrm{Spec}(\rho_2)=\mathrm{Spec}(M)}$, which together with~\eqref{eq:spec_rel_spec_proof} proves the desired statement.\qedhere
    \end{enumerate}
\end{proof}
\begin{lemma}
    \label{lemma:notEquivalentBipsEqualSpec}
Suppose ${\SSP\subset \CBS{n}}$ and ${A_1, A_2\in\mathcal{P}([n])}$ such that
\begin{equation}
    \label{eq:rel_quotient_set_not_sim}
    \begin{aligned}
        \SSP/{\sim_{A_1}} &= \set{B, D},\\
    \SSP/{\sim_{A_1^c}} &= \bigcup_{\ket{\psi}\in B\cup D}\set{\set{\ket{\psi}}},\\ \SSP/{\sim_{A_2}} &= \set{B}\cup \big(\bigcup_{\ket{\psi}\in D}\set{\set{\ket{\psi}}}\big),\\
    \SSP/{\sim_{A_2^c}} &= \big(\bigcup_{\ket{\psi}\in B}\set{\set{\ket{\psi}}}\big)\cup\set{D},
    \end{aligned}
\end{equation}
where ${B,D\subset \SSP}$ with $B\cap D=\emptyset$ and ${\abs{D}=2}$.
Then ${A_1\not\sim_{\SSP} A_2}$ and for all $\ket{\Psi}\in\QS{\SSP}$ the entanglement spectra of the pure state $\ket{\Psi}$ with respect to  $A_1$ and $A_2$ are equal. The size of $\SSP$ is limited by
\begin{equation}
\label{eq:limitedSizeR}
    4\leq\abs{\SSP} \leq 2^{n - \abs{A_1\cup A_2}} + 2.
\end{equation}
\end{lemma}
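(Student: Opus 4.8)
The plan is to unpack the hypotheses and then prove the three assertions (non-equivalence, equality of spectra, and the size bounds) in that order. Since $\SSP/{\sim_{A_1}}=\set{B,D}$ is a partition of $\SSP$, we have $\SSP=B\cup D$ with $B\cap D=\emptyset$, so $\abs{\SSP}=\abs{B}+\abs{D}=\abs{B}+2$. Write $D=\set{\ket{d_1},\ket{d_2}}$. Because $B$ is a single $\sim_{A_1}$-class and also a single $\sim_{A_2}$-class, all states of $B$ share one $A_1$-restriction, call it $\ket{\beta}$, and one $A_2$-restriction $\ket{\beta'}$; similarly all states of $D$ share one $A_1$-restriction $\ket{\delta}$, while $\ket{d_1}$ and $\ket{d_2}$ have distinct $A_2$-restrictions $\ket{d_1}_{A_2}\neq\ket{d_2}_{A_2}$ since $D$ splits under $\sim_{A_2}$. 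Fix $\ket{\Psi}=\sum_{\ket{\psi}\in\SSP}c_{\ket{\psi}}\ket{\psi}\in\QS{\SSP}$ and put $p_B\coloneqq\sum_{\ket{\psi}\in B}\abs{c_{\ket{\psi}}}^2$ and $p_D\coloneqq\abs{c_{\ket{d_1}}}^2+\abs{c_{\ket{d_2}}}^2$, so that $p_B+p_D=1$. I note that the displayed hypotheses do not by themselves exclude $\abs{B}=1$, but in that case $\SSP/{\sim_{A_1}}=\SSP/{\sim_{A_2^c}}$ and $\SSP/{\sim_{A_1^c}}=\SSP/{\sim_{A_2}}$, forcing $A_1\sim_{\SSP} A_2$; since the lemma asserts $A_1\not\sim_{\SSP} A_2$, the relevant case is $\abs{B}\ge 2$, which I assume henceforth and which already yields the lower bound $\abs{\SSP}=\abs{B}+2\ge 4$.

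For $A_1\not\sim_{\SSP} A_2$, I would compare the partitions through their block sizes. Every block of $\SSP/{\sim_{A_1}}=\set{B,D}$ has cardinality $\abs{B}\ge 2$ or $\abs{D}=2$, so $\SSP/{\sim_{A_1}}$ contains no singleton block; by contrast $\SSP/{\sim_{A_2}}$ contains the singleton $\set{\ket{d_1}}$ and $\SSP/{\sim_{A_2^c}}$ contains the singleton $\set{\ket{\psi}}$ for any $\ket{\psi}\in B$. Hence $\SSP/{\sim_{A_1}}$ equals neither $\SSP/{\sim_{A_2}}$ nor $\SSP/{\sim_{A_2^c}}$, so $\set{\SSP/{\sim_{A_1}},\SSP/{\sim_{A_1^c}}}\neq\set{\SSP/{\sim_{A_2}},\SSP/{\sim_{A_2^c}}}$, which by Def.~\ref{def:equivalenzrelation} is exactly $A_1\not\sim_{\SSP} A_2$.

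For the equality of spectra, the plan is a direct evaluation of the partial trace. Orthonormality of the computational basis gives, for any non-empty $A$ and the fixed $\ket{\Psi}$, that $\rho_A=\sum_{\ket{\psi}\sim_{A^c}\ket{\phi}}c_{\ket{\psi}}\overline{c_{\ket{\phi}}}\,\ket{\psi}_A\bra{\phi}_A$. Since $\SSP/{\sim_{A_1^c}}$ is the discrete partition, $\ket{\psi}\sim_{A_1^c}\ket{\phi}$ forces $\ket{\psi}=\ket{\phi}$, hence $\rho_{A_1}=\sum_{\ket{\psi}\in\SSP}\abs{c_{\ket{\psi}}}^2\ket{\psi}_{A_1}\bra{\psi}_{A_1}=p_B\ket{\beta}\bra{\beta}+p_D\ket{\delta}\bra{\delta}$; as $B\neq D$ as $\sim_{A_1}$-classes the vectors $\ket{\beta},\ket{\delta}$ are orthogonal, so the non-zero eigenvalues of $\rho_{A_1}$ are $p_B$ and $p_D$. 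For $A_2$, the only surviving off-diagonal terms are those between $\ket{d_1}$ and $\ket{d_2}$ (since $D$ is a single $\sim_{A_2^c}$-class whereas any two other distinct states of $\SSP$ lie in different $\sim_{A_2^c}$-classes), so $\rho_{A_2}=p_B\ket{\beta'}\bra{\beta'}+\ket{w}\bra{w}$ with $\ket{w}\coloneqq c_{\ket{d_1}}\ket{d_1}_{A_2}+c_{\ket{d_2}}\ket{d_2}_{A_2}$. Because $\ket{d_1}_{A_2},\ket{d_2}_{A_2}$ are orthonormal and both orthogonal to $\ket{\beta'}$ (they index distinct blocks of $\SSP/{\sim_{A_2}}$), $\ket{w}\bra{w}$ is a rank-one operator orthogonal to $\ket{\beta'}\bra{\beta'}$ whose single non-zero eigenvalue is $\abs{c_{\ket{d_1}}}^2+\abs{c_{\ket{d_2}}}^2=p_D$; hence the non-zero eigenvalues of $\rho_{A_2}$ are again $p_B$ and $p_D$, and therefore $\Spec(\rho_{A_1})=\Spec(\rho_{A_2})$.

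For the upper bound on $\abs{\SSP}$, all states of $B$ share the restriction $\ket{\beta}$ to $A_1$ and $\ket{\beta'}$ to $A_2$, hence a common restriction to $A_1\cup A_2$; the number of computational basis states of $\CBS{n}$ with a prescribed restriction to $A_1\cup A_2$ is $2^{n-\abs{A_1\cup A_2}}$, so $\abs{B}\le 2^{n-\abs{A_1\cup A_2}}$ and thus $\abs{\SSP}=\abs{B}+2\le 2^{n-\abs{A_1\cup A_2}}+2$ (in particular $A_1\cup A_2\neq[n]$, consistent with $\abs{B}\ge 2$). I expect the only step requiring care to be the computation of $\rho_{A_2}$: one must recognise that the coherent contribution of $D$, which is present precisely because $D$ forms a single $\sim_{A_2^c}$-class, collapses to a rank-one operator whose sole non-zero eigenvalue $p_D$ matches the incoherent contribution $p_D\ket{\delta}\bra{\delta}$ that $D$ makes to $\rho_{A_1}$, together with checking that the restricted basis vectors $\ket{\beta},\ket{\delta}$ (respectively $\ket{\beta'},\ket{d_1}_{A_2},\ket{d_2}_{A_2}$) are pairwise orthogonal because they index distinct blocks of the relevant partition; everything else is routine bookkeeping.
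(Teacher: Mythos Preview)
Your proof is correct and follows essentially the same approach as the paper: both compute $\rho_{A_1}$ and $\rho_{A_2}$ explicitly from the prescribed quotient-set structure and read off the common nonzero eigenvalues $p_B,p_D$, and both obtain the upper bound by noting that all states of $B$ share a common $(A_1\cup A_2)$-restriction. Your presentation of $\rho_{A_2}=p_B\ket{\beta'}\bra{\beta'}+\ket{w}\bra{w}$ as a sum of orthogonal rank-one pieces is the same computation as the paper's explicit $3\times 3$ matrix $M_2$, just packaged differently.

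One point worth noting: you explicitly prove $A_1\not\sim_{\SSP}A_2$ via a block-size comparison, which the paper does not do separately; the paper's part (ii) shows that $\abs{B}\le 1$ forces $A_1\sim_{\SSP}A_2$ and then states ``which shows the desired statement,'' so both proofs share the same logical awkwardness around the $\abs{B}=1$ edge case that you flag. Your handling (``the lemma asserts $A_1\not\sim_{\SSP}A_2$, so the relevant case is $\abs{B}\ge 2$'') is no more circular than the paper's, and you are right that the displayed hypotheses alone do not exclude $\abs{B}=1$; in the paper's sole application of this lemma (the counterexample construction in Theorem~\ref{thm:equivalence_spectrum_sim_subspace}) one always has $\abs{B}\ge 2$, so the issue is harmless in context.
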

\begin{proof}
    \begin{enumerate}[label=(\roman*),wide=\parindent,leftmargin=0pt,align=left]
        \item First, we show that the spectra are equal. In the following we denote by $\ket{\psi_1}$ some element in $B$ and by $\ket{\psi_2}$ and $\ket{\psi_3}$ the two distinct elements in $D$. Moreover, let ${\ket{\Psi}\in \QS{\SSP}}$ with
        $$
        \ket{\Psi}=\sum_{\ket{\psi}\in\SSP} c_{\ket{\psi}}\ket{\psi}
        $$
        with ${c_{\ket{\psi}}\in\C}$ for ${\ket{\psi}\in\SSP}$. Using Eq.~\eqref{eq:reduced_density_matrix_formula} and the relations of the quotients sets as depicted in~\eqref{eq:rel_quotient_set_not_sim}, it can be deduced that reduced density matrices of the subsystems $A_1$ and $A_2$ correspond to
            \begin{equation*}
        \rho_{1} =\begin{pmatrix}
                M_1 & 0\\
                0 & 0
            \end{pmatrix}\in \mathbb{C}^{2^{\abs{A_1}}\times 2^{\abs{A_1}}}
    \end{equation*}
    and
    \begin{equation*}
        \rho_{2}\coloneqq\begin{pmatrix}
                M_2 & 0\\
                0 & 0
            \end{pmatrix}\in \mathbb{C}^{2^{\abs{A_2}}\times 2^{\abs{A_2}}},
    \end{equation*}
    where
    \begin{equation*}
        M_1\coloneqq
\begin{pmatrix}
\sum_{\ket{\psi}\in B}\abs{c_{\ket{\psi}}}^2 & 0 \\
0 & \abs{c_{\ket{\psi_2}}}^2+\abs{c_{\ket{\psi_3}}}^2
\end{pmatrix}
    \end{equation*}
    and
    \begin{equation*}
       M_2=\begin{pmatrix}
\sum_{\ket{\psi}\in B}\abs{c_{\ket{\psi}}}^2 & 0 & 0 \\
0 & \abs{c_{\ket{\psi_2}}}^2 & c_{\ket{\psi_2}}c_{\ket{\psi_3}}^* \\
0 & c_{\ket{\psi_2}}^*c_{\ket{\psi_3}} & \abs{c_{\ket{\psi_3}}}^2 
\end{pmatrix},
    \end{equation*}
    where have chosen orthonormal bases such that the first two basis vectors in  $\mathbb{C}^{2^{\abs{A_1}}}$ are ${\ket{\psi_{1}}_{A_1}}$ and ${\ket{\psi_{2}}_{A_1}}$, and the first three basis vectors in $\mathbb{C}^{2^{\abs{A_2}}}$ are ${\ket{\psi_{1}}_{A_2}}$, ${\ket{\psi_{2}}_{A_2}}$ and ${\ket{\psi_{3}}_{A_2}}$. Thus, we see that the spectra of both matrices equal $$\set{0,\sum_{\ket{\psi}\in B}\abs{c_{\ket{\psi}}}^2,\abs{c_{\ket{\psi_2}}}^2+\abs{c_{\ket{\psi_3}}}^2}.$$
    \item Next, we prove the bounds in Eq.~\eqref{eq:limitedSizeR}. As a first step, we derive the upper bound in Eq.~\eqref{eq:limitedSizeR}. Without loss of generality, suppose ${B\neq\emptyset}$ and choose ${\ket{\psi}\in B}$. Thus, we have for all ${\ket{\phi}\in B}$ that ${\ket{\phi}_{A_1}=\ket{\psi}_{A_1}}$ and ${\ket{\phi}_{A_2}=\ket{\psi}_{A_2}}$, and therefore, ${B\subset\set{\ket{\phi}\in \CBS{n}\mid \ket{\phi}_{A_1\cup A_2}=\ket{\psi}_{A_1\cup A_2}}}$. This implies ${\abs{B}\leq 2^{n-\abs{A_1\cup A_2}}}$, which, by using ${\abs{D}=2}$, shows the upper bound in Eq.~\eqref{eq:limitedSizeR}.
            Now, we prove the lower bound of $\abs{\SSP}$. For ${\abs{B}=\emptyset}$ we deduce
\begin{align*}
    \SSP/{\sim_{A_2}} &= \SSP/{\sim_{A_1^c}} = \bigcup_{\ket{\psi}\in D}\set{\set{\ket{\psi}}},\\
    \SSP/{\sim_{A_2^c}} &= \SSP/{\sim_{A_1}} = \set{D},
\end{align*}
and hence ${A_1\sim_{\SSP}A_2}$. For ${\abs{B}=1}$ we similarly deduce
\begin{align*}
    \SSP/{\sim_{A_2}} &= \SSP/{\sim_{A_1^c}} = \set{B}\cup \bigcup_{\ket{\psi}\in D}\set{\set{\ket{\psi}}},\\
    \SSP/{\sim_{A_2^c}} &= \SSP/{\sim_{A_1}} = \set{B, D},
\end{align*}
which shows the desired statement.\qedhere
\end{enumerate}
\end{proof}
\begin{proof}[Proof of Theorem~\ref{thm:equivalence_spectrum_sim_subspace}]
    \begin{enumerate}[label=(\roman*),wide=\parindent,leftmargin=0pt,align=left]
    \item 
    Without loss of generality we may assume that ${\SSP/{\sim_{A_1}}=\SSP/{\sim_{A_2}}}$. Now, let ${\ket{\Psi}\in\QS{\SSP}}$ and ${c_{\ket{\psi}}\in\C}$ for ${\ket{\psi}\in\SSP}$ such that
    \begin{equation*}
        \ket{\Psi}=\sum_{\ket{\psi}\in\SSP}c_{\ket{\psi}}\ket{\psi}.
    \end{equation*}
    Furthermore, we choose ${\ket{\psi_i}\in \SSP}$ for ${i=1,\ldots,m}$ such that ${\SSP/{\sim_{A_1}}=\set{[\ket{\psi_i}]_{A_1},\ldots,[\ket{\psi_m}]_{A_1}}}$ where ${m\coloneqq \abs{\SSP/{\sim_{A_1}}}}$. Using that ${\SSP/{\sim_{A_1}}}$ is a partition of $\SSP$, we observe that
    \begin{equation*}
        \ket{\Psi}=\sum_{i=1}^m\sum_{\ket{\phi}\in[\ket{\psi_i}]_{A_1}}c_{\ket{\phi}}\ket{\phi},
    \end{equation*}
    Thus, we obtain
    \begin{equation*}
        \rho=\sum_{i,j=1}^m\sum_{\substack{\ket{\phi}\in [\ket{\psi_i}]_{A_1}\\\ket{\chi}\in[\ket{\psi_j}]_{A_1}}}c_{\ket{\phi}}c_{\ket{\chi}}^*\ket{\phi}\bra{\chi}.
    \end{equation*}
    Next, we apply the properties of the partial trace to deduce
    \begin{align}
        \rho_{A_1}&=\sum_{i,j=1}^m\sum_{\substack{\ket{\phi}\in [\ket{\psi_i}]_{A_1}\\\ket{\chi}\in[\ket{\psi_j}]_{A_1}}}c_{\ket{\phi}}c_{\ket{\chi}}^*\braket{\phi_{A_1^c}}{\chi_{A_1^c}}\ket{\phi_{A_1}}\bra{\chi_{A_1}}\nonumber\\
        &=\sum_{i,j=1}^m\Bigg(\sum_{\substack{\ket{\phi}\in [\ket{\psi_i}]_{A_1}\\\ket{\chi}\in[\ket{\psi_j}]_{A_1}}}c_{\ket{\phi}}c_{\ket{\chi}}^*\braket{\phi_{A_1^c}}{\chi_{A_1^c}}\Bigg)\ket{{\phi_i}}_{A_1}{\bra{\phi_j}_{A_1}}\nonumber\\
        &=\sum_{i,j=1}^m d_{i,j}\ket{{\phi_i}}_{A_1}{\bra{\phi_j}}_{A_1},\label{eq:reduced_density_matrix_formula}
    \end{align}
    where
    \begin{equation}
    \label{eq:coefficientsRedDens}
    d_{i,j}\coloneqq \sum_{\substack{\ket{\phi}\in [\ket{\psi_i}]_{A_1}\\\ket{\chi}\in[\ket{\psi_j}]_{A_1}}} c_{\ket{\phi}}c_{\ket{\chi}}^*\braket{\phi_{A_1^c}}{\chi_{A_1^c}}.
    \end{equation}
    Since $A_1\sim_{\SSP} A_2$ we have that $\braket{\phi_{A_1^c}}{\chi_{A_1^c}}=0$ if and only if $\braket{\phi_{A_2^c}}{\chi_{A_2^c}}=0$ for $\ket{\phi}\in [\ket{\psi_i}]_{A_1}$ and $\ket{\chi}\in [\ket{\psi_j}]_{A_1}$. This implies that $$\rho_{A_2}=\sum_{i,j=1}^m d_{i,j}\ket{{\phi_i}}_{A_2}\bra{{\phi_j}}_{A_2}$$ which together with Lemma~\ref{lem:equal_spectrum} shows the desired statement for pure states. Note that the logic of the proof for mixed states remains identical. As the partial trace is linear, the only difference is that calculating the coefficients $d_{i,j}$ in Eq.~\eqref{eq:coefficientsRedDens} requires an additional sum over all components of the mixed state.
        \item To prove the second statement, we first define
    \begin{align*}
        \ket{\psi_{4,1}}\coloneqq\ket{0000},\quad\ket{\psi_{4,2}}\coloneqq\ket{0001},\\
        \ket{\psi_{4,3}}\coloneqq\ket{1010},\quad\ket{\psi_{4,4}}\coloneqq\ket{1100}
    \end{align*}
    as well as ${\SSP_4\coloneqq\{\ket{\psi_{4,1}},\ket{\psi_{4,2}},\ket{\psi_{4,3}},\ket{\psi_{4,4}}\}\subset\CBS{4}}$. For ${n>4}$, we denote by $\ket{\psi_{n,1}}$ the state in $\CBS{n}$ which satisfies ${\ket{\psi_{n,1}}_{\set{1,\ldots,n-1}}=\ket{0}^{\otimes(n-1)}}$ and ${\ket{\psi_{n,1}}_{\set{n}}=\ket{1}}$. Furthermore, we recursively set ${\ket{\psi_{n,i}}}$ as the state in $\CBS{n}$ defined by $$\ket{\psi_{n,i}}_{\set{1,\ldots,n-1}}=\ket{\psi_{n-1,i-1}}\quad\text{and}\quad \ket{\psi_{n,i}}_{\set{n}}=\ket{0}$$ for ${i=2,\ldots,n}$ and $${\SSP_n\coloneqq\{\ket{\psi_{n,1}},\ket{\psi_{n,2}},\ldots,\ket{\psi_{n,n}}\}\subset\CBS{n}}.$$ Next, we show for each ${n\geq 4}$ that ${A_1\coloneqq\set{1}}$ and ${A_2\coloneqq\set{2,3}}$ are the right candidates for proving our statement. By Lemma~\ref{lemma:notEquivalentBipsEqualSpec}, we are left to show that
        \begin{align*}
            &\SSP_n/{\sim_{A_1}}=\set{\set{\ket{\psi_{n,1}},\ldots,\ket{\psi_{n,n-2}}},\set{\ket{\psi_{n,n-1}},\ket{\psi_{n,n}}}},\\
            &\SSP_n/{\sim_{A_1^c}}=\set{\set{\ket{\psi_{n,1}}},\ldots,\set{\ket{\psi_{n,n}}}},\\
            &\SSP_n/{\sim_{A_2}}=\set{\set{\ket{\psi_{n,1}},\ldots,\ket{\psi_{n,n-2}}},\set{\ket{\psi_{n,n-1}}},\set{\ket{\psi_{n,n}}}},\\
            &\SSP_n/{\sim_{A_2^c}}=\set{\set{\ket{\psi_{n,1}}},\ldots,\set{\ket{\psi_{n,n-2}}},\set{\ket{\psi_{n,n-1}},\ket{\psi_{n,n}}}}.
        \end{align*}
        For $n=4$, the relation can be easily verified. Now, suppose that the statement holds for some $n\geq 4$. By the recursive definition of states ${\ket{\psi_{n,1}},\ldots,\ket{\psi_{n,n}}\in\SSP_n}$, we observe that
        \begin{equation*}
            \ket{\psi_{n+1,i}}\sim_{B}\ket{\psi_{n+1,j}}\Longleftrightarrow\ket{\psi_{n,i-1}}\sim_{B\setminus\set{n+1}}\ket{\psi_{n,j-1}}
        \end{equation*}
        for all ${i=2,\ldots,n+1}$ and ${B\in\set{A_1,A_1^c,A_2,A_2^c}}$. Thus, by using our induction hypothesis and the definition $\ket{\psi_{n,1}}$, the claim for $n+1$ follows.\qedhere
    \end{enumerate}
\end{proof}

\subsection{Applications: Proofs and further details\label{subsection:methods:applicationEmbedding}}
This section presents proofs of the claims made in Sec.~\ref{subsec:appl_embed-based_quantum_opt} along with further details on the embeddings discussed there. Moreover, we extend these results and present additional theoretical results.
\subsubsection{Proof of Theorem~\ref{theorem:equivalenzOperatorbipartitions}}
In order to prove Theorem~\ref{theorem:equivalenzOperatorbipartitions}, we introduce the following notation: For a nonempty set ${A\subset [n]}$ we set ${\spos_0 \coloneqq\mathcal{O}_{A}}$. Then we choose ${\ket{\psi_0}\in\SSP}$
and for each ${q\in\mathbb{N}}$
$$
P_q\in \posG{\transOs}\setminus\bigcup_{i=0}^{q-1}\spos_i
$$
as long as ${\posG{\transOs}\setminus\bigcup_{i=0}^{q-1}\spos_i\neq\emptyset}$ and define ${\spos_q\coloneqq\set{P_qP\mid P\in\spos_0}}$. Let $m$ be the largest ${q\in\mathbb{N}}$ such that ${\posG{\transOs}\setminus\bigcup_{i=0}^{q-1}\spos_i\neq\emptyset}$. Then we set
$$
\SSP_i\coloneqq\SSP_i(\mathcal{O}_A,\ket{\psi_0})\coloneqq\set{P\ket{\psi_0}\mid P\in\spos_i}
$$
for ${i=0,\ldots,m}$.
\begin{proposition}
    \label{prop:quotient_set_equality}
    It holds
    \begin{equation}
        \label{eq:quotient_set_equality}
        \SSP/{\sim_A}=\set{\SSP_0,\ldots,\SSP_m}.
    \end{equation}
    Thus, each $\SSP_i$ corresponds to an equivalence class of $\sim_A$. Moreover, if all elements in $\transOs$ are self-inverse and $\transOs$ is pointwise-disjoint, then  every equivalence class in ${\SSP/{\sim_A}}$ has the same size and its cardinality corresponds to $\abs{\mathcal{O}_A}$.
\end{proposition}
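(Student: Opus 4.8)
The plan is to realise $\spos_0,\ldots,\spos_m$ as the cosets of $\mathcal{O}_A$ inside $\posG{\transOs}$ and to push this decomposition forward to $\SSP$ along the map $P\mapsto P\ket{\psi_0}$, whose surjectivity onto $\SSP$ is exactly property~(ii) of Definition~\ref{def:generatorSet}. Before that I would record the one non-formal ingredient, namely the consequence of Lemma~\ref{lem:prop_product_operators} that the local invariance property~(iii) extends from single generators to arbitrary product operators: if $P\in\posG{\transOs}$, $A\subset[n]$ is non-empty, and $\left(P\ket{\psi}\right)_A=\ket{\psi_A}$ for even one $\ket{\psi}\in\SSP$, then $P\in\mathcal{O}_A$. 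From this, together with commutativity~(i) and property~(ii), I would deduce the statement actually used downstream: every $P\in\posG{\transOs}$ respects $\sim_A$, i.e. $\ket{\chi}\sim_A\ket{\chi'}$ implies $P\ket{\chi}\sim_A P\ket{\chi'}$. Indeed, choose $Q\in\posG{\transOs}$ with $Q\ket{\chi}=\ket{\chi'}$ by~(ii); then $Q$ fixes the $A$-restriction of $\ket{\chi}$, so $Q\in\mathcal{O}_A$ by the extended invariance; and since $P\ket{\chi}\in\SSP$ while $Q$ fixes the $A$-restriction of every state in $\SSP$, commuting gives $\left(P\ket{\chi'}\right)_A=\left(QP\ket{\chi}\right)_A=\left(P\ket{\chi}\right)_A$. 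I would also note, with no extra hypothesis, that $\mathcal{O}_A$ is closed under composition — each operator maps $\SSP$ into $\SSP$, so $\left(PT\ket{\psi}\right)_A=\left(T\ket{\psi}\right)_A=\ket{\psi_A}$ for $P,T\in\mathcal{O}_A$ — hence is a commutative sub-semigroup of $\posG{\transOs}$.

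Next I would identify each $\SSP_i$ with a single $\sim_A$-class: $\SSP_0=[\ket{\psi_0}]_A$ and, for $i\geq 1$, $\SSP_i=[P_i\ket{\psi_0}]_A$. The inclusion ``$\subseteq$'' is immediate: for $P\in\mathcal{O}_A$ one has $P\ket{\psi_0}\sim_A\ket{\psi_0}$ by definition of $\mathcal{O}_A$, and applying that $P_i$ respects $\sim_A$ yields $P_iP\ket{\psi_0}\sim_A P_i\ket{\psi_0}$. For ``$\supseteq$'', take $\ket{\phi}\in[P_i\ket{\psi_0}]_A$; by~(ii) there is $R\in\posG{\transOs}$ with $RP_i\ket{\psi_0}=\ket{\phi}$, and since $\ket{\phi}\sim_A P_i\ket{\psi_0}$ this $R$ fixes the $A$-restriction of $P_i\ket{\psi_0}\in\SSP$, so $R\in\mathcal{O}_A$ by the extended invariance; commuting, $\ket{\phi}=P_iR\ket{\psi_0}\in\SSP_i$ (the case $i=0$ is the same with $R$ in place of $P_iR$). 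On the other hand, the choice of $m$ forces $\bigcup_{i=0}^m\spos_i=\posG{\transOs}$, and the $\spos_i$ are pairwise disjoint because each $P_q$ is taken outside all earlier $\spos_i$; hence $\bigcup_{i=0}^m\SSP_i=\{P\ket{\psi_0}\mid P\in\posG{\transOs}\}=\SSP$ by~(ii). Combining, each $\SSP_i$ is a full equivalence class of $\sim_A$ and together they exhaust $\SSP$, which is precisely~\eqref{eq:quotient_set_equality} and the assertion that each $\SSP_i$ is one class.

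For the refined statement I would first upgrade the structure: if every generator is self-inverse then, by commutativity~(i), $\posG{\transOs}$ is an elementary abelian $2$-group (it contains $\mathbb{1}=RR$, and every element is its own inverse), and $\mathcal{O}_A$ — already a sub-semigroup containing $\mathbb{1}$ and stable under $P\mapsto P^{-1}=P$ — is a subgroup. Then each $\spos_q=P_q\mathcal{O}_A$ is a coset, the construction enumerates exactly the distinct cosets, all of cardinality $\abs{\mathcal{O}_A}$, and $m+1=[\posG{\transOs}:\mathcal{O}_A]$. Finally, pointwise-disjointness means the map $P\mapsto P\ket{\psi_0}$ is injective on $\posG{\transOs}$, so $\abs{\SSP_i}=\abs{\spos_i}=\abs{\mathcal{O}_A}$ for all $i$; since the $\SSP_i$ run over all of $\SSP/{\sim_A}$ by the previous paragraph, every equivalence class has the same cardinality $\abs{\mathcal{O}_A}$.

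The genuine obstacle is the single-generator-to-product extension of local invariance cited at the start; once it is available, everything else is orbit- and coset-bookkeeping. I would isolate it as the content of Lemma~\ref{lem:prop_product_operators} and prove it by induction on the length of a product $P=R_1\cdots R_k$ over $\transOs$, using commutativity~(i) to move an ``$A$-trivial'' generator (one fixing the $A$-restriction of all states, equivalently of one) to the front and peel it off, while in the ``$A$-active'' case comparing $P$ against a suitable operator supplied by property~(ii) to propagate the hypothesis. Care is also needed, in the last part, that the coset language is legitimate, which is exactly why self-inverseness (giving genuine inverses) and pointwise-disjointness (making $P\mapsto P\ket{\psi_0}$ a bijection, so cardinalities transfer) are invoked there and nowhere earlier.
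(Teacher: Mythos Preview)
Your argument follows the same route as the paper: show each $\SSP_i$ equals the class $[P_i\ket{\psi_0}]_A$, cover $\SSP$ via property~(ii) and the maximality of $m$, and for the size claim pass to the group structure and transfer coset cardinalities through the injection $P\mapsto P\ket{\psi_0}$ supplied by pointwise-disjointness. You are in fact more careful than the paper on one point: you flag that local invariance~(iii), stated only for generators $P\in\transOs$, is being applied to products $P\in\posG{\transOs}$; the paper simply cites Definition~\ref{def:generatorSet}~(iii) at that step without comment.

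Your handling of that step has two problems, though. First, Lemma~\ref{lem:prop_product_operators} is not what you claim: it is a parity-embedding-specific statement about the logical line operators $\Lambda_v$ (recording $\Lambda_{V_1}\Lambda_{V_2}=\Lambda_{V_1\triangle V_2}$ and characterising when $\Lambda_W=\mathbb{1}$), so it cannot be invoked for a general generator set. Second, your induction sketch is circular in the ``all $A$-active'' case: once you use~(ii) to produce an operator $S$ with $S\ket{\psi}=\ket{\phi}$ and commute, you are reduced to showing that the product $S$ respects $\sim_A$, which is precisely the claim under proof. The extension is nevertheless true, and the direct argument bypasses induction entirely: apply~(iii) with singleton $A=\{i\}$ to each $T\in\transOs$ to see that $T$ either preserves or flips bit $i$ uniformly over $\SSP$; hence every generator acts on $\SSP$ as a fixed bitwise XOR $\ket{a}\mapsto\ket{a\oplus b_T}$, products of such maps are again of this form, and these trivially satisfy~(iii) for every $A$.
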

\begin{proof}
    \begin{enumerate}[label=(\roman*),wide=\parindent,leftmargin=0pt,align=left]
        \item To show Eq.~\eqref{eq:quotient_set_equality}, it is sufficient to verify that ${\SSP_i=[\ket{\psi_i}]_A}$ for all ${0\leq i\leq m}$ where ${\ket{\psi_i}\coloneqq P_i\ket{\psi_0}}$. The remaining statement follows from Definition~\ref{def:generatorSet}~\ref{item:generatingAllStates} and the definition of $m$. Suppose ${\ket{\psi}\in\SSP_i}$. Then, there exists ${P\in\spos_0}$ such that
    \begin{equation*}
        \ket{\psi}=P_iP\ket{\psi_0}=PP_i\ket{\psi_0}=P\ket{\psi_i}
    \end{equation*}
    where we used the commutativity of the product operators in the second equality. Since ${P\in\spos_0}$, we can imply ${\ket{\psi_A}=(P\ket{\psi_i})_A=\ket{\psi_i}_A}$, and hence, ${\SSP_i\subset[\ket{\psi_i}]_A}$. Conversely, suppose that ${\ket{\psi}\in [\ket{\psi_i}]_A}$. Then Definition~\ref{def:generatorSet}~\ref{item:generatingAllStates} implies that there exists ${P\in\posG{\transOs}}$ such that
    \begin{equation}
        \label{eq:rel_states_equiv_class}
        \ket{\psi}=P\ket{\psi_i}=PP_i\ket{\psi_0}=P_i P\ket{\psi_0}
    \end{equation}
    where we again used the commutativity of the product operators. Moreover, from our assumption we have
    $$
    (P\ket{\psi_i})_A=\ket{\psi}_A=\ket{\psi_i}_A.
    $$
    Thus, from Definition~\ref{def:generatorSet}~\ref{item:local_invariance}, we have ${P\in\spos_0}$ and hence, from Eq.~\eqref{eq:rel_states_equiv_class} we deduce ${\ket{\psi}\in\SSP_i}$.
    \item To prove the final statement, it suffices to show that the mapping ${r_i\colon \SSP_0\to\SSP_i\colon P\ket{\psi_0}\mapsto P_iP\ket{\psi_0}}$ for ${1\leq i\leq m}$ is well-defined and bijective. Using that generator set is pointwise-disjoint, we have for ${P,Q\in\spos_0}$ with ${P\ket{\psi_0}=Q\ket{\psi_0}}$ that ${P=Q}$ which shows that $r_i$ is well defined and hence ${\abs{\SSP_0}=\abs{\mathcal{O}_A}}$. The surjectivity follows from definition of the set $\SSP_i$. Since all elements in $\transOs$ are self-inverse and $\transOs$ is pointwise-disjoint, we obtain for ${P,Q\in\spos_0}$ with ${P_iP\ket{\psi_0}=P_iQ\ket{\psi_0}}$ the relation ${P=Q}$, implying the injectivity of $r_i$.\qedhere
    \end{enumerate}
\end{proof}
\begin{proof}[Proof of Theorem~\ref{theorem:equivalenzOperatorbipartitions}]
    The "only if" direction follows immediately from Proposition~\ref{prop:quotient_set_equality} and the fact that the sets $\SSP_i$ depend only on operator sets. For the other direction, it is sufficient to verify that for two distinct nonempty sets ${A,B\subset[n]}$ with ${\mathcal{O}_A\neq\mathcal{O}_B}$ it holds ${\SSP/{\sim_A}\neq \SSP/{\sim_B}}$. Without loss of generality suppose that there exists ${P\in\mathcal{O}_A}$ such that ${P\notin\mathcal{O}_B}$. Then there exists ${i>0}$ such that $P\ket{\psi_0}\in \SSP_i(\mathcal{O}_B,\ket{\psi_0})$. From Proposition~\ref{prop:quotient_set_equality} we deduce  $$\SSP_0(\mathcal{O}_B,\ket{\psi_0})\cap\SSP_i(\mathcal{O}_B,\ket{\psi_0})=\emptyset,$$ which implies ${P\ket{\psi_0}\notin\SSP_0(\mathcal{O}_B,\ket{\psi_0})}$. Hence, ${\SSP_0(\mathcal{O}_A,\ket{\psi_0})\neq \SSP_0(\mathcal{O}_B,\ket{\psi_0})}$, which together with Proposition~\ref{prop:quotient_set_equality} shows the relation.\qedhere
\end{proof}

\subsubsection{Parity embedding: Technical details and results\label{sec:mathParityEmbedding}}
In this section, we give a technical introduction to the parity embedding and recap some of the definitions presented in~\cite{dreier2024uniqueness}. Furthermore, we introduce some new definitions and derive new results required to prove the existence of a generator set for the parity embedding.

A hypergraph $H$ is a pair $(V,E)$, where $V$ is the set of all vertices and ${E\subset\mathcal{P}(V)}$ is the edge set of $H$. Here, ${\mathcal{P}(V)}$ denotes the power set of $V$, and ${\mathcal{P}(E)}$ denotes the edge space of $E$, which is defined as the power set of $E$.
An element of the edge space $C\in \mathcal{P}(E)$ is called a \emph{constraint} of $H$ if it satisfies the following condition:
\begin{equation}
\label{def:parityConstraint}
    \forall v \in V_{C} : \abs{\set{e\in C \mid v \in e}} \text{ is even, }
\end{equation}
where the set $V_C$ of all vertices contained in the constraint $C$ is given by ${V_C \coloneqq \set{v\in V\mid \exists e \in C\mid v\in e}}$. In this article, in addition to the terminology used in~\cite{dreier2024uniqueness}, we define a constraint 
$C$ with $|C|=p$ as a \emph{$p$-body constraint}. The \emph{constraint space} of $H$ is defined by ${\mathcal{C}_H\coloneqq \set{C\in \mathcal{P}(E)\mid C \text{ is a constraint}}}$, which forms a vector space over $\mathbb{Z}_2$.

Now, let us introduce some new definitions. In the following, we denote by $H$ a hypergraph with vertex set $V$ and edge set $E$. Moreover, for ${p\in \mathbb{Z}_2}$, we define
${\non{p} \coloneqq (p+1)\mod 2 = p \oplus 1}$ the negation of ${p}$. For a function $f\colon M\to\mathbb{Z}_2$ on some nonempty set $M$ we denote by ${\non{\vw}\colon M\to\mathbb{Z}_2\colon v\mapsto \overline{f(v)}}$ the \emph{negation of $f$}.
\begin{definition}[Weight of a vertex and edge set]
    We call a function $\vw \colon V\to\mathbb{Z}_2$ a \emph{weight of $V$} and a function ${\wE\colon E\to\mathbb{Z}_2}$ a \emph{weight of $E$}.
\end{definition}
\begin{definition}[Logical configuration state]
For a weight $\vw$ of $V$ we define ${\ket{\phi_{\vw}}\coloneqq\ket{\vw (v_1) \cdots \vw(v_{\abs{V}})}}$ as a \textit{logical (configuration) state} and denote by ${\mathcal{H}(V) = \set{\ket{\phi_{\vw}}\mid \text{$\vw$ is a weight of $V$}}}$ the set of logical configuration states.
\end{definition}
\begin{definition}[Physical configuration state]
\label{def:phys_conf_states}
For a weight $\wE$ of $E$ we define ${\ket{\psi_{\wE}}\coloneqq\ket{w(e_1) \cdots w(e_{\abs{E}})}}$ as a \textit{physical (configuration) state} and define ${\mathcal{H}(E)\coloneqq \set{\ket{\psi_{\wE}} \mid \wE\text{ is a weight of $E$}}}$ as the set of physical configuration states.
\end{definition}
\begin{definition}[Parity states]
\label{def:parityStateSpace}
    We call a weight $w\colon E\to\mathbb{Z}_2$ of $E$ a \emph{parity weight} if and only if there exists a weight $\vw$ of $V$ such that for all $e\in E$
    \begin{equation}
        \label{eq:parityweight}
        \non{w}(e)=\bigoplus_{v\in e} \vw(v).
    \end{equation}
    Moreover, for a parity weight $w$ we call the physical state $\ket{\psi_{w}}$ a \emph{parity state} and the set of all parity states ${\Pi := \set{\ket{\psi_w}\mid w \text{ is a parity weight}}}$ the \emph{parity state space}.
\end{definition}
\begin{lemma}
\label{lemma:parityStateConstr0}
    A weight $w$ of $E$ is a \textit{parity weight} if and only if 
\begin{equation}
\label{eq:parityStateConstraints}
\forall C \in \CS_H\colon |\set{e\in C\mid w_{e}=0}| \text{ is even,}
\end{equation}
or equivalently, for all $C \in \CS_H$ we have $\bigoplus_{e\in C} \overline{w}_e=0$.
\end{lemma}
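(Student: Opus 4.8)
The plan is to recast the lemma as a single linear-algebra duality over $\mathbb{Z}_2$. I would introduce the incidence operator $M\colon\mathbb{Z}_2^{V}\to\mathbb{Z}_2^{E}$ given by $(M\vw)(e)=\bigoplus_{v\in e}\vw(v)$. By Def.~\ref{def:parityStateSpace}, a weight $w$ of $E$ is a parity weight exactly when $\non{w}\in\operatorname{im}M$. On the other hand, since $\abs{\set{e\in C\mid w_e=0}}\equiv\bigoplus_{e\in C}\non{w}_e\pmod 2$, the right-hand condition in Eq.~\eqref{eq:parityStateConstraints} says precisely that $\non{w}$ is orthogonal to every $C\in\CS_H$ under the standard bilinear form on $\mathbb{Z}_2^{E}$, i.e.\ $\non{w}\in\CS_H^{\perp}$. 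Thus the whole lemma reduces to the identity $\operatorname{im}M=\CS_H^{\perp}$, which I would establish by two inclusions.

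The inclusion $\operatorname{im}M\subseteq\CS_H^{\perp}$ yields the ``only if'' direction immediately: if $\non{w}=M\vw$, then for any $C\in\CS_H$, swapping the order of summation gives
\[
\bigoplus_{e\in C}\non{w}(e)=\bigoplus_{e\in C}\bigoplus_{v\in e}\vw(v)=\bigoplus_{v\in V}\abs{\set{e\in C\mid v\in e}}\,\vw(v)=0,
\]
where the coefficients are read modulo $2$ and vanish because $\abs{\set{e\in C\mid v\in e}}$ is even for $v\in V_{C}$ by Eq.~\eqref{def:parityConstraint} and zero for $v\notin V_{C}$.

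For the reverse inclusion --- the substantive ``if'' direction --- I would pass to the transpose $M^{T}\colon\mathbb{Z}_2^{E}\to\mathbb{Z}_2^{V}$, $(M^{T}x)(v)=\bigoplus_{e\ni v}x(e)$, and identify $\ker M^{T}$. An element $x\in\mathbb{Z}_2^{E}$ lies in $\ker M^{T}$ iff, writing $C\coloneqq\set{e\in E\mid x(e)=1}$ for its support, $\abs{\set{e\in C\mid v\in e}}$ is even for every $v\in V$; since this is automatic for $v\notin V_{C}$, the condition is exactly \eqref{def:parityConstraint}, so $\ker M^{T}$ is (the set of indicator vectors of) $\CS_H$. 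Consequently $\CS_H^{\perp}=(\ker M^{T})^{\perp}=\operatorname{im}M$, the last equality being the standard fact that over a field the image of $M$ equals the orthogonal complement of $\ker M^{T}$ with respect to the nondegenerate standard form. Feeding $x=\non{w}$ into this chain, the hypothesis of \eqref{eq:parityStateConstraints} puts $\non{w}$ in $\CS_H^{\perp}=\operatorname{im}M$, so there is a weight $\vw$ of $V$ with $\non{w}=M\vw$, i.e.\ $w$ is a parity weight.

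I expect the only delicate point to be justifying $\operatorname{im}M=(\ker M^{T})^{\perp}$ over $\mathbb{Z}_2$: the $\subseteq$ half is the double-sum swap above, while $\supseteq$ follows by a dimension count, using $\operatorname{rank}M=\operatorname{rank}M^{T}$ together with $\dim U+\dim U^{\perp}=\abs{E}$ for every subspace $U\subseteq\mathbb{Z}_2^{E}$ (nondegeneracy of the standard form). An alternative that sidesteps the transpose is to exhibit a preimage $\vw$ by a direct construction on a vertex basis, but the duality argument is shorter and matches the vector-space framework already set up for $\CS_H$.
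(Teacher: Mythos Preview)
Your proof is correct. Both you and the paper reduce the lemma to the solvability of the $\mathbb{Z}_2$-linear system $M\vw=\non{w}$ with $M$ the vertex--edge incidence matrix, and both handle the ``only if'' direction by the same double-sum swap. The difference lies in the ``if'' direction: the paper argues concretely by Gaussian elimination, observing that each zero row of the reduced echelon form of $M$ is a $\mathbb{Z}_2$-combination of rows encoding a constraint $C\in\CS_H$, so that the corresponding entry of the transformed right-hand side is $\bigoplus_{e\in C}\non{w}_e=0$ by hypothesis and the system is consistent. You instead identify $\CS_H$ with $\ker M^{T}$ and invoke the abstract duality $\operatorname{im}M=(\ker M^{T})^{\perp}$ over $\mathbb{Z}_2$. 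These are two presentations of the same linear-algebra fact; your formulation is shorter and makes the role of the constraint space as the cycle space (kernel of the boundary map) explicit, while the paper's version is self-contained and avoids appealing to $U^{\perp\perp}=U$.
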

\begin{proof}
    First, suppose that $w$ is a parity weight and $f$ a weight of $V$ satisfying~\eqref{eq:parityweight}. Then, for $C\in\CS_H$, we have
    \begin{equation*}
        \bigoplus_{e\in C} \non{w}(e)=\bigoplus_{e\in C}\bigoplus_{v\in e} \vw(v)=\bigoplus_{v\in V_C}\bigoplus_{\substack{e\in C\\ v\in e}} \vw(v).
    \end{equation*}
    By definition of a constraint [Eq.~\eqref{def:parityConstraint}], for each $v\in V_C$ the inner sum is an even number of additions modulo two, i.e., we have that
    \begin{equation*}
        \bigoplus_{v\in V_C}\bigoplus_{\substack{e\in C\\ v\in e}} \vw(v)
         = 0.
    \end{equation*}
    Thus, ${\bigoplus_{e\in C} \non{w}(e)}$ is equal to zero for ${C\in \CS_{H}}$.\\
    Now, suppose that $\wE$ is a weight of $E$ such that~\eqref{eq:parityStateConstraints} holds. In order to show that $\wE$ is a parity weight, we see from~\eqref{eq:parityweight} that this is equivalent to show that the linear system with coefficient matrix $A\in\mathbb{Z}_2^{m\times n}$ and right-hand side $b\in\mathbb{Z}_2^{m}$ defined by
    \begin{equation*}
        A_{ij}\coloneqq\begin{cases}
            1,\quad&\text{if $v_j\in e_i$},\\
            0,\quad&\text{else},\\
        \end{cases}\quad \text{and}\quad b_i\coloneqq \non{\wE}_{e_i},
    \end{equation*}
    has a solution. Let ${P\in \mathbb{Z}_2^{m\times m}}$ an invertible matrix such that ${A'\coloneqq PA}$ has reduced row echelon form, and denote ${b'\coloneqq Pb}$. Note that the linear system with coefficient matrix ${A'\in\mathbb{Z}_2^{m\times n}}$ and right-hand side ${b'\in\mathbb{Z}_2^{m}}$ has a solution if and only if for all ${i=1,\ldots,m}$ for which the {$i$-th} row of $A'$ is the zero vector, the {$i$-th} entry of $b'$ is zero. Since the matrix $P$ performs row operations, we can find for such a row with index ${1\leq i\leq m}$ row indices ${1\leq j_1<\ldots<j_l\leq m}$ satisfying ${\bigoplus_{k=1}^lA_{j_k p}
    =0}$ for all ${1\leq p\leq n}$. We are left to prove that ${b_i'=\bigoplus_{k=1}^lb_{j_k}}$ is zero. By definition of the matrix $A$ and the modulo two addition, we see that ${\bigoplus_{k=1}^lA_{j_k p}
    =0}$ for all ${1\leq p\leq n}$ if and only if for all ${v\in V}$ the number ${\abs{\set{k\in\set{1,\ldots,l}\mid v\in e_{j_k}}}}$ is even. This implies that ${C\coloneqq\set{e_{j_1},\ldots,e_{j_l}}}$ is a constraint and hence, by our assumption, ${\abs{\set{e\in C\mid w_{e}=0}}}$ is even. Therefore, ${\abs{\set{e\in C\mid \non{w}_{e}=1}}}$ is even, implying ${\bigoplus_{k=1}^lb_{j_k}=\bigoplus_{e\in C}\non{\wE}_e=0}$.\qedhere
\end{proof}
\begin{definition}[Logical negation]
\label{def:logicalNegation}
For a weight $\wE$ of $E$ we call the mapping
$$
l_w\colon E \times V \rightarrow \mathbb{Z}_2\colon (e,v) \mapsto l_w(e, v)
$$
with
\begin{equation*}
l_w(e, v) \coloneqq \begin{cases}
     \non{w}_e,\quad &\text{if } v\in e,\\
     w_e,\quad & \text{else,}
\end{cases}
\end{equation*}
for ${v\in V}$ and ${e\in E}$ a \textit{logical negation} of the weight $w$.
For brevity, we write ${l_v(w_e)\coloneqq l_w(e, v)}$.
\end{definition}
Next, we define the \textit{logical line} $L_v(H)$ and the corresponding \emph{logical line operator} $\Lambda_v$ of a vertex ${v\in V}$. As we will later see in Lemma~\ref{lemma:generatorSetParityEmbedding}, logical line operators can be used to define a generator set of the parity state space.
\begin{definition}[Logical line]
\label{def:logicalLine}
For $v\in V$ we call
$$L_{v}(H)\coloneqq \set{e\in E\mid v\in e}$$
the \emph{logical line of the vertex $v$ in the hypergraph ${H}$}.
\end{definition}
To define logical line operators, we need the following lemma.
\begin{lemma}
    \label{lemma:lgocialLinePitoPi}
    Let ${w}$ be a weight of $E$, ${v\in V}$ and define
    \begin{equation}
        \label{eq:def_logical_line_op}
        \Lambda_v(\ket{\psi_w}) \coloneqq \ket{l_v(w_{e_1}) \cdots l_v(w_{e_{\abs{E}}})}.
    \end{equation}
    \begin{enumerate}[(i)]
    \item If ${\ket{\psi_w}\in  \Pi}$, then ${\Lambda_v(\ket{\psi_w})\in \Pi}$.
        \item If ${\ket{\psi_w}\in \mathcal{H}(E)\setminus \Pi}$, then ${\Lambda_v(\ket{\psi_w})\in \mathcal{H}(E)\setminus \Pi}$.
    \end{enumerate}
\end{lemma}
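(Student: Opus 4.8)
The plan is to show that the logical line operator $\Lambda_v$ acts on physical states by $\Lambda_v(\ket{\psi_w})=\ket{\psi_{w'}}$ with $w'\coloneqq l_v(w)$ a weight of $E$, and then to check directly — using the constraint characterization of parity weights from Lemma~\ref{lemma:parityStateConstr0} — that $w'$ is a parity weight if and only if $w$ is. This single equivalence yields both (i) and (ii) at once, and it also immediately covers the ``membership in $\mathcal{H}(E)$'' part, since $w'$ is by construction again a weight of $E$, so $\Lambda_v(\ket{\psi_w})\in\mathcal{H}(E)$ in any case by Eq.~\eqref{eq:def_logical_line_op} and Def.~\ref{def:phys_conf_states}.

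The first computational step is to express $\non{w'}_e$ in terms of $\non{w}_e$. By Def.~\ref{def:logicalNegation} we have $w'_e=\non{w}_e$ when $v\in e$ and $w'_e=w_e$ otherwise, so a short case distinction using $\non{\non{p}}=p$ gives $\non{w'}_e=\non{w}_e$ if $v\notin e$ and $\non{w'}_e=w_e=\non{w}_e\oplus 1$ if $v\in e$. In other words $\non{w'}_e=\non{w}_e\oplus\chi_e$, where $\chi_e\coloneqq 1$ if $v\in e$ and $\chi_e\coloneqq 0$ otherwise.

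Next, for an arbitrary constraint $C\in\CS_H$, summing over $e\in C$ modulo two yields
\[
\bigoplus_{e\in C}\non{w'}_e=\Bigl(\bigoplus_{e\in C}\non{w}_e\Bigr)\oplus\Bigl(\bigoplus_{e\in C}\chi_e\Bigr).
\]
The second term equals $\abs{\set{e\in C\mid v\in e}}\bmod 2$, which is $0$: if $v\notin V_C$ the set is empty, and if $v\in V_C$ the cardinality is even by the defining property~\eqref{def:parityConstraint} of a constraint. Hence $\bigoplus_{e\in C}\non{w'}_e=\bigoplus_{e\in C}\non{w}_e$ for every $C\in\CS_H$, so the criterion of Lemma~\ref{lemma:parityStateConstr0} holds for $w'$ exactly when it holds for $w$. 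This proves $\ket{\psi_w}\in\Pi\Longleftrightarrow\Lambda_v(\ket{\psi_w})\in\Pi$, which is precisely statements (i) and (ii).

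I do not expect any genuine obstacle here; the argument is essentially bookkeeping. The only two places to be careful are the double negation in the second step, where one must verify that the ``extra'' bit $\chi_e$ is exactly the indicator of $v\in e$, and the invocation of~\eqref{def:parityConstraint}, which is exactly the property making that extra bit sum to zero over any constraint. One may optionally remark that $w\mapsto l_v(w)$ is an involution, which makes the biconditional in (i) and (ii) immediate once one implication is shown, though this is not strictly necessary.
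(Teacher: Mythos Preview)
Your proposal is correct and follows essentially the same approach as the paper: both arguments use the constraint characterization of parity weights (Lemma~\ref{lemma:parityStateConstr0}), compute $\bigoplus_{e\in C}\non{l_v(w_e)}=\bigoplus_{e\in C}\non{w}_e\oplus\bigoplus_{\substack{e\in C\\ v\in e}}1$, and kill the extra term via the evenness condition~\eqref{def:parityConstraint}. The only cosmetic difference is that the paper proves (i) and then says (ii) is analogous, whereas you package the computation as a biconditional that yields both at once.
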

\begin{proof}
We only prove the first statement. The second statement can be proven analogously. Suppose that ${v\in V}$ and let $w$ be a parity weight. Then we observe for a constraint $C\in\CS_H$ that
\begin{align*}
        \bigoplus_{e \in C} \overline{l_v(w_e)}&=\bigoplus_{\substack{e \in C\\v\in e}} \overline{l_v(w_e)}\oplus\bigoplus_{\substack{e \in C\\v\notin e}} \overline{l_v(w_e)}\\
        &=\bigoplus_{\substack{e \in C\\v\in e}} (\overline{w}_e\oplus 1)\oplus\bigoplus_{\substack{e \in C\\v\notin e}} \overline{w}_e        =\bigoplus_{\substack{e \in C\\v\in e}} 1\oplus\bigoplus_{e \in C} \overline{w}_e.
\end{align*}
Hence, since $C$ is constraint we deduce from Lemma~\ref{lemma:parityStateConstr0} that the above sum equals zero. Applying Lemma~\ref{lemma:parityStateConstr0} again shows that ${\Lambda_v(\ket{\psi_w})}$ is a parity state.
\end{proof}
\begin{definition}[Logical line operator]
\label{def:logicalLineOperatorProducts}
For ${v\in V}$ we call the mapping
${\Lambda_v \colon \Pi \rightarrow \Pi}$ defined by Eq.~\eqref{eq:def_logical_line_op} a \emph{logical line operator of $v$}  
and  ${\mathbb{\Lambda} \coloneqq \set{\Lambda_v \mid v \in V} \cup \set{\mathbb{1}}}$ the \emph{set of all logical operators}, where  ${\mathbb{1}}$ denotes the identity operator on $\Pi$. Note that $\Lambda_v$ is a well-defined operator as shown in Lemma~\ref{lemma:lgocialLinePitoPi}.
\end{definition}
We conclude this section by showing that $\mathbb{\Lambda}$ forms a generator set of the parity state space. In the following, we write ${\Lambda_v \Lambda_p}$ as the composition of two logical line operators for $p,v\in V$.
\begin{lemma}
    \label{lemma:generatorSetParityEmbedding}
    The set of all logical line operators $\mathbb{\Lambda}$ is a pointwise-disjoint generator set of $\Pi$ whose elements are self-inverse.
\end{lemma}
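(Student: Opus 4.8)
The plan is to verify each of the four properties in turn: pointwise-disjointness, self-inverseness of generators, and the three generator-set axioms from Definition~\ref{def:generatorSet}. The self-inverse property is immediate from the definition of logical negation: applying $l_v$ twice flips the weight on edges containing $v$ back, so $\Lambda_v^2=\mathbb{1}$; more generally, since the $\Lambda_v$ act on disjoint bit positions through their parity of incidence, $\Lambda_v\Lambda_p$ negates exactly the edges in the symmetric difference $L_v(H)\triangle L_p(H)$, which shows simultaneously commutativity (axiom~(i)) — the action of any product operator on a weight $w$ is ``add the characteristic vector of $\bigcup$ of the corresponding logical lines counted mod 2'', and mod-2 addition of fixed vectors is commutative and associative — and that the closure $\posG{\mathbb{\Lambda}}$ consists of operators $\Lambda_W$ indexed by $W\subset V$ with $\Lambda_W\colon w\mapsto w\oplus \chi_W$ where $\chi_W(e)=\abs{W\cap e}\bmod 2$.

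For axiom~(ii), generating all states: given two parity states $\ket{\psi_w}$ and $\ket{\psi_{w'}}$, I would invoke Definition~\ref{def:parityStateSpace} to pick weights $f,f'$ of $V$ with $\non{w}(e)=\bigoplus_{v\in e}f(v)$ and likewise for $w'$, then set $W\coloneqq\set{v\in V\mid f(v)\neq f'(v)}$ and check that $\Lambda_W\ket{\psi_w}=\ket{\psi_{w'}}$. Indeed $\non{w}(e)\oplus\non{w'}(e)=\bigoplus_{v\in e}(f(v)\oplus f'(v))=\abs{W\cap e}\bmod 2=\chi_W(e)$, so $w\oplus\chi_W=w'$; one uses here that negation commutes with the XOR. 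This also hands us the key identity $\prod_{u\in V\setminus\set{v}}\Lambda_u=\Lambda_v$ used later, since $\chi_{V\setminus\set{v}}(e)=\abs{e}-[v\in e]\equiv [v\in e]$ on a graph, but on a hypergraph one should phrase it via the all-ones constraint; I'd be careful to only claim what is needed.

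Axiom~(iii), local invariance, is the step I expect to be the real obstacle, because it is the one that genuinely uses the rigid structure of parity states rather than just linear algebra over $\mathbb{Z}_2$. The statement to prove: if some single generator $\Lambda_v$ fixes the restriction of one parity state $\ket{\psi_w}$ to a subsystem $A\subset E$, then it fixes the $A$-restriction of every parity state. But the action of $\Lambda_v$ on any weight is to add the same fixed vector $\chi_{L_v(H)}$ (its support being exactly the edges $e$ with $v\in e$, independent of $w$), so $(\Lambda_v\ket{\psi_w})_A=\ket{(\psi_w)_A}$ holds if and only if $\chi_{L_v(H)}$ vanishes on every edge in $A$, i.e. $A\cap L_v(H)=\emptyset$ — a condition on $A$ and $v$ alone, with no dependence on $w$. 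Hence the hypothesis forces $A\cap L_v(H)=\emptyset$ and the conclusion is automatic. The subtlety is purely notational: one must extract from ``there exists $\ket{\psi}\in\SSP$ with $(\Lambda_v\ket{\psi})_A=\ket{\psi_A}$'' the state-independent geometric fact, which is clean precisely because logical line operators ``change the states of qubits independently of their current state,'' as already noted in the text preceding Lemma~\ref{lemma:constructBipOperatorSetsForAllBips}. Finally, pointwise-disjointness: if $\Lambda_W\ket{\psi_w}=\Lambda_{W'}\ket{\psi_w}$ then $\chi_W=\chi_{W'}$ as weights of $E$; this need not force $W=W'$ as subsets of $V$ in general, so the honest claim is disjointness at the level of product operators (distinct elements of $\posG{\mathbb{\Lambda}}$ act differently), which is what the definition of pointwise-disjoint actually requires, and which follows since $\Lambda_W=\Lambda_{W'}$ as maps already means they agree everywhere, not just at $\ket{\psi_w}$ — so the implication ``agree at one point $\Rightarrow$ agree as operators'' is exactly $\chi_W=\chi_{W'}\Rightarrow \Lambda_W=\Lambda_{W'}$, which is trivial.
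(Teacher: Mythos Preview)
Your proof is correct and follows essentially the same approach as the paper: both arguments exploit that every product operator $\Lambda_W$ acts by adding a fixed $\mathbb{Z}_2$-vector $\chi_W(e)=\abs{W\cap e}\bmod 2$ to the edge weight, from which commutativity, self-inverseness, local invariance, and pointwise-disjointness all follow by the state-independence of this action, while axiom~(ii) is handled identically in both by taking $W=\{v\in V\mid f(v)\neq f'(v)\}$ for vertex weights witnessing the two parity states. Your treatment of local invariance is in fact more explicit than the paper's (which merely says it ``follows from the definition of the logical line operators''), and your careful remark that pointwise-disjointness is an equality of operators in $\posG{\mathbb{\Lambda}}$ rather than of index sets $W\subset V$ is exactly right.
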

\begin{proof}
    The first property in Definition~\ref{def:generatorSet} and that each logical line operator is self-inverse can be easily verified.
    Next, we show that for all ${\ket{\psi}, \ket{\phi} \in \Pi}$ there exists an operator $P\in \posL$, such that
    $$
    P\ket{\psi} = \ket{\phi}.
    $$
    By definition of the parity states, we can find two weights $f_1,f_2$ of $V$ such that Eq.~\eqref{eq:parityweight} holds, replacing $w$ and $f$ with $w_1$ and $f_1$, and $w_2$ and $f_2$ respectively. Define ${\hat{V}\coloneqq\set{v\in V\mid f_1(v)\neq f_2(v)}}$ and set
\begin{equation*}
 \ket{\hat{w}_{e_1}\ldots \hat{w}_{e_{\abs{E}}}}\coloneqq P\ket{\psi_{w_1}},
\end{equation*}
where ${P\coloneqq\prod_{v\in \hat{V}}\Lambda_v\in\posL}$. Using the the definition of logical line operators, we deduce for ${1\leq i\leq \abs{E}}$
\begin{equation*}
    \overline{\hat{w}}_{e_i}=\overline{w_1(e_i)\oplus\bigoplus_{\substack{v\in\hat{V}\cap e_i}}1}=\overline{w_1}(e_i)\oplus \bigoplus_{\substack{v\in\hat{V}\cap e_i}}1.
\end{equation*}
Hence, applying Eq.~\eqref{eq:parityweight} on $w_1$ shows that the right-hand side of the above equation equals
\begin{align*}
    \bigoplus_{v\in e_i}f_1(v)\oplus \bigoplus_{v\in\hat{V}\cap e_i}1&=\bigoplus_{v\in e_i\setminus\hat{V}}f_1(v)\oplus \bigoplus_{\substack{v\in\hat{V}\cap e_i}}\overline{f_1(v)}\\
    &=\bigoplus_{v\in e_i}f_2(v),
\end{align*}
where we used in the last equality the relation ${\hat{V}=\{v\in V\mid f_1(v)=\overline{f_2(v)}\}}$. Thus, using Eq.~\eqref{eq:parityweight} with weight $w_2$ proves the desired statement.\\
Finally, we prove that for ${P,T\in\posL}$ and ${\ket{\psi}\in\Pi}$, if ${P\ket{\psi}=T\ket{\psi}}$, then ${P=T}$. The local invariance follows from the definition of the logical line operators. Since all elements in $\posL$ are self-inverse, it is sufficient to prove that ${Q\coloneqq PT=\mathbb{1}}$. Let ${\tilde{V}\subset V}$ such that ${Q=\prod_{v\in \tilde{V}}\Lambda_v}$. Since ${Q \ket{\psi}=\ket{\psi}}$
we deduce from the definition of the logical line operators that ${\abs{\tilde{V}\cap e}}$ is even for all ${e\in E}$, which immediately shows that ${Q \ket{\phi}=\ket{\phi}}$ for all ${\ket{\phi}\in \Pi}$.\qedhere
\end{proof}
\subsubsection{Proof of Lemma~\ref{lemma:constructBipOperatorSetsForAllBips}\label{proof:constructBipOperatorSetsForAllBips}}
Here, we give a proof of Lemma~\ref{lemma:constructBipOperatorSetsForAllBips}.
\begin{proof}
    \begin{enumerate}[label=(\roman*),wide=\parindent,leftmargin=0pt,align=left]
    \item Let ${P\in\mathcal{O}_A}$ and ${\tilde{V}\subset V}$ such that ${P=\prod_{v\in\tilde{V}}\Lambda_v}$.
    First, we show that there exists ${K\subset \set{1, \dots, N}}$ such that 
    \begin{equation}
    \label{eq:vertixsetOA}
        \tilde{V}=\bigcup_{l\in K}\tilde{V}_{l}\cup\left(\tilde{V}\setminus\bigcup_{k=1}^{N}V_k\right),
    \end{equation}
where ${\tilde{V}_l\in\mathcal{V}_l}$ for all ${l\in K}$, which implies ${P\in \spos\left(\mathcal{U} \cup \mathcal{Q}\right)}$. Define ${\tilde{V}_l\coloneqq V_l\cap \tilde{V}}$ for ${l=1,\ldots,N}$ and set ${K\coloneqq\set{ l\in[N]\mid \tilde{V}_l\neq\emptyset}}$.
Since ${P\in\mathcal{O}_A}$ we deduce that ${\abs{e\cap\tilde{V}}}$ is even for all ${e\in A}$. Hence, we have for ${l\in K}$ that ${\abs{e\cap\tilde{V}}=\abs{e\cap\tilde{V}_l}}$ is even for all ${e\in E_l}$, showing that ${\tilde{V}_l\in\mathcal{V}_l}$. This shows that ${P\in \spos\left(\mathcal{U} \cup \mathcal{Q}\right)}$. Note that the definition of $\mathcal{U}$ and $\mathcal{Q}$ implies that each ${P\in \posG{\mathcal{U}\cup \mathcal{Q}}}$ is also contained in $\mathcal{O}_A$.
    \item Second, we prove ${\mathcal{V}_k=\set{V_k}}$ when $H$ is a graph. Suppose by contradiction that there exists $\tilde{V}\subset V_k$ such that ${\tilde{V}\neq\emptyset}$ and ${\tilde{V}\neq V_k}$. Thus, we can choose ${v_0\in \tilde{V}}$ and ${w\in V_k}$ with ${w\notin\tilde{V}}$. Since ${H\vert_{E_k}}$ is connected, we can find a path ${(v_0,e_1,v_1,\ldots,e_m,v_m)}$ in $H\vert_{E_k}$ such that ${v_l\in V_k}$, ${e_l\in E_k}$ for ${l=1,\ldots,m}$ and ${v_m=w}$. Define ${p\coloneqq\min\set{l\in[m]\mid e_l\not\subset\tilde{V}}}$. If ${p=1}$, then ${\abs{e_1\cap\tilde{V}}=\abs{\set{v_0}}=1}$. For $p>1$ we deduce ${e_{p-1}\subset\tilde{V}}$ and hence ${\abs{e_p\cap\tilde{V}}=\abs{\set{v_{p-1}}}=1}$, showing that both cases lead to a contradiction.\qedhere
\end{enumerate}
\end{proof}

\subsubsection{Characterization of further bundles under the parity embedding\label{section:applicationExample}}
In Sec.~\ref{sec:embeddingExamples}, we investigated bundles of logical lines and derived a lower bound on the total number of subsystems they contain.

In the following, we analyze the remaining types of bundles which appear in Fig.~\ref{fig:GroupEntanglementDynamics} and determine for some cases the total number of subsystems inside these bundles.

For the subsequent analysis, we make use of the following lemma.
\begin{lemma}
    \label{lem:prop_product_operators}
    Let $H=(V,E)$ be a hypergraph.
    \begin{enumerate}[label=(\roman*)]
        \item For ${V_1,V_2\subset V}$ we have ${\Lambda_{V_1}\Lambda_{V_2}=\Lambda_{V_1\triangle V_2}}$.
        \item Let $W\subset V$. If $H$ is a connected graph and ${\Lambda_W=\mathbb{1}}$, then ${W=V}$ or ${W=\emptyset}$. Here, we set ${\Lambda_\emptyset\coloneqq\mathbb{1}}$. 
    \end{enumerate}
\end{lemma}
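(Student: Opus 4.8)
The plan is to derive both parts directly from how logical line operators act on parity states, as specified in Def.~\ref{def:logicalLineOperatorProducts} and Eq.~\eqref{eq:def_logical_line_op}.

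For part~(i), I would first record the elementary fact that $\Lambda_v$ applied to a parity state $\ket{\psi_w}$ flips the value $w_e$ exactly at those edges $e$ with $v\in e$, and leaves $w_e$ unchanged otherwise. Since the logical line operators commute (Lemma~\ref{lemma:generatorSetParityEmbedding}), the product operator $\Lambda_{V_1}=\prod_{v\in V_1}\Lambda_v$ therefore sends $\ket{\psi_w}$ to the parity state whose weight at $e$ is $w_e$ flipped $\abs{V_1\cap e}$ times, i.e.\ flipped iff $\abs{V_1\cap e}$ is odd; the same description applies to $\Lambda_{V_2}$ and to $\Lambda_{V_1\triangle V_2}$. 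Evaluating $\Lambda_{V_1}\Lambda_{V_2}$ and $\Lambda_{V_1\triangle V_2}$ on an arbitrary $\ket{\psi_w}\in\Pi$ then reduces the claim to the congruence $\abs{V_1\cap e}+\abs{V_2\cap e}\equiv\abs{(V_1\triangle V_2)\cap e}\pmod 2$, which follows from $\abs{(V_1\triangle V_2)\cap e}=\abs{V_1\cap e}+\abs{V_2\cap e}-2\abs{V_1\cap V_2\cap e}$. As both sides are maps $\Pi\to\Pi$ that agree on every element of $\Pi$, they coincide.

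For part~(ii), I would start from the observation already used in the last paragraph of the proof of Lemma~\ref{lemma:generatorSetParityEmbedding}: if $\Lambda_W$ is the identity on $\Pi$, then $\abs{W\cap e}$ is even for every $e\in E$ (alternatively, evaluate $\Lambda_W$ on the parity state associated with the all-zero logical weight, on which $\Lambda_W$ flips exactly the edges $e$ with $\abs{W\cap e}$ odd). When $H$ is a graph, $\abs{e}=2$, so ``$\abs{W\cap e}$ even'' says that every edge has either both endpoints in $W$ or neither; equivalently, the indicator of $W$ is constant along every edge, hence along every path, hence on every connected component of $H$. Since $H$ is connected, this forces $W=\emptyset$ or $W=V$, the case $\abs{V}\le 1$ being immediate.

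I do not anticipate a genuine obstacle: both parts are short bookkeeping arguments. The only points needing a little care are applying the symmetric-difference identity modulo two correctly in~(i), and in~(ii) being explicit that ``$\Lambda_W=\mathbb{1}$'' is an equality of operators on $\Pi$, so that testing it against a single suitably chosen parity state already yields $\abs{W\cap e}$ even for all $e\in E$.
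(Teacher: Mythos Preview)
Your proposal is correct and follows essentially the same approach as the paper: part~(i) is a direct verification (the paper simply says it ``can be easily verified''), and for part~(ii) the paper refers to the second statement of Lemma~\ref{lemma:constructBipOperatorSetsForAllBips}, whose proof is exactly the path/connectivity argument you give---that $\abs{W\cap e}$ even for every edge forces the indicator of $W$ to be constant on a connected graph. Your write-up is in fact more explicit than the paper's terse references, but the underlying ideas coincide.
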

\begin{proof}
    The first property can be easily verified. The second property can be proven analogously to the second statement in Lemma~\ref{lemma:constructBipOperatorSetsForAllBips}.\qedhere
\end{proof}
The first type of bundle we investigate contains all subsystems which are equivalent to a three-cycle. Two of these bundles are depicted as the lowest (violet) and the next higher (green-blue) line in Fig.~\ref{fig:GroupEntanglementDynamics}~(b).

\begin{example}[Three-cycles]
\label{example:threeCycles}
  For a complete graph ${H=(V,E)}$ with ${\abs{V}>3}$ each of the closed loops ${A\coloneqq \set{\set{v,u},\set{u,w},\set{w,v}}}$ with three different vertices ${v,u,w\in V}$, the equivalence class ${[A]_{\Pi}}$ consists of eight elements.
\end{example}
\begin{proof}
    \begin{enumerate}[label=(\roman*),wide=\parindent,leftmargin=0pt,align=left]
        \item \label{label:first_step_3-cycles}First, since $H\vert_A$ is connected and $H$ is a complete graph, implying that $H\vert_{A^c}$ is connected and its vertex set equals $V$, we deduce from Lemma~\ref{lemma:constructBipOperatorSetsForAllBips} that
        \begin{equation*}
            \mathcal{O}_A=\posG{\set{\Lambda_{A}}\cup \mathcal{Q}}\quad\text{and}\quad \mathcal{O}_{A^c}=\set{\mathbb{1}},
        \end{equation*}
        where ${\mathcal{Q} \coloneqq \set{\Lambda_v\in \mathbb{\Lambda} \left\vert\, v\in V\setminus\set{v,u,w}\right.}}$. Furthermore, for $B\subset A$ with $\abs{B}=2$, we similarly obtain ${\mathcal{O}_B=\mathcal{O}_A}$ and ${\mathcal{O}_{B^c}=\mathcal{O}_{A^c}}$, showing ${\abs{[A]_{\Pi}}\geq 8}$.
        \item Next, we have to prove that for all subsystems other than in~\ref{label:first_step_3-cycles} (and their complements) do not belong to ${[A]_{\Pi}}$.\\
        First, suppose ${B\subset A}$ with ${\abs{B}=1}$, and without loss of generality, let ${B=\set{\set{v,u}}}$. Define ${W\coloneqq\set{v,u}}$ and assume by contradiction that ${\Lambda_W\in\mathcal{O}_B}$ is an element of $\mathcal{O}_A$. Hence, using Lemma~\ref{lemma:constructBipOperatorSetsForAllBips} we can find ${c_A,c_v\in\set{0,1}}$ for ${v\in V\setminus{\set{v,u,w}}}$ such that
        \begin{equation}
            \label{eq:proof_second_step_three-cycles}
            \Lambda_W=\Lambda_{\set{v,u,w}}^{c_A}\prod_{v\in V\setminus{\set{v,u,w}}}\Lambda_{v}^{c_v},
        \end{equation}
        which, by Lemma~\ref{lemma:generatorSetParityEmbedding} and~\ref{lem:prop_product_operators}, implies that
        \begin{equation*}
            W\triangle c_A\set{v,u,w}\triangle\bigcup_{v\in V\setminus{\set{v,u,w}}}c_v\set{v}\in \set{V,\emptyset}.
        \end{equation*}
        However, for ${c_A=0}$ as well as ${c_A=1}$, it can be easily seen that the above set does not correspond to $V$ or $\emptyset$, which is a contradiction. Since ${\mathcal{O}_{B^c}=\set{\mathbb{1}}}$, we conclude by Theorem~\ref{theorem:equivalenzOperatorbipartitions} that ${B\notin [A]_{\Pi}}$.\\
        Finally, we assume ${B\subset E}$ with ${B\setminus A\neq\emptyset}$. If ${H\vert_B}$ is disconnected, we can find ${v'\in V\setminus{\set{v,u,w}}}$ which is contained in the vertex set of $H\vert_B$. Then, using similar arguments as above (see Eq.~\eqref{eq:proof_second_step_three-cycles}), we can deduce that ${\Lambda_{v'}\in\mathcal{O}_A}$ and ${\Lambda_{v'}\notin\mathcal{O}_B}$, which shows that ${B\notin [A]_{\Pi}}$. If ${H\vert_B}$ is disconnected, it is sufficient to consider the case where the vertex set of  ${H\vert_{B}}$ equals $V$ and the vertex set of ${H\vert_{B^c}}$ is not equal to $V$. The case where the vertex set of  ${H\vert_{B}}$ is not equal to $V$ and the vertex set of ${H\vert_{B^c}}$ equals $V$ can be proven analogously. Excluding the previous choices of $B$ (and their complements), we find ${v\in V}$ in vertex set of ${H\vert_{B^c}}$ such that $v$ is not in the vertex set of ${H\vert_{A}}$. This shows that ${\Lambda_v\in\mathcal{O}_A}$ and ${\Lambda_v\notin\mathcal{O}_{B^c}}$, implying the desired statement.\qedhere
    \end{enumerate}
\end{proof}

We note that the lowest (violet) bundle in Fig.~\ref{fig:GroupEntanglementDynamics}~(b) comprises the subsystems ${B_1= \set{\set{0,1}, \set{1, 4}, \set{0, 4}}}$ and ${B\subset A}$ with ${\abs{B}=2}$. The next higher bundle in green-blue consists of the subsystems ${D=\set{\set{0,2},\set{0,4},\set{2,4}}}$ and all subsets of $D$ with size two. For a complete graph of five vertices the parity embedding produces 10 three-cycles, which are represented by the green-blue bars in Fig.~\ref{fig:groupsOfFinalEntanglementValues}.\\

From Lemma~\ref{lemma:constructBipOperatorSetsForAllBips} it follows that for a graph ${H=(V, E)}$ and a subsystem ${A\subset E}$ which
we can be decomposed as ${A=\cup_{i=1}^{n} B_i}$, where  ${B_i\cap B_j=\emptyset}$ for all ${1\leq i<j\leq n}$, and where the corresponding graphs ${H\vert_{B_i}}$ are connected for all ${i\in[n]}$, that ${\mathcal{O}_A = \posG{\set{\Lambda_{V_{B_i}}\mid i\in[n]}\cup \mathcal{Q}}}$ with ${\mathcal{Q}=\set{\Lambda_v\in \mathbb{\Lambda} \mid v\in V\setminus \cup_{i=1}^n V_{B_i}}}$.
Here we give an example for such a subset $A$ with ${n=2}$ in a complete graph with ${\abs{V}=5}$.
\begin{example}[Disconnected subgraphs]
    \label{lemma:SameDisconnectedSubgraphs}
    Let ${H=(V,E)}$ be a complete graph with ${\abs{V}=5}$ and ${C\coloneqq \set{\set{v,u},\set{u,w},\set{w,v}}}$ be a three-cycle with three different vertices ${v,u,w\in V}$. Furthermore, define ${A\coloneqq C\cup \set{V_C^c}}$, where ${V_C\coloneqq\set{u,v,w}}$. Then the equivalence class ${[A]_{\Pi}}$ consists of eight elements.
\end{example}
\begin{proof}
    \begin{enumerate}[label=(\roman*),wide=\parindent,leftmargin=0pt,align=left]
        \item Using Lemma~\ref{lemma:constructBipOperatorSetsForAllBips} and that $H$ is complete, we again see that ${\mathcal{O}_A = \posG{\set{\Lambda_{V_C}, \Lambda_{V_C^c}}}}$ and ${\mathcal{O}_{A^c} = \set{\mathbb{1}}}$. Moreover, for each
        \begin{equation}
            \label{eq:proof_disconnected_subgraphs}
            B\in \set{D\cup \set{V_C^c} \mid \abs{D} = 2 \text{ and } D\subset C},
        \end{equation}
        we similarly deduce that ${\mathcal{O}_B = \posG{\set{\Lambda_{V_C}, \Lambda_{V_C^c}}}}$ and ${\mathcal{O}_{B^c} = \set{\mathbb{1}}}$ (see also proof of Example~\ref{example:threeCycles}), showing that ${B\in [A]_{\Pi}}$ and thus ${\abs{[A]_{\Pi}}\geq 8}$.
        \item 
        Suppose by contradiction that there exists ${B\in [A]_{\Pi}}$ such that ${B\cap (E\setminus{A})\neq\emptyset}$. Thus, we can find ${e\in B\cap (E\setminus{A})}$ and a vertex set $V'$ which is a connected component of ${H\vert_{B}}$ such that ${e\subset V'}$. Using similar arguments as in the proof of Example~\ref{example:threeCycles} shows that ${\Lambda_{V'}\notin\mathcal{O}_A}$ but ${\Lambda_{V'}\in\mathcal{O}_B}$, which is a contradiction. Therefore, any subsystem $B$ has to be contained in $A$. For the case ${\abs{B}=3}$ we have in total four subsets ${B\subset A}$, where three of them are already an element of the set in Eq.~\eqref{eq:proof_disconnected_subgraphs}. For ${B=C}$ and ${q\in V^c}$ we have ${\Lambda_{q}\in \mathcal{O}_B}$ but ${\Lambda_{q}\notin \mathcal{O}_A}$ and hence, ${B\notin[A]_{\Pi}}$.
        If ${\abs{B}=2}$, we see that both $\mathcal{O}_B$ and $\mathcal{O}_{B^c}$ are not equal to ${\set{\mathbb{1}}}$, implying that ${\abs{[A]_{\Pi}}=8}$.\qedhere
    \end{enumerate}
\end{proof}
The (dark blue) bundle with the second highest entanglement value in Fig.~\ref{fig:GroupEntanglementDynamics}~(b) consists of the four unordered bipartitions contained in the equivalence class $[A]_{\Pi}$ where ${A=C\cup\set{\set{0,2}}}$ and ${C = \set{\set{1,3},\set{3,4},\set{1,4}}}$. We note that there exist exactly 10 bundles of this type, which are illustrated in Fig.~\ref{fig:groupsOfFinalEntanglementValues} as the blue-violet bars.\\

%
%
Finally, we provide an example of a bundle that includes the bundle in Fig.~\ref{fig:GroupEntanglementDynamics} with the highest final entanglement value.
\begin{example}[Double spanning tree bipartition]
    \label{lemma:highestEntanglementGroup}
    Let ${H=(V, E)}$ be a graph and ${A\subset E}$.
    \begin{enumerate}[(i)]
        \item\label{item:bipart_op_set_db_spanning_tree} If ${H\vert_{A}}$ and ${H\vert_{A^c}}$ are connected and their vertex sets correspond to $V$, then ${\set{\mathcal{O}_{A}, \mathcal{O}_{A^c}} = \set{\set{\mathbb{1}},\set{\mathbb{1}}}}$.
        \item Under the same assumptions as in \ref{item:bipart_op_set_db_spanning_tree}, for all ${c_{\ket{\psi}}\in\mathbb{C}}$, ${\ket{\psi}\in\Pi}$ with ${\sum_{\ket{\psi}\in\Pi}\abs{c_{\ket{\psi}}}^2=1}$ and ${\ket{\Psi}\coloneqq\sum_{\ket{\psi}\in\Pi}c_{\ket{\psi}}}$ we have 
        \begin{equation}
        \label{eq:specSingleHighestEnt}
        \spec{\rho_A} =  \set{\left.\abs{c_{\ket{\psi}}}^2 \,\right\vert \ket{\psi}\in\Pi},
        \end{equation}
        where $\rho_A$ denotes the reduced density matrix of $\ket{\Psi}$ of the subsystem $A$.
    \end{enumerate}
\end{example}
\begin{proof}
\begin{enumerate}[(i)]
    \item The first statement follows immediately from Lemma~\ref{lemma:constructBipOperatorSetsForAllBips}.
    \item To prove the second statement, we deduce from Proposition~\ref{prop:quotient_set_equality} that each equivalence in ${\SSP/{\sim_A}}$ has size one. Using Eq.~\eqref{eq:reduced_density_matrix_formula} and~\eqref{eq:coefficientsRedDens} in the proof of Theorem~\ref{thm:equivalence_spectrum_sim_subspace}, we derive
    $$
    \rho_A = \sum_{\ket{\psi},\ket{\phi}\in \Pi} d_{\ket{\psi},\ket{\phi}} \ket{\psi_{A}}\bra{\phi_{A}}
    $$
    with
    $$
    d_{\ket{\psi},\ket{\phi}} = \delta_{\ket{\psi},\ket{\phi}} c_{\ket{\psi}}c_{\ket{\phi}}^*,
    $$
    which shows Eq.~\eqref{eq:specSingleHighestEnt}.\qedhere
\end{enumerate}
\end{proof}
We note that determining the number of ways to partition the edge set $E$ of a graph ${H=(V,E)}$ into two connected spanning subgraphs ${H\vert_A = (V,A)}$ and ${H\vert_{A^c} = (V, A^c)}$ is known to be a hard combinatorial enumeration problem. There is no simple closed formula but it is related to the \emph{spanning tree packing problem} and \emph{Tutte's theorem}. 
In our experiments, we used for the example in Fig.~\ref{fig:GroupEntanglementDynamics} a brute force counting algorithm to count all these bipartitions where $120$ unordered bipartition with ${\abs{A}=4}$ and $96$ unordered bipartition with ${\abs{A}=5}$ has been detected. Thus, in total we found $216$ unordered bipartitions and hence, $432$ subsystems inside this bundle. Moreover, we observe that for all subsystems ${B\not\sim_{\SSP}A}$ that the von Neumann entropy calculated with respect to $B$ is smaller than the von Neumann entropy calculated with respect to the subsystem $A$. We also emphasize that for all subsystems in ${[A]_{\Pi}}$, the entanglement spectrum provides information about the composition of the entangled system state. Since the entanglement spectrum is experimentally measurable~(see~\cite{ES_PichlerZoller}), our theory could be used to identify a subset $A$ that lies in the equivalence class with ${\set{\mathcal{O}_A, \mathcal{O}_{A^c}}=\set{\set{\mathbb{1}}, \set{\mathbb{1}}}}$ which, by Eq.~\eqref{eq:specSingleHighestEnt}, then allows to experimentally determine how many basis states are present in the superposition of the system state as well as how their weights are distributed. 

\subsubsection{Minor embedding: Technical details and results}
In the final section we introduce some notations to describe the minor embedding and define the generator set of this embedding similar to the parity embedding.
Let ${M=(V_M,E_M)}$ be a minor graph of the logical graph $H=(V, E)$ and
for each vertex ${v\in V}$
$$
    C_v = \set{({v, i)}\mid i = 1, \dots, N_v} \subset V_M,
$$
the corresponding chain of $v$, where $N_v$ is the number of physical vertices in the chain $C_v$. Moreover, we define for a weight $f$ of $V_M$ the \emph{physical (configuration) state} as ${\ket{\psi_f}\coloneqq \ket{f(n_1)\cdots f(n_{\abs{V_M}})}}$, where $$V_M=\set{(v,i)\mid v\in V,\,1\leq i\leq N_v}=\set{n_1,\ldots,n_{\abs{V_M}}}$$ is an enumeration of $V_M$, and the set of all physical configuration states ${\mathcal{H}(V_M)\coloneqq \set{\ket{\psi_f}\mid f \text{ is a weight of } V_M}}$.
\begin{definition}[Minor states]
    \label{def:minorState}
    We call a weight ${f\colon V_M \to \mathbb{Z}_2}$ of $V_M$ a \emph{minor weight} if and only if for all ${v\in V}$ and for all ${i,j\in \set{1, \dots, N_v}}$ ${f((v,i)) = f((v,j))}$, i.e., $f$ is constant on each chain $C_v$ for ${v\in V}$.
    Moreover, for a minor weight $f$, we call the physical state $\ket{\psi_f}$ a \emph{minor state} and the set of all minor states ${\MES\coloneqq\set{\ket{\psi_f} \mid f \text{ is a minor weight}}}$ the \emph{minor state space}.
\end{definition}
Analogous to the parity embedding, we can define the logical negation in the minor embedding along a chain $C_v$.
\begin{definition}[Logical negation in the minor graph]
    \label{def:minorLogicalNegation}
    For a weight $f$ of $V_M$ we call the mapping
    $$
    l_f\colon V \times V_M \to \mathbb{Z}_2 \colon (v, n) \mapsto l_f(v,n)
    $$
    with
    \begin{equation*}
l_f(v, n) \coloneqq \begin{cases}
     \non{f}(n),\quad &\text{if } n\in C_v, \\
     f(n),\quad & \text{else,} 
\end{cases}
\end{equation*}
for ${v\in V}$ and ${n \in V_M}$ a \emph{logical negation} of the weight $f$. Furthermore, we write ${l_v(f_n) \coloneqq l_f(v, n)}$.
\end{definition}
Similar to Definition~\ref{def:logicalLineOperatorProducts}, we make use of the following lemma to define the \emph{chain operator} $\Gamma_v$ of a vertex ${v\in V}$ as follows.
\begin{lemma}
    \label{lemma:minorStateMapping}
    Let ${f\colon V_M \to \mathbb{Z}_2}$ be a weight of $V_M$ and ${v\in V}$. Moreover, define 
    \begin{equation}
    \label{eq:minorMapOperator}
 \Gamma_v(\ket{\psi_f})\coloneqq \ket{l_v(f_{n_1}) \cdots l_{v}(f_{n_{\abs{V_M}}})}.
    \end{equation}
    \begin{enumerate}[(i)]
        \item If ${\ket{\psi_w}\in \MES}$, then ${\Gamma_v(\ket{\psi_w})\in \MES}$.
        \item If ${\ket{\psi_w}\in \mathcal{H}(V_M)\setminus \MES}$, then ${\Gamma_v(\ket{\psi_w})\in \mathcal{H}(V_M)\setminus \MES}$.
    \end{enumerate}
\end{lemma}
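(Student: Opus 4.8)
The plan is to unpack what $\Gamma_v$ does concretely and then run the same two-case bookkeeping as in Lemma~\ref{lemma:lgocialLinePitoPi}. By Eq.~\eqref{eq:minorMapOperator}, the operator $\Gamma_v$ sends a physical configuration state $\ket{\psi_f}$ to the physical configuration state $\ket{\psi_{f'}}$ whose weight $f'$ is the logical negation $l_v(f_\cdot)$: that is, $f'(n)=\non{f}(n)$ for every $n\in C_v$ and $f'(n)=f(n)$ for every $n\notin C_v$. The first thing I would record is that the chains $\set{C_u\mid u\in V}$ are pairwise disjoint and partition $V_M$ (they are indexed by the first coordinate of the pair $(v,i)$), so $\Gamma_v$ modifies the weight on the single chain $C_v$ and leaves the weight on every other chain entirely unchanged.

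For statement (i), I would assume $\ket{\psi_w}\in\MES$, so $w$ is constant on each chain. Then $w'$ is constant on $C_v$ (it equals $\non{w}$ there, and negation of a constant is constant) and equals $w$ on every chain $C_u$ with $u\neq v$, hence is constant there as well. Thus $w'$ is a minor weight and $\Gamma_v(\ket{\psi_w})\in\MES$. For statement (ii), I would argue contrapositively on the defining condition of Def.~\ref{def:minorState}: suppose $\ket{\psi_w}\in\mathcal{H}(V_M)\setminus\MES$, so there is a logical vertex $u$ and indices $i,j\in\set{1,\dots,N_u}$ with $w((u,i))\neq w((u,j))$. If $u\neq v$, the restriction of $w'$ to $C_u$ coincides with $w$, so $w'((u,i))\neq w'((u,j))$. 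If $u=v$, then $w'=\non{w}$ on $C_v$, and since negation on $\mathbb{Z}_2$ is a bijection we again get $w'((v,i))=\non{w}((v,i))\neq\non{w}((v,j))=w'((v,j))$. In either case $w'$ is not constant on some chain, so $\Gamma_v(\ket{\psi_w})\in\mathcal{H}(V_M)\setminus\MES$.

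I expect no genuine obstacle here: the proof is a routine case distinction, structurally identical to the proof of Lemma~\ref{lemma:lgocialLinePitoPi} but even simpler, since for the minor embedding there is no constraint‑space cancellation to track. The only point worth stating explicitly is the disjointness of the chains, which is exactly what makes $\Gamma_v$ act on a single chain and reduces the verification to checking one chain at a time.
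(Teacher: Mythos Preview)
Your proposal is correct and is precisely the analogous argument the paper points to: the paper's own proof is omitted with the remark that it ``proceeds analogously to that of Lemma~\ref{lemma:lgocialLinePitoPi},'' and your case analysis on the single affected chain $C_v$ is exactly that analogue, carried out in full. Your observation that the chains partition $V_M$ (so $\Gamma_v$ acts on one chain only) is the right structural simplification replacing the constraint-space computation in Lemma~\ref{lemma:lgocialLinePitoPi}.
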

\begin{proof}
The proof proceeds analogously to that of Lemma~\ref{lemma:lgocialLinePitoPi} and is therefore omitted.\qedhere    
\end{proof}
\begin{definition}[Chain operator]
    \label{def:chainOperator}
    For ${v\in V}$ we call the mapping ${\Gamma_v\colon \MES \to \MES}$ defined by Eq.~\eqref{eq:minorMapOperator} the \emph{chain operator of $v$} and ${\mathbb{\Gamma}\coloneqq \set{\Gamma_v \mid v\in V}\cup \set{\mathbb{1}}}$ the \emph{set of all chain operators}, where $\mathbb{1}$ denotes the identity operator on $\MES$. As for product operators, $\Gamma_v$ is a well-defined operator, as shown in Lemma~\ref{lemma:minorStateMapping}.
\end{definition}
We note that, analogous to the parity embedding (see Lemma~\ref{lemma:generatorSetParityEmbedding}), it can be shown that the set of all chain operators $\mathbb{\Gamma}$ is a pointwise-disjoint generator set of $\MES$ whose elements are all self-inverse.

\section{Conclusion}

This work reveals a previously unobserved equivalence in entanglement behavior, demonstrating through numerical simulations that different bipartitions can exhibit equal entanglement entropy within a constrained subspace. We develop a theoretical framework to describe this bundling effect in such subspaces and derive a sufficient condition implying equal spectra for all mixed states within that subspace. Beyond these theoretical insights, our results also provide crucial practical applications. Utilizing our theoretical framework, an a prior classification of the bipartition classes enables to selectively measure the entanglement of bipartitions which belong to different bipartitions classes. Depending on the quantum states that generate the constrained subspace, this can reduce the experimental time required to determine the total entanglement significantly as measuring the entanglement of just a few bipartitions already suffices to obtain the total entanglement. Moreover, based on our theoretical results, we derive efficient algorithms that are applicable for the a prior determination of the bipartitions equivalences classes for constrained subspaces that are generated through embeddings used in quantum optimization, verifying the equivalence of two subsystems and can even run in polynomial time. Finally, our analysis of the full entanglement spectrum highlights the broad applicability of our derivations, extending beyond entanglement entropy analysis to quantum many-body physics in general.

\section*{Acknowledgments}

The numerical simulations throughout this manuscript were performed using the \texttt{qutip} package~\cite{Qutip2}. This research was funded in part by the European Union program Horizon 2020 under Grant No.~817482 (PASQuanS), the Austrian Science Fund (FWF)~10.55776/I6011 through a START grant under Project No. Y1067-N27, the SFB BeyondC Project No. F7108-N38 and the Austrian Research Promotion Agency (FFG Project No.~FO999924030 (FFG Basisprogramm), FFG Project No.~FO999925691).
\bibliographystyle{apsrev4-1}
%

\newpage
\setcounter{secnumdepth}{2}
\newpage
\appendix
\addcontentsline{toc}{section}{Appendices}
\section{Parity embedding example\label{app:exampleParityEmbedding}}
The initial optimization problem (Fig.~\ref{fig:parityTransform4q}, left) of the embedded problem analyzed at the beginning of Sec.~\ref{sec:numVis} is given by the Ising spin Hamiltonian
$$
H = \sum_{i=1}^{4} J_i s_i + \sum_{1\leq i<j\leq 4} J_{ij} s_is_j.
$$
To implement local fields, the logical problem is first mapped to the logical graph in Fig.~\ref{fig:parityTransform4q} (middle) by introducing the auxiliary vertex with index $0$. Here the local fields $J_i$ correspond to the edges $\set{0,i}$ and the interactions $J_{ij}$ to $\set{i,j}$ for $1\leq i<j\leq 4$ in the logical graph. As a second step, the logical graph is mapped via the parity embedding to the physical graph given in Fig.~\ref{fig:parityTransform4q} (right), where each of the 10 edges of the logical graph corresponds to a vertex in the physical graph. The  local field $\tilde{J}_k$ of each physical vertex equals the weight of its corresponding edge of the logical graph. The Hamiltonian after applying the parity encoding, represented by the physical graph in Fig.~\ref{fig:parityTransform4q} (right), reads $$H_p = \sum_{i=1}^{10} \tilde{J}_i \pqbsz{i} + H_{C},$$
where the penalty constraints are given in
\begin{align*}
    H_{C} &=  C_{1} \pqbsz{1} \pqbsz{2} \pqbsz{5} + C_{2} \pqbsz{5} \pqbsz{6}\pqbsz{8}\\
    &+ C_{3} \pqbsz{8} \pqbsz{9} \pqbsz{10} + C_{4} \pqbsz{2} \pqbsz{3} \pqbsz{5}\pqbsz{6}\\
    &+ C_{5} \pqbsz{3} \pqbsz{4} \pqbsz{6} \pqbsz{7} + C_{6} \pqbsz{6} \pqbsz{7}\pqbsz{8}\pqbsz{9}.
\end{align*}
The physical vertices are laid out in a square grid, where the parity constraints
are presented by three-body (red triangles) or four-body (blue
squares) plaquettes. Note that $H_{C}$, and therefore the physical graph, is not unique because different sets of constraints can in principle be chosen. The only requirement is that the chosen constraints span the entire constraint space.

\begin{figure}[t]
\centering
\includegraphics[width=1\columnwidth]{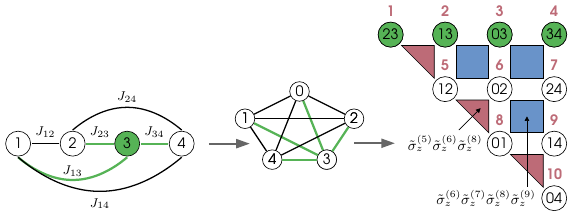}
\caption{\emph{Parity encoding} applied to the optimization problem used for the numerical simulations in Fig.~\ref{fig:GroupEntanglementDynamics}. The left graph illustrates the initial optimization problem, the middle graph the corresponding logical graph and the right picture the physical graph embedded by the parity encoding. The logical spins and the physical qubits are presented by black cycles. The logical vertices are labeled by indices ${0,1,2,3,4}$ inside the cycles and the vertices of the physical graph are labeled in red above the cycles with indices ${1, \ldots, 10}$. The indices inside the vertices of the physical graph indicate the corresponding edges of the logical problem graph.
The local fields of the Ising Hamiltonian are presented by interactions with a logical ancilla spin with index $0$. The red triangles (blue squares) are three-body (four-body) parity constraints. The logical line of vertex with index $3$ is marked in green. 
}
\label{fig:parityTransform4q}
\end{figure}
\section{Bundles of entanglement in the parity embedding}\label{app:bundlesEntanglement}
\begin{figure*}[th!]
\centering
    \includegraphics[width=1.8\columnwidth]{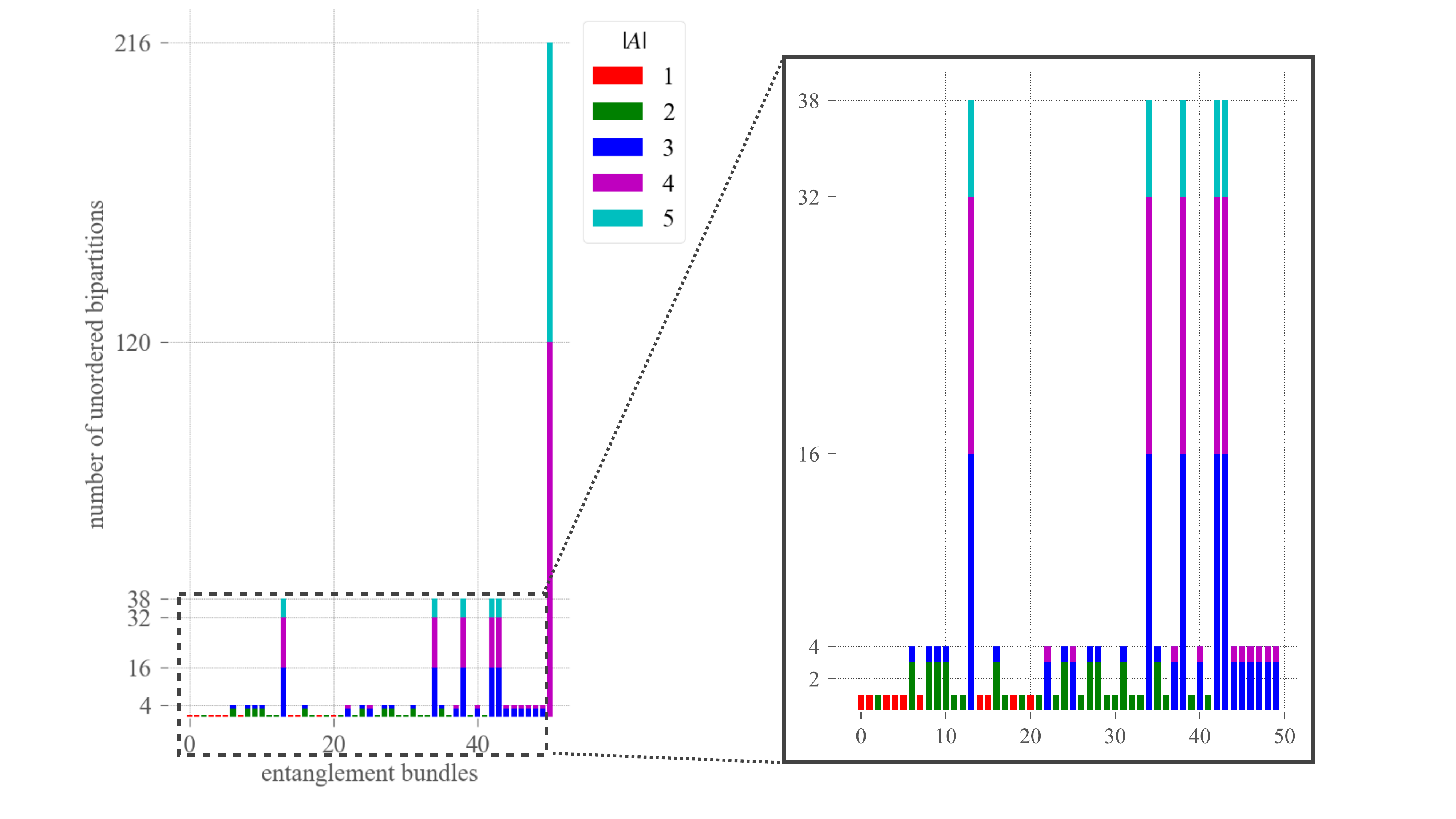}
\caption{\emph{Bundles of von Neumann entanglement entropy}.
On the left, all numerically detected entanglement entropy bundles for all $511$ bipartitions for the first numerical example in Sec.~\ref{sec:numVis} are represented with bars. The bundles are enumerated from $0$ to $50$ along the $x$ axis and sorted by their maximal entanglement entropy value in ascending order. The $y$ axis shows the number of bipartitions of each bundle. The different colors of the bars indicate the number of bipartitions of varying size $\abs{A}$. On the right, a zoomed-in view of the first $50$ bundles from the left figure is shown.
}
\label{fig:groupsOfFinalEntanglementValues}
\end{figure*}
Here, for the sake of completeness, we present the results of the entanglement entropy bundle analysis across all $511$ bipartitions of the numerical example in Fig.~\ref{fig:GroupEntanglementDynamics}~{(b)}.
We used \texttt{dbscan} to find all bundles of the final entanglement values ${S_{A}(t_f)}$ of all existing $511$ unordered bipartitions $\set{A,A^c}$. Because of numerical rounding errors and their propagation, we used a radius of $10^{-4}$ in the DBSCAN algorithm. The bundles of all $511$ unordered bipartitions are presented in Fig.~\ref{fig:groupsOfFinalEntanglementValues}, where the determined groups of the final entanglement entropy values are enumerated from the lowest to the highest entropy value. Bipartitions with equal size ${\abs{A}}$ are presented in the same color. We identified $26$ entropy values consisting of $20$ bundles with four bipartitions, five bundles with $38$ bipartitions and one bundle containing $216$ bipartitions. The remaining $25$ values correspond to bundles containing a single bipartition of size ${\abs{A}=1}$ (red boxes), or of size ${\abs{A}=2}$ (green boxes) if the subsystem $A$ is a set of two disjoint edges in the logical graph. In our example, where we consider a complete graph with five vertices, the number of pairs of disjoint edges is $15$ that we can count using the green boxes of height $1$.
We observe that there are two different types of bundles of size four. One bundle, which corresponds to type in Example~\ref{example:threeCycles}, consists of three bipartitions of size ${\abs{A}=2}$ and one bipartition of size ${\abs{A}=3}$. The other type consists of three bipartitions of size ${\abs{A}=3}$ and one of size ${\abs{A}=4}$, and corresponds to bundles of the form that are presented in Example~\ref{lemma:SameDisconnectedSubgraphs}. For both types of bundles we count 10 bundles.
The five bundles with $38$ bipartitions can be identified via logical lines and have been examined in Sec.~\ref{sec:embeddingExamples} in more detail. The bundle with the most bipartitions ($216$) corresponds to a type of bundle investigated in Example~\ref{lemma:highestEntanglementGroup}.

\section{Entanglement Spectrum}\label{app:entanglementSpectrum}
To underline our theoretical results in Theorem~\ref{thm:suff_cond_equivalence_spectrum_subspace_main}, we illustrate in Fig.~\ref{fig:entanglementSpectraLargestClass} the entanglement spectra of all $216$ bipartitions of the largest bipartition class for all three annealing processes presented in Fig.~\ref{fig:GroupEntanglementDynamics}. We observe in Fig.~\ref{fig:entanglementSpectraLargestClass}~{(b)}, where the annealing process is not adiabatic and restricted to the parity subspace $\Pi$ at the final annealing time, the final entanglement spectra of all bipartitions is the same.
\begin{figure*}[t]
\centering
\includegraphics[width=2\columnwidth]{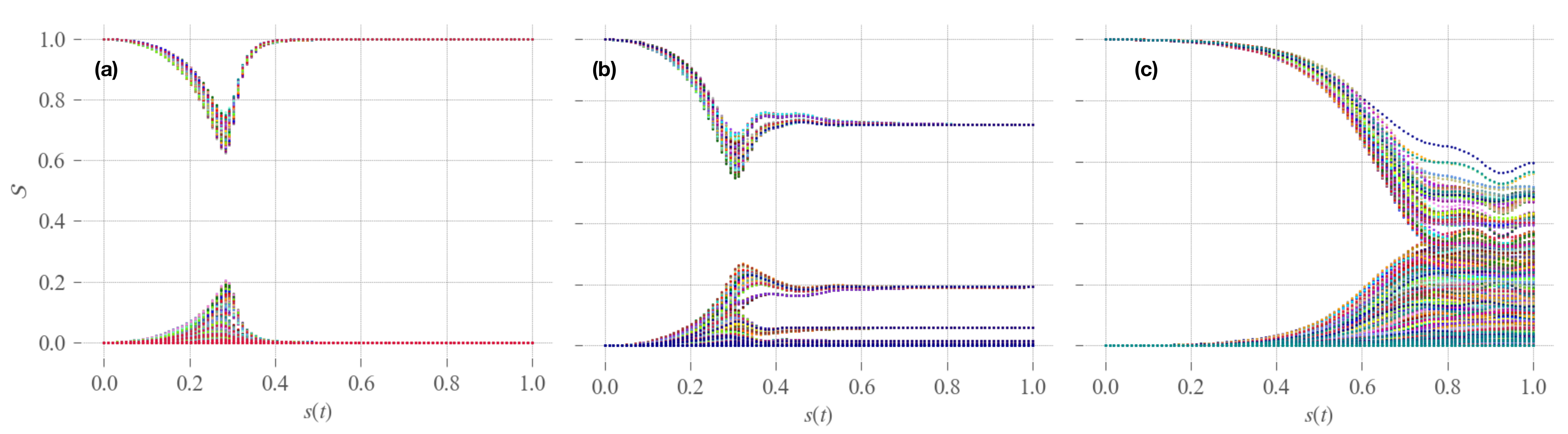}
\caption{\emph{Entanglement spectrum dynamics} for all bipartitions of the largest bipartition class given in the example at the beginning of Sec.~\ref{sec:numVis}, which contains all bipartitions of the highest entropy bundle in Fig.~\ref{fig:GroupEntanglementDynamics}~\textbf{(b)}. We present the entanglement spectra for all elements of this class for each of the different annealing process Fig.~\ref{fig:entanglementSpectraLargestClass}~\textbf{(a)} (adiabatic and restricted to $\Pi$), Fig.~\ref{fig:entanglementSpectraLargestClass}~\textbf{(b)} (nonadiabatic and restricted to $\Pi$) and Fig.~\ref{fig:entanglementSpectraLargestClass}~\textbf{(c)} (nonadiabatic and not restricted to the $\Pi$).
}
\label{fig:entanglementSpectraLargestClass}
\end{figure*}
\section{Dependence of bundling on the annealing implementation}\label{app:bundlingOnsetDependency}
Here, we examine the dependence of the bundling dynamics on implementation parameters of the annealing process, including the penalty strength $C$ of the embedding constraints and the selected annealing schedule. In addition, we investigate the impact of thermal noise on the bundling dynamics.
In the following, we consider an example with six parity embedded physical qubits. The Hamiltonian for the annealing processes is given by
$$
H(t) = A(t) H_{0} + B(t) \left(H_{J} - H_{C}\right),
$$
where we call the first term $A(t)H_{0}$ the \emph{driver Hamiltonian} and the second term with $B(t)\left(H_{J} - H_{C}\right)$ the \emph{problem Hamiltonian}.
The first part of the problem Hamiltonian is given by
$$
H_{J} = \sum^{5}_{m = 0} \tilde{J}_{m} \pqbsz{m}
$$
and the second part represents the constraint term
\begin{align*}
    H_{C} &= \\ &C \cdotp \left(\pqbsz{0}\pqbsz{1}\pqbsz{3} + \pqbsz{3}\pqbsz{4}\pqbsz{5} + \pqbsz{1}\pqbsz{2}\pqbsz{3}\pqbsz{4}\right).
\end{align*}
For the examples in Fig.~\ref{fig:onsetOfBundlingConstraintsPenaltystrength6qbParityExample}, Fig.~\ref{fig:DiffScheduleOnsetBundling} and Fig.\ref{fig:thermalNoise6qpParityExample}, we used the same logical problem, which, in the parity embedding, is characterized by the local fields
\begin{equation}
\label{eq:JprobExample}
    \tilde{J}=(0.96, 0.38, -0.56, -0.39, -0.54, 0.47)^{T}.
\end{equation}
\subsection{Varying penalty strengths for embedding constraints}\label{app:diffPenalties}
We investigate the dependence of the bundling dynamics on the penalty strength $C$ by simulating for each of $500$ random problem instances an annealing process for three different values of $C$.
As the annealing schedule, we again employ the linear schedule introduced in Sec.~\ref{sec:QAbackground} of the main text, defined by ${s(t)= t/t_f}$ and ${A(t) = 1- s(t)}$ and ${B(t) =s(t)}$.

For this class of annealing protocols, entanglement is known to develop in the regime where the driver and problem Hamiltonians compete and to peak shortly before the problem Hamiltonian takes over the dynamics~\cite{LantingEntanglementInQA}.
Only thereafter can the constraint term efficiently confine the system to $\QS\SSP$, giving rise to the emergence of bundling. Consequently, the entanglement maximum ${S_{A}(s^A_m) = \max_{s}S_{A}(s)}$ constitutes a natural reference point for identifying the onset of bundling.
Hence, we take the mean position ${\left<s^A_{\rm{m}}\right>_{A}}$ over the values $s^A_{\rm{m}}$ with maximal entanglement entropy
as a reference point for the onset of bundling and write for short $\left<s_{\rm{m}}\right>$.
The results are presented in Fig.~\ref{fig:onsetOfBundlingConstraintsPenaltystrength6qbParityExample} and Fig.~\ref{fig:C_AverageMaxEntropyInstances}.
While Fig.~\ref{fig:onsetOfBundlingConstraintsPenaltystrength6qbParityExample} demonstrates the shift of the bundling onset of the Hamiltonian using the local fields~\eqref{eq:JprobExample} toward earlier times for larger values of $C$, Fig.~\ref{fig:C_AverageMaxEntropyInstances} additionally shows  the difference between the onset times for various $C$ values across $500$ random instances with local fields $J_i$, which were randomly chosen from the interval $[-1, 1]$.
In Fig.~\ref{fig:onsetOfBundlingConstraintsPenaltystrength6qbParityExample}, we observe that the onset of bundling (indicated by the green dashed vertical line) shifts to earlier times as the penalty strength increases.
For Fig.~\ref{fig:C_AverageMaxEntropyInstances} we determined for each of the $500$ random instances the mean position ${\left<s_{\rm{m}}(C)\right>}$ of the maximal entanglement entropy for each of three different $C$ values and  calculated the differences between them.
As we can see, the consistently positive difference shows that increasing values of $C$ shift the onset of bundling toward earlier times.\\
A second noteworthy observation is that, although the $\left<s_{\rm{m}}\right>$ values of the individual random instances are distributed relatively uniformly over a broad range for each value of $C$, the corresponding shifts in $\mpme$ take only quantized values. Which of these discrete difference values arises for a given pair of $C$ values still appears to depend on the specific random instance.
This raises several further intriguing questions whose answers remain to be explored in future work.
\begin{figure}[t!]
\centering
\includegraphics[width=1.\columnwidth]{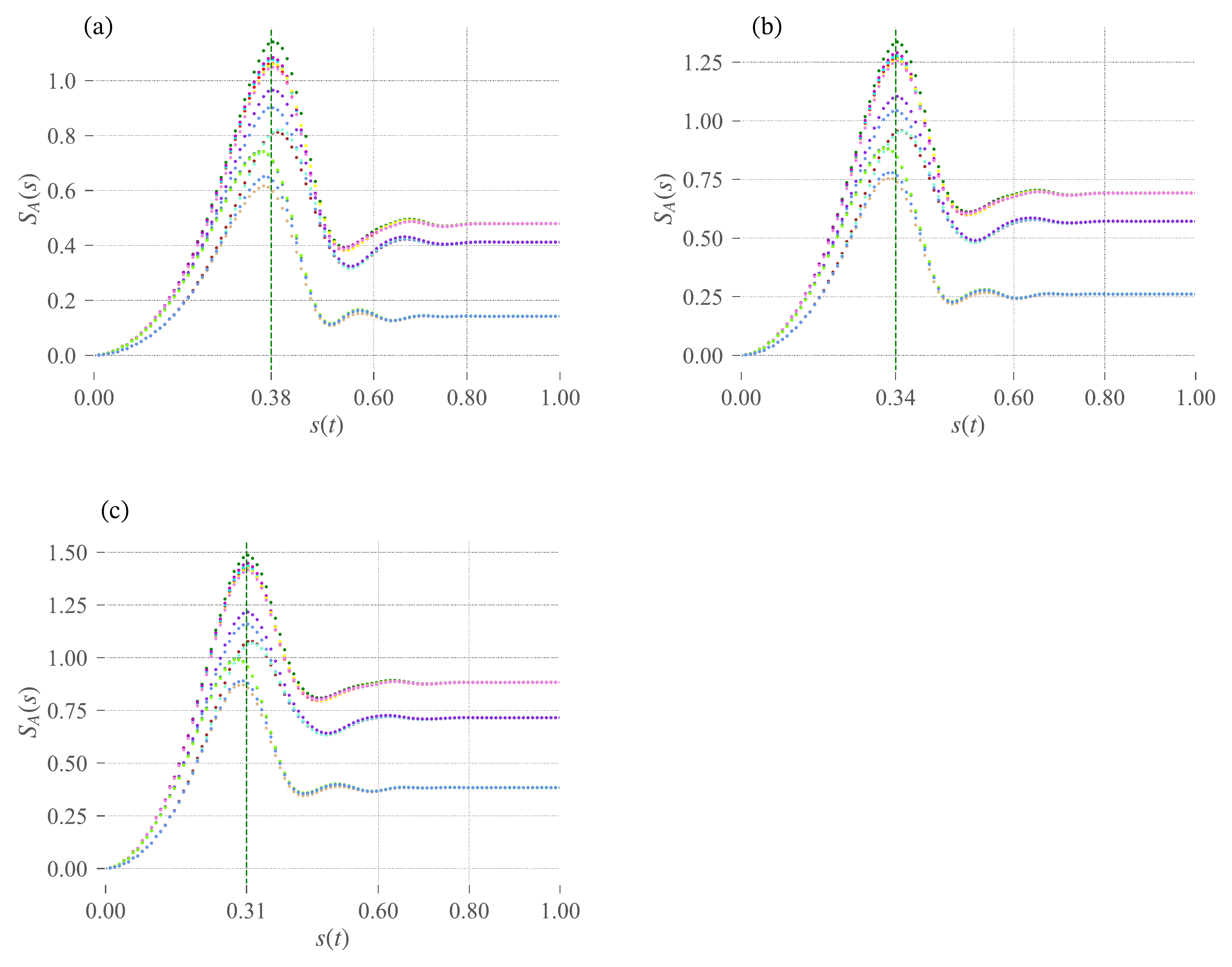}
\caption{\emph{Dependency of bundling dependency on constraint penalty strength.} Illustration of the von Neumann entropy of the bipartitions of three selected bundles of a six-qubit parity embedded problem with three different penalty strengths (a)~${C=3}$, (b)~${C=4}$ and (c)~${C=5}$. Each annealing processes is realized with a linear annealing schedule with ${t_f=50}$. The green dashed line marks the average time $\left<s_{\rm{m}}\right>$ of the entanglement entropy maxima, averaged over all bipartitions.}
\label{fig:onsetOfBundlingConstraintsPenaltystrength6qbParityExample}
\end{figure}
\begin{figure*}[t!]
\centering
\includegraphics[width=2.0\columnwidth]{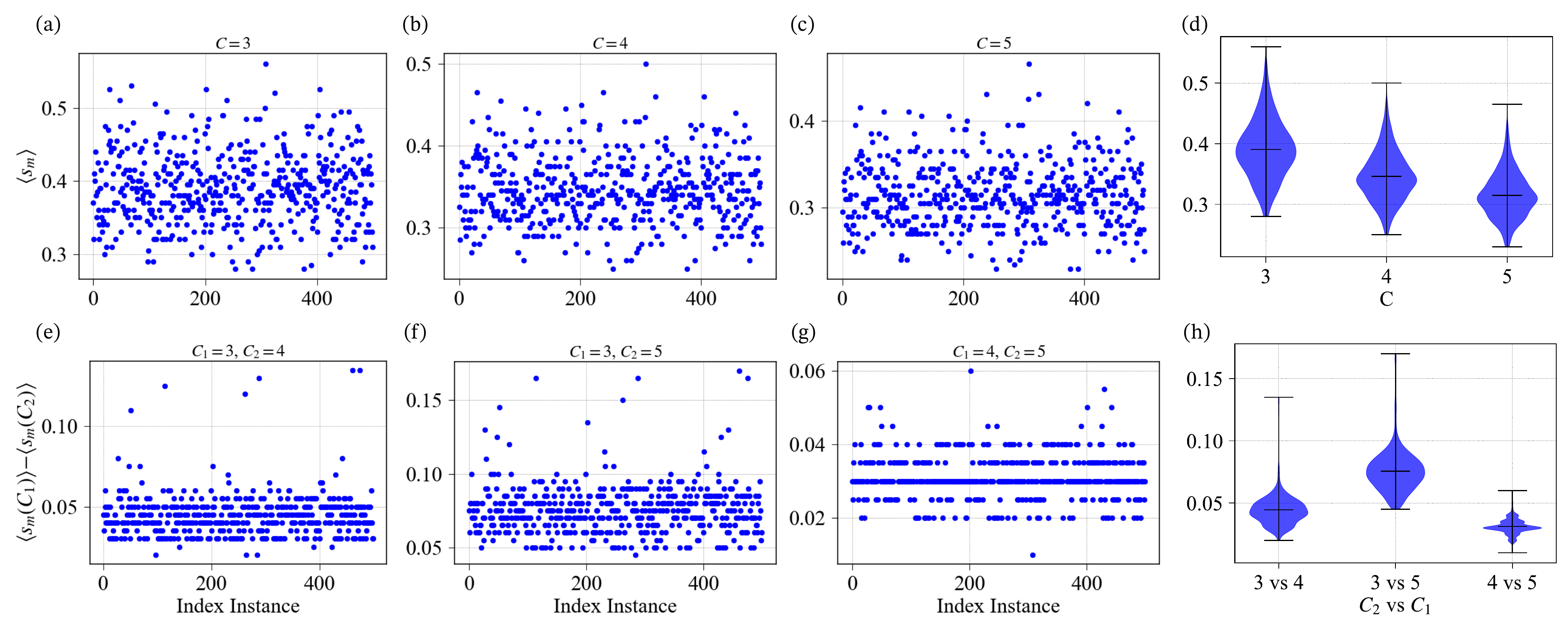}
\caption{\emph{Onset of bundling $\mpme$ for different penalty strengths $C$.} Illustration of the mean position $\mpme$ for each of the $500$ random instances is determined for the penalty strength (a)~${C=3}$, (b)~${C=4}$ and (c)~${C=5}$. (d)~Distribution of mean position values $\mpme$ for each of the three penalty strengths $C$. The lower row shows the differences in the mean values for two different $C$ values. Differences for the individual problem instances are shown for (e)~${C=3}$ versus ${C=4}$, (f)~${C=3}$ versus ${C=5}$, and (g)~${C=4}$ versus ${C=5}$. Panel (h)~compares the corresponding distributions. Each of the presented results uses the same set of $500$ random instances.
}
\label{fig:C_AverageMaxEntropyInstances}
\end{figure*}
\subsection{Different annealing schedules}\label{app:diffAnnealSchedules}
Similar to the investigations of the dependence of the mean position of the maximum entanglement $\mpme$ and the onset of the bundle on the penalty strength $C$, we analyze here the behavior of these entanglement characteristics for different annealing schedules. To this end, we consider different annealing schedules for a fixed ${C=4}$ by varying the function ${B(t) = s^{\alpha}(t)}$ through the parameter $\alpha$. The choice ${\alpha = 1}$ corresponds to the linear annealing schedule used for the results in Fig.~\ref{fig:onsetOfBundlingConstraintsPenaltystrength6qbParityExample}(b) and Fig.~\ref{fig:C_AverageMaxEntropyInstances}(b).

In Fig.~\ref{fig:DiffScheduleOnsetBundling}, we present the resulting entanglement dynamics for the same bipartitions of the same problem instance~\eqref{eq:JprobExample} as in Fig.~\ref{fig:onsetOfBundlingConstraintsPenaltystrength6qbParityExample}. We observe that increasing the value of $\alpha$ shifts the onset $\mpme$ to later time values. Compared with the variation across different $C$ values, a similar behavior is observed: when the constraint term $H_{C}$ of the Hamiltonian $H(t)$ reaches a certain relative weight with respect to the driver Hamiltonian at later times, both the onset of the bundling and the maximum entanglement are shifted to later times.\\
\begin{figure}[t!]
\centering
\includegraphics[width=1.\columnwidth]{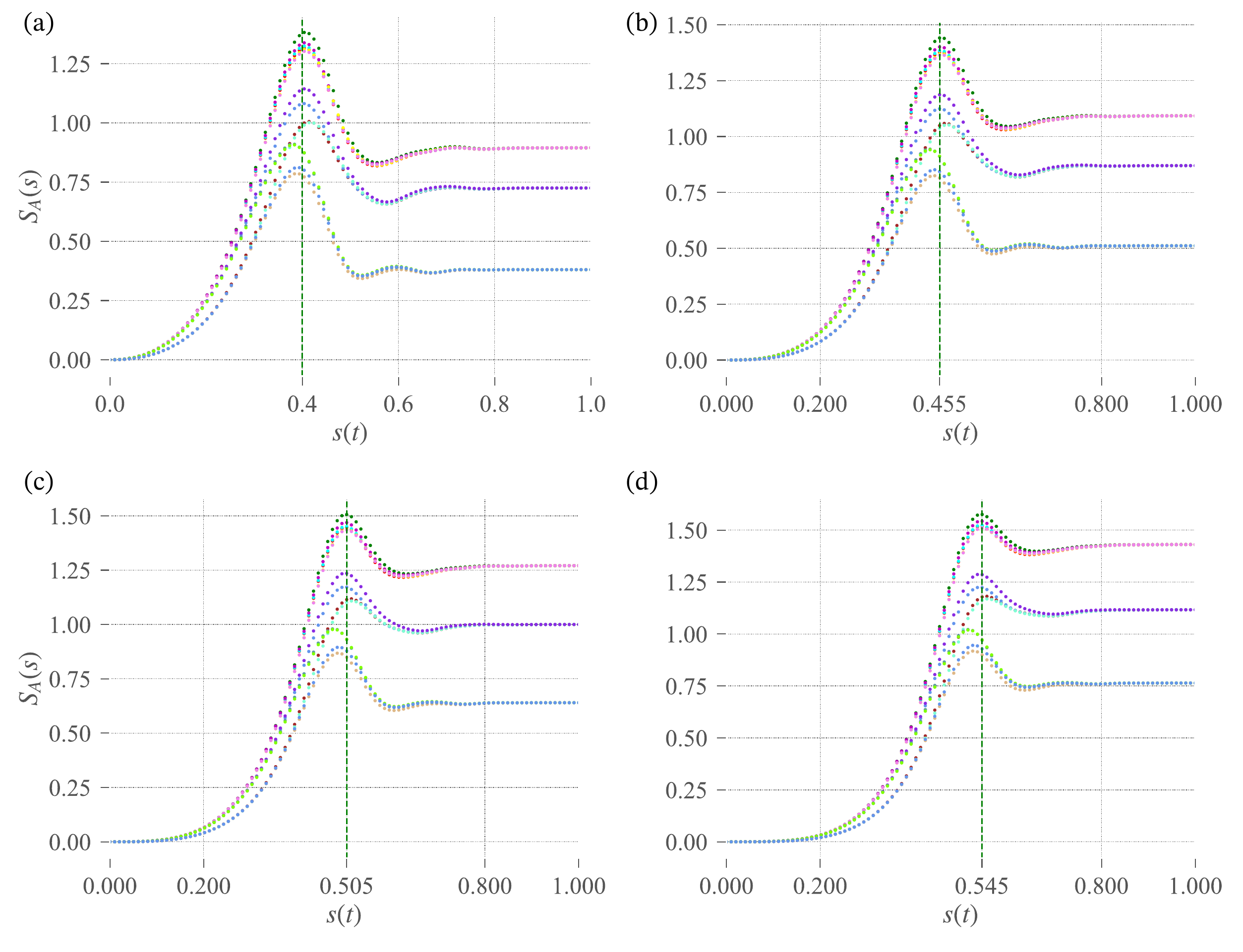}
\caption{\emph{Dependency of annealing schedule.} Entanglement dynamics for the same problem as in Fig.~\ref{fig:onsetOfBundlingConstraintsPenaltystrength6qbParityExample}~(b) with ${C=4}$. Each of the four plots shows on the $y$-axis the time dependent von Neumann Entanglement entropy and on the $x$-axis the normalized time scale ${s(t)=t/t_f}$.
As the parameter $\alpha$, we chose (a)~${\alpha=1.25}$, (b)~${\alpha=1.5}$, (c)~${\alpha=1.75}$ and (d)~${\alpha=2}$.
}
\label{fig:DiffScheduleOnsetBundling}
\end{figure}
For the $500$ random instances, we observe again that the differences between the $\mpme$ of two different $\alpha$ values cluster to a few distinct values [Fig.~\ref{fig:ScheduleAverageMaxEntropyInstances}(e)-(g)], although the individual $\mpme$ values of the random instances are broadly distributed across a certain time interval [Fig.~\ref{fig:ScheduleAverageMaxEntropyInstances}(a)-(c)].
In this case, the clustering of the $\mpme$ shifts is even more pronounce than that observed for different penalty strengths $C$.
The medians of these time ranges increase with increasing $\alpha$, while the widths of the ranges remain approximately constant [Fig.~\ref{fig:ScheduleAverageMaxEntropyInstances}(d)]. A similar behavior is observed for the ranges of the time differences [Fig.~\ref{fig:ScheduleAverageMaxEntropyInstances}(h)]: their widths remain nearly unchanged, whereas their medians exhibit a clear shift. Moreover, all time differences between a larger $\alpha$ value and a comparatively smaller one are positive.
\begin{figure*}[t!]
\centering
\includegraphics[width=2.0\columnwidth]{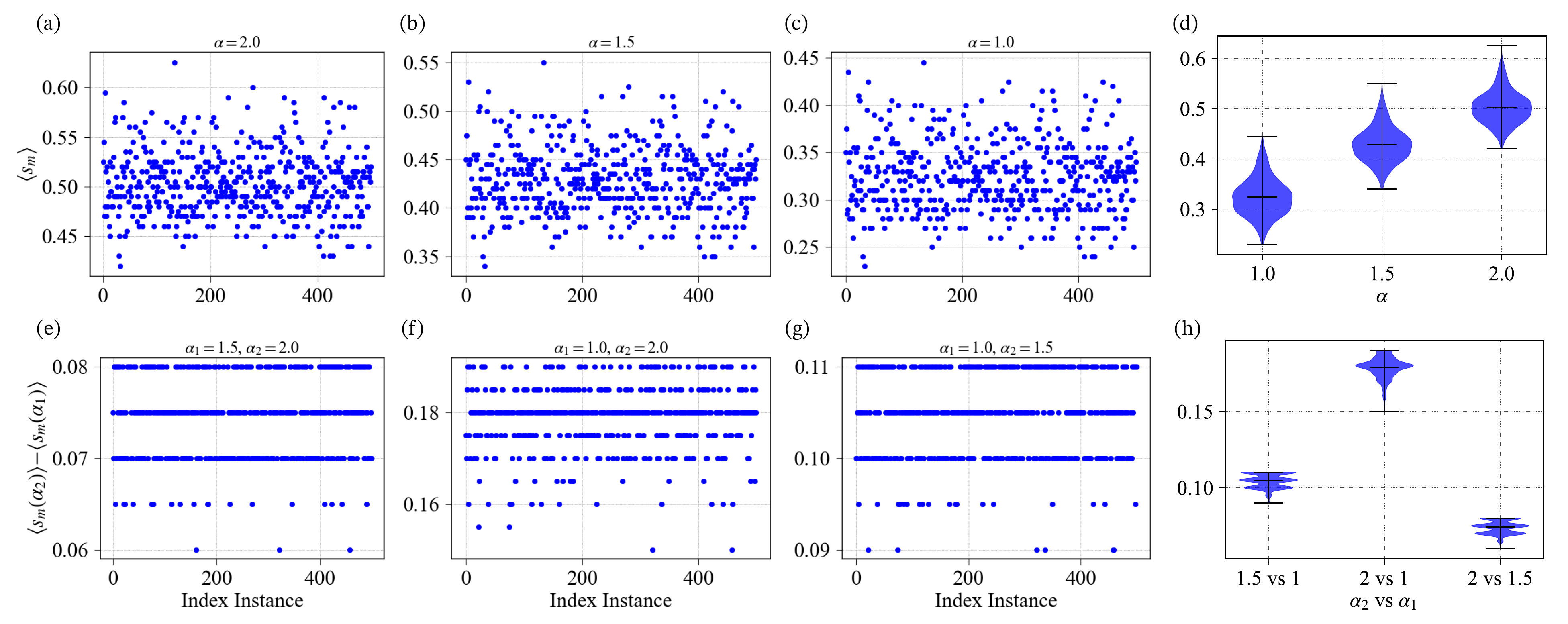}
\caption{\emph{Onset of bundling $\mpme$ for different annealing schedule parameters $\alpha$.} The mean position $\mpme$ for each of the $500$ random instances is determined for the schedule parameter (a)~${\alpha = 2.0}$, (b)~${\alpha = 1.5}$, (c)~${\alpha = 1.0}$. (d)~Distribution of mean position values $\mpme$ for each of the three schedule parameters $\alpha$. The lower row shows the differences for two different $\alpha$ values.
Differences for the individual problem instances are shown for (e)~${\alpha=1.5}$ versus ${\alpha=2.0}$, (f)~${\alpha=1.0}$ versus ${\alpha=2.0}$, and (g)~${\alpha=1.0}$ versus ${\alpha=1.5}$. Panel (h)~compares the corresponding distributions. Each of the presented results uses the same set of $500$ random instances.}
\label{fig:ScheduleAverageMaxEntropyInstances}
\end{figure*}
\subsection{Thermal Noise}\label{app:thermalNoise}
In this section, we investigate the effect of thermal noise on the entanglement bundles. To this end, we consider the entanglement dynamics for selected bipartitions of Example~\eqref{eq:JprobExample} under weak to moderate thermal noise in a linear annealing process with ${C=4}$. Thermal noise was simulated using the \texttt{mesolve} function of the \texttt{python} package \texttt{QuTiP} together with the \texttt{c\_ops} formalism for thermal dissipation. The results are shown in Fig.~\ref{fig:thermalNoise6qpParityExample}.
For thermal noise, we apply for each physical qubit the emission operator
$
    \sqrt{\gamma \cdotp (\delta_{th} + 1)} \sigma_{+}
$
and the absorption 
$
    \sqrt{\gamma \cdotp \delta_{th}} \sigma^{\dagger}_{+}
$,
where we use
$
    \delta^{-1}_{th} = e^{\frac{\omega}{T}}-1.
$
In this example, we used ${\omega = 1}$, ${T=0.05}$ and ${\gamma=0.001}$ for very low, ${\gamma=0.005}$ for low and ${\gamma=0.01}$ for medium noise.

As expected, coupling the system to a thermal bath leads to a progressive reduction of the total entanglement entropy within the system due to the loss of quantum information to the environment. We observe that the onset of the bundling of the entanglement still emerges around the point of maximal entropy. However, in the presence of thermal noise, the bundles gradually disperse into distinct individual values during the later stages of the annealing process. Moreover, the final width of the distribution around the mean bundle value increases with increasing noise strength.

\begin{figure}[t!]
\centering
\includegraphics[width=1.\columnwidth]{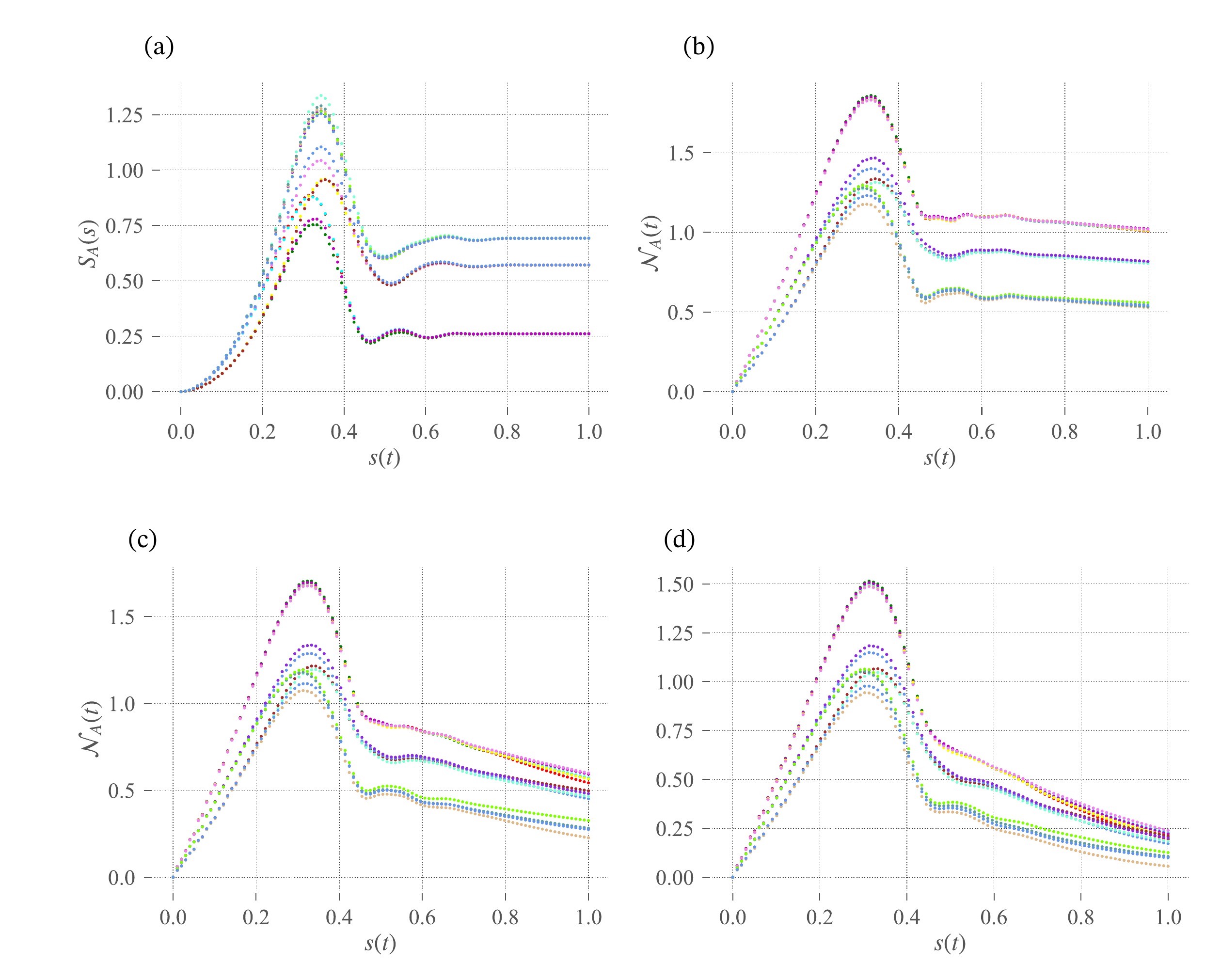}
\caption{\emph{Bundling with thermal noise.} Illustration of the entanglement entropy of an annealing process with linear schedule for a six-qubit parity embedding (a) without noise, (b) with very low (c) with low noise (d) with medium thermal noise. For the case without noise, we used the von Neumann entanglement entropy as a measure. For the three other cases with thermal noise we used the logarithmic negativity $\mathcal{N}_{A}$ as a measure for bipartite entanglement entropy.
}
\label{fig:thermalNoise6qpParityExample}
\end{figure}
\end{document}